	\def\grad{\mbox{grad}}												
\numberwithin{equation}{section}
\newtheorem{theorem}{Theorem}[section]
\newtheorem{lemma}[theorem]{Lemma}
\newtheorem{proposition}[theorem]{Proposition}
\newtheorem{corollary}[theorem]{Corollary}
\newtheorem{definition}[theorem]{Definition}
\newtheorem{remark}[theorem]{Remark}
\newcommand{\bea}{\begin{eqnarray}}
\newcommand{\eea}{\end{eqnarray}}
\def\beaa{\begin{eqnarray*}}
\def\eeaa{\end{eqnarray*}}
\def\ba{\begin{array}}
\def\ea{\end{array}}
\def\be#1{\begin{equation} \label{#1}}
\def \eeq{\end{equation}}
\newcommand{\lab}{\lababel}
\def\a{{\alpha}}
\def\b{{\beta}}
\def\be{{\beta}}
\def\ga{\gamma}
\def\Ga{\Gamma}
\def\de{\delta}
\def\la{\lambda}
\def\Si{\Sigma}
\def\om{\omega}
\def\varo{{\varrho}}
\def\th{\theta}
\def\ze{\zeta}
\def\nab{\nabla}
\def\pr{{\partial}}
\def\les{\lesssim}
\def\c{\cdot}
\def\MM{{\mathcal M}}
\def\HH{{\mathcal H}}
\def\OO{{\mathcal O}}
\def\SS{{\mathcal S}}
\def\Div{\mbox{Div}\,}
\def\d{{\mathfrak{d}}}
\def\DD{{\mathcal D}}
\def\RR{{\mathcal R}}
\def\HH{{\mathcal H}}
\def\T{{\bf T}}
\def\f12{{\frac 1 2}}
\def\dual{{\,^*}}
\def\div{\mathrm{div}\,}
\def\curl{\mathrm{curl}\,}
\def\Hb{\,\underline{H}}
\def\hbub{{\underline{H}_{\underline{u}}}}
\def\Xh{\,^{(h)}X}
\def\trch{{\mathrm{tr}}\, \chi}
\def\chih{{\widehat \chi}}
\def\chib{{\underline \chi}}
\def\chibh{{\underline{\chih}}}
\def\etab{{\underline \eta}}
\def\omb{{\underline{\om}}}
\def\bb{{\underline{\b}}}
\def\aa{\protect\underline{\a}}
\def\xib{{\underline \xi}}
\def\Xh{\widehat{X}}
\def\ub{\underline{u}}
\def\varoc{\check{\varrho}}
\def\tr{{\mathrm{tr}}}
\def\trchb{{\tr \,\chib}}
\def\atrch{\, ^{(a)}\trch}
\def\atrchb{\, ^{(a)}\trchb}
\def\wtb{\widecheck{\trchb}}
\def\hot{\widehat{\otimes}}
\def\rhod{\,\dual\hspace{-2pt}\rho}
\def\err{{\mbox{err}}}
\def\f12{\frac 1 2}
\def\lab{\label}
\def\bsplit{\begin{split}}
\def\ab{\underline{\a}}
\def\sk{\mathfrak{s}}
\DeclareFontFamily{U}{mathx}{\hyphenchar\font45}
\DeclareFontShape{U}{mathx}{m}{n}{
      <5> <6> <7> <8> <9> <10>
      <10.95> <12> <14.4> <17.28> <20.74> <24.88>
      mathx10
      }{}
\DeclareSymbolFont{mathx}{U}{mathx}{m}{n}
\DeclareMathAccent{\widecheck}{0}{mathx}{"71}
\def\Gac{\widecheck{\Ga}}
\def\rhoc{\widecheck{\rho}}
\def\trchc{\widecheck{\tr\chi}}
\def\trchbc{\widecheck{\tr\chib}}
\def\pa{\partial}
\def\Kh{\,^{(h)}K}
\DeclareFontFamily{U}{mathx}{\hyphenchar\font45}
\DeclareFontShape{U}{mathx}{m}{n}{
      <5> <6> <7> <8> <9> <10>
      <10.95> <12> <14.4> <17.28> <20.74> <24.88>
      mathx10
      }{}
\DeclareSymbolFont{mathx}{U}{mathx}{m}{n}
\DeclareMathAccent{\widecheck}{0}{mathx}{"71}
\def\SS{\mathcal{S}}
\def\NI{\noindent}
 \def\ub{\underline{u}}
 \def\eg{\, ^{(g)}\hspace{-1pt} e}
\def\xibg{\, ^{(g)}\hspace{-1pt} \xib}
\def\ombg{\, ^{(g)}\hspace{-1pt} \omb}
\def\etag{\, ^{(g)}\hspace{-1pt} \eta}
\def\etabg{\, ^{(g)}\hspace{-1pt} \etab}
\def\zetag{\, ^{(g)}\hspace{-1pt} \zeta}
\def\omg{\, ^{(g)}\hspace{-1pt} \om}
\def\trchg{\, ^{(g)}\hspace{-1pt} \trch}
\def\zeg{\, ^{(g)\hspace{-2pt} }\ze}
\def\nabg{\, ^{(g)}\nab}
\def\Gag{\, ^{(g)}\Ga}
\def\Rg{\, ^{(g)}R }
\def\RR{\mathcal{R}}
\def\OO{\mathcal{O}}
\def\SS{\mathcal{S}}
\def\vol{\mathrm{vol}}
\def\o{\mathcal{O}}
\def\T{\, ^{(T)}}
\def\et{\, ^{(T)}\hspace{-1pt} e}
\begin{document}

\title{Formation of Trapped Surfaces in Geodesic Foliation}
\author{Xuantao Chen and Sergiu Klainerman}

\date{}
\maketitle
\abstract{We revisit the  classical  results    of the formation of trapped  surfaces   for the Einstein vacuum equation    relying on the geodesic foliation,  rather than the double  null foliation used in all  previous results,  starting with  the  seminal  work of  Christodoulou   \cite{Chr1} and continued  in  \cite{KRodn}, \cite{An}, \cite{AnLuk}, \cite{KLR},  \cite{An1}.   The main advantage of the  method  is that it only  requires   information
 on  the incoming  curvature    along the incoming initial null hypersurface. The result is based on a  version of the non-integrable PT  frame  introduced in  \cite{KS:Kerr} and \cite{GKS}, associated to the geodesic foliation.
}


\section{Introduction}

All known results  on the formation of trapped  surfaces   for the Einstein vacuum equation    starting with  the  seminal  work of  Christodoulou   \cite{Chr1} and continued  in  \cite{KRodn},  \cite{AnLuk}, \cite{KLR},  \cite{An1}, \cite{An}, make use of an adapted double null foliation. The goal  of this paper  is  to show that   similar results  can be derived using instead  a simple   geodesic foliation  and  an associated, non-integrable,  PT frame  first introduced in \cite{KS:Kerr}, \cite{GKS}.  
 The main advantage of the  method  is that it only  requires   information
 on  the incoming  curvature    along the incoming initial null hypersurface. The result is based on a  version of the non-integrable PT  foliation introduced in  \cite{KS:Kerr} and \cite{GKS} and uses  the $(a, \de) $  version of the short pulse method introduced in  \cite{AnLuk}.

\subsection{Set-Up}

Consider a spacetime  $\MM=\MM(\delta, a;\tau^*)$ with past null boundaries $\Hb_0\cup H_{-1}$  and future boundaries  $ \Hb_\de\cup\Si_{\tau^*}$, where  $\Hb_\de$ is null incoming and  $\Si_{\tau}$ is a spacelike level hypersurface of a time function $\tau$ to be specified (See Figure 1).  Here $\delta$ is a small constant and, following \cite{AnLuk},  we introduce another  large constant $a$ which  satisfies $a\delta\ll 1$. The spacetime $\MM$ is  foliated  by the level surfaces of an ingoing  optical function $\ub$ such that $\ub=0$ on $ \Hb_0$ and    $\ub=\de$ on $\Hb_\de$.

\begin{figure}
\begin{tikzpicture}[scale=0.8]
\draw [white](3,-1)-- node[midway, sloped, below,black]{$H_{-1}$}(4,0);
\draw [white](0.1,1.9)-- node[midway,above,black]{$\Sigma_{\tau^*}$}(2,2);
\draw [white](2,2)--node [midway,above,black] {$\Hb_\delta$}(4,0);
\draw [white](0,2)--node [midway, below,black] {$\Hb_0$}(3,-1);
\draw [dashed] (0, 2)--(0, -4);
\draw [dashed] (0, -4)--(4,0);
\draw [dashed] (0,2)--(0.1,1.9);
\draw [dashed] (0,-4)--(4,0);
\draw [very thick] (0.1,1.9)--(3,-1)--(4,0)--(2,2)--(0.1,1.9);
\fill[yellow] (0.1,1.9)--(3,-1)--(4,0)--(2,2)--(0.1,1.9);
\end{tikzpicture}
\centering
    \caption{The spacetime $\MM$}
    \label{fig:enter-label}
\end{figure}

\vspace{2ex}

\NI {\bf Geodesic foliation on $H_{-1}$.}  The  restriction of   $\ub$  to   $H_{-1}$     coincides with the affine parameter of  a null  geodesic  generator  of  $H_{-1}$,  denoted by $e_4$, normalized on the sphere $S_{-1,0}:=H_{-1}\cap \Hb_0$. We let $\ub=0$ on $S_{-1,0}$. This gives a geodesic foliation on $H_{-1}$, and the level surfaces of $\ub$ are $2$-spheres.   We then have:
\begin{equation*}
    \om=\xi=0,\quad \etab=-\zeta.
\end{equation*}
We can also derive the bounds of other   Ricci coefficients,   see Proposition \ref{Prop:InitailData-H_{-1}}.

\vspace{2ex}

 \NI {\bf Geodesic foliation on $\MM$}.    Using the  incoming optical function  $\ub$  we define\footnote{Recall that, given a function $f$, $(\grad f)^\mu:=g^{\mu\nu} \pr_\nu f$.}   $\eg_3:=-2\grad \ub$,  such  that $\eg_3$ is geodesic. We also define $s$  to  be   the affine parameter of $e_3$, i.e.    $\eg_3(s)={1}$  with  $s=-1$ on $H_{-1}$.  We then  define $\eg_4$ to be the null companion of $\eg_3$ orthogonal to the sphere $S_{\ub,s}$, defined as the intersection of level hypersurfaces of $\ub$ as $s$, and denote  by      $\SS$ the horizontal structure      perpendicular  on $\eg_3, \eg_4$, tangent to 
 the spheres  $S_{\ub,s}$.    We also denote   $\nabg, \nabg_3, \nabg_4$   the corresponding  horizontal derivative  operators (see Section \ref{section:horizontalStr}) and by    
  $(\eg_a)_{a=1,2} $ an arbitrary  orthonormal frame  of $\SS$.
  Note that we have
 \bea
 \eg_4(\ub)=g(\grad\ub,\eg_4)=-\frac 1 2 g\big(\eg_3, \eg_4\big)=1.
 \eea
 In particular, restricted  to  $H_{-1}$,  $\eg_4$ coincides with the $e_4$ defined  above on $H_{-1}$. 
 \begin{remark}
 The geodesic foliation and   its associated geodesic  horizontal    null   structure  defined above are a simple  example of a   principal geodesic (PG)  structure,  as introduced in  \cite{KS}.   We   will  thus refer to      $\eg_3, \eg_4, \eg_1,\eg_2$  as a PG frame.
 \end{remark}
 We  denote by $\Gag$  the corresponding  Ricci coefficients  and by  $\Rg$   the   null  curvature  components   with respect to the geodesic frame.  
    Thus (see\footnote{The relations \eqref{eq:intro-geodesic} follow easily  from $D_{\eg_3}\eg_3=0$ and  by applying  the  commutation relations  (see  formula 2.2.3  in \cite{GKS} for an easy derivation)  below to the  functions $\ub, s$
      \begin{equation*}
    \begin{split}
    [\eg_3,\eg_a]&=\, ^{(g)}\xib_a\eg_4+(\etag-\zetag)_a\eg_3-\, ^{(g)}\chib_{ab}\eg_b,\\
    [\eg_4,\eg_a]&=\, ^{(g)}\xi_a \eg_3+(^{(g)}\etab+\zetag)_a \eg_4-\, ^{(g)}\chi_{ab}\eg_b,\\
    [\eg_4,\eg_3]&=2(\, ^{(g)}\etab-\etag)_a \eg_a+2\omg \eg_3-2\, ^{(g)}\omb \eg_4.
    \end{split}
\end{equation*} } e.g. \cite{KS:Kerr}),
\bea
\lab{eq:intro-geodesic}
    \, ^{(g)}\omb=\, ^{(g)}\xib=0, \quad \etag=\zetag=-\, ^{(g)}\etab, \quad  \eg_3\big(\eg_4(s)\big)=-2\omg.
\eea

We associate  to the geodesic frame a system of angular  coordinates $\th^a$, $a=1,2$  as follows:
\begin{itemize}
\item On $H_{-1}$   we set $\eg_4(\th^a)=0 $  with $\th^a$   specified on $S_{0,-1}:=\Hb_0\cap H_{-1}$;
\item Using the   values of $\th^a$   on $H_{-1}$   we extend   them     to        $\MM$   by  $\eg_3(\th^a)=0 $.
\end{itemize}
We define  the  time\footnote{We show later in Section \ref{subsect:Integrating region}  that $\tau$  is indeed a time function.}  function $\tau:=\frac 1{10} a\ub+s$.

\vspace{2ex}

  \NI {\bf PT frame}. In the geodesic frame, each non-vanishing Ricci coefficient satisfies a transport equation along the integral curve of $\eg_3$. Some of  these equations, however,    contain transversal derivatives,  leading to a loss of derivatives. We deal with the issue by considering another     frame $\{\, ^{(T)}e_3,\T e_4,\T e_a\}$  which 
   verifies\footnote{{The existence of such a gauge    can be easily  justified in view of  the   transformation formula \eqref{eq:transformation-eta}.}}
  \begin{equation}
\lab{eq:eta=0}
\T e_3=\eg_3, \qquad     \T \eta=0.
\end{equation}
The remarkable feature of the frame, called PT frame in \cite{KS:Kerr},   is that    the loss of derivative issue   disappears once we set up this  gauge condition, see Proposition \ref{Prop:transport.e_3}. This positive  feature   is however  compensated  by a negative one, that is the fact that the horizontal structure 
 associated to the null pair $(\et_3, \et_4)$ is not integrable,  see Section \ref{section:horizontalStr} and the more detailed discussion in  Chapter  2 of  \cite{GKS}. This  problem  can however be resolved by  relying on both frames, the  non-integrable  PT  frame    to deal with    the $e_3$-transport equations and  the integrable PG frame for   dealing with elliptic and Sobolev type  estimates.
\begin{remark} In what follows, as  there is no danger of confusion,  we drop the prefix  $^{(T)}$  for the PT foliation. We thus  denote  $\et_3 =e_3, \et_4=e_4$, by $\HH$ the  horizontal structure perpendicular to $e_3, e_4$  and by $\nab, \nab_3, \nab_4$   the corresponding derivative operators.
 We  denote by $\Ga=\big\{\trch, \trchb, \atrch, \atrchb, \chih, \chibh$, $ \eta, \etab, \ze, \om, \omb, \xi, \xib\big\}$    the   set   of all PT-Ricci coefficients. 
\end{remark}

 \begin{remark}   The PT frame   we  work with
 coincides with the  geodesic frame on $H_{-1}$ and verifies,   see Definition \ref{def:PT-frame} and 
 Proposition \ref{Prop:PTframe-properties}, 
\bea
\lab{eq:PTframe-property1-intro}
 \omb=\, \xib=0,\qquad  \etab=- \zeta, \qquad  \atrchb=0.
\eea 
\end{remark}
\NI We  introduce   the  renormalized   quantity\footnote{In contrast, since $\trch$ presents a worse behavior similar to $1/|s|$, one does not need to renormalize it by subtracting its Minkowskian value.}
\bea
\trchbc:=\trchb+\frac{2}{|s|}
\eea
and denote by $\Gac$  the  set  of non-vanishing Ricci coefficients  
\beaa
\Gac=\Big\{ \trch, \trchbc, \atrch,  \chih, \chibh,  \ze, \om,  \xi \Big\}. 
\eeaa

\subsection{ Initial conditions}

\NI {\bf  Initial Data on  $\Hb_0$.}  Following   the  results  of   \cite{Chr1},  \cite{KRodn},  \cite{AnLuk}, \cite{KLR},    we start by assuming that the incoming data  on $\Hb_0$  is  Minkowskian\footnote{One can also study other type of incoming data.  In \cite{Li} and \cite{An}, the incoming data  corresponds  to  Christodoulou's naked singularity  solution in \cite{Chr-naked}.}.  We note 
 however  that  {we can significantly relax this assumption by only requiring information on the incoming curvature. Indeed, unlike the case of   the double null foliation used   in these above mentioned  works, all Ricci coefficients  in the    PT frame   can be determined by integration   along the $e_3$ direction. The  incoming data  on $\Hb_0$ is only used    in the derivation of  the curvature components  by energy estimates.

\medskip 

\NI {\bf  Initial Data on  $H_{-1} $.}     Our data  on $H_{-1}$  verifies  the An-Luk   \cite{AnLuk}  short pulse  assumption
\begin{equation}\lab{eq:chih_0-upperbound}
    \sum_{i\leq N_0,j\leq 1}||(\delta\nab_4)^j\nab^i\chih_0||_{L^\infty_{\ub}L^2(S_{-1,\ub})}\leq C_0\, a^\frac 12, \quad N_0\geq 9,
\end{equation}
as well as  
\begin{equation}\lab{eq:chih_0-lowerbound}
    \inf_{\theta}\int_0^\delta |\chih_0(\ub,\th)|^2 d\ub\geq \delta a.
\end{equation}
\begin{remark}
    Using the argument in \cite{KLR} (see also \cite{AnHan}), one can relax \eqref{eq:chih_0-lowerbound} by replacing the inf over $\theta$ by sup. See Remark \ref{rmk:aniso-Ch8}.
\end{remark}

\begin{remark}  
\label{remark:PG=PT-onH_0}
 Note that the $S$-foliation on    $H_{-1}$ is that induced by the geodesic foliation and that     both    the PT  and geodesic frames  discussed  above coincide with double null  frame  on $H_{-1}$ used in \cite{AnLuk} and all the  other above mentioned works. We  point out that in \cite{AnLuk}  the assumption is weaker as there is no requirement on the $\nab_4$ derivative of $\chih_0$ in \eqref{eq:chih_0-upperbound}.  This is achieved by a renormalization of curvature components such that the contribution from $\nab_4\chih_0$ completely decouples from the system.  This can, in principle, be also achieved in our framework but we do not pursue this here.
\end{remark}

\subsection{Main result}
Here is a short version of our main result.\begin{theorem}
\label{theorem:main}
    Consider the characteristic initial value problem described above.  If \eqref{eq:chih_0-upperbound} holds, then the spacetime can be extended to $\MM(\delta,a;-\frac 18 a\delta)$, together with its incoming geodesic foliation. Moreover, if \eqref{eq:chih_0-lowerbound} also holds, then $S_{\delta,-\frac 14 a\delta}$ is a trapped surface\footnote{Note that $\tau(\delta,-\frac 14 a\delta)=\frac{1}{10} a\delta-\frac 14 a\delta<-\frac 18 a\delta$, so $S_{\delta,-\frac 14 a\delta}$ indeed lies in  $\MM(\delta,a;-\frac 18 a\delta)$.}.
\end{theorem}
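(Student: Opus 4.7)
The plan is a bootstrap/continuity argument: one fixes $(a,\delta)$-weighted norm bounds on $\Gac$ and on the null curvature components in the short pulse hierarchy of \cite{AnLuk}, and lets $\tau^*$ denote the largest level up to which $\MM(\delta,a;\tau^*)$, together with its incoming geodesic and PT foliations, exists and satisfies these bounds. The goal is to strictly improve the bounds, which extends the solution past $\tau^*$ and drives $\tau^*$ to $-\tfrac{1}{8}a\delta$. Unlike in the double null approach, every Ricci coefficient in $\Gac$ is propagated by $e_3$-transport starting from $H_{-1}$; the data on $\Hb_0$ enter only through the Minkowskian values of the curvature and of the outgoing expansion $\trch$, used in the energy identities and in the final trapped-surface step.

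\textbf{Transport and curvature estimates.} Each component of $\Gac$ satisfies an $e_3$-transport equation; in the PT gauge $\omb=\xib=0$, $\etab=-\zeta$, $\atrchb=0$ recalled in \eqref{eq:PTframe-property1-intro}, Proposition \ref{Prop:transport.e_3} eliminates the usual loss of derivatives, so commuting with $\nab$ and $\nab_4$ and running Gr\"onwall in $s$ propagates the short pulse hierarchy from \eqref{eq:chih_0-upperbound} throughout $\MM$. In particular, $|s|\chih$ is approximately conserved along $e_3$. Whenever Hodge or Sobolev estimates on the spheres $S_{\ub,s}$ are needed, one transfers to the integrable PG frame via the explicit frame transformation, whose cost is small within the bootstrap. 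The null curvature components are then controlled by Bel--Robinson type energy identities on $\ub$-slabs; the Minkowskian data on $\Hb_0$ make the outgoing flux there trivial, and the incoming flux on $H_{-1}$ is controlled by \eqref{eq:chih_0-upperbound} together with the null Bianchi identities.

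\textbf{Trapped surface.} Once the bootstrap closes, the lower bound \eqref{eq:chih_0-lowerbound} propagates from $H_{-1}$ inward to give $\int_0^\delta |\chih(s,\ub,\theta)|^2\,d\ub \gtrsim \delta a/|s|^2$ on every $H_s$ with $s\in[-1,-\tfrac{1}{4}a\delta]$. The outgoing Raychaudhuri equation
\[ \nab_4\,\trch + \tfrac{1}{2}(\trch)^2 = -|\chih|^2 + (\text{lower order}) \]
is then integrated from $\ub=0$, where $\trch=2/|s|$ is Minkowskian on $\Hb_0$, to $\ub=\delta$ at $s=-\tfrac{1}{4}a\delta$; the driving of size $\delta a/|s|^2$ beats the initial $2/|s|$ precisely when $|s|<\delta a/2$, which holds at the target sphere, forcing $\trch<0$ at every angle. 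Combined with $\trchb<0$ preserved throughout the bootstrap, this exhibits $S_{\delta,-\tfrac{1}{4}a\delta}$ as trapped.

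\textbf{Main obstacle.} The delicate point, I expect, is the combined bookkeeping of the anisotropic $(a,\delta)$ weights that must keep every nonlinearity strictly subcritical, the PT/PG frame translation at top derivative order that must not inflate norms, and the closing of curvature energy identities with incoming flux only -- the latter being precisely the reason the extra $\nab_4\chih_0$ assumption enters in \eqref{eq:chih_0-upperbound}, as noted in Remark \ref{remark:PG=PT-onH_0}. Any slack in the PG/PT conversion at top order would destroy the scheme, so a careful choice of the PT anchor, following \cite{KS:Kerr,GKS}, is essential.
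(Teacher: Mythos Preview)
Your overall strategy matches the paper's: bootstrap on weighted norms, $e_3$-transport for all Ricci coefficients in the PT gauge, PG-frame translation for Sobolev/Hodge estimates, energy estimates for curvature via Bianchi pairs, and the Raychaudhuri argument for trapping. Two technical points, however, need correction.

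First, the commutation is with $\d=(\nab,\nab_3)$, not with $(\nab,\nab_4)$. Commuting the $\nab_4$-Bianchi equation (which you must do) with $\nab$ produces, via $[\nab_4,\nab_b]\psi=\xi_b\nab_3\psi+\cdots$ from \eqref{comm:PTframe1}, a $\nab_3$ derivative of curvature that must already be under control; the paper therefore includes $\nab_3$ in the commutator set from the outset and checks that $\nab_3$-commutation of the $e_3$-transport equations is harmless precisely because $\xib=0$ in the PT gauge. Commuting with $\nab_4$ is neither needed nor natural here, since $\nab_4$ derivatives play no role in the sphere Sobolev embedding.

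Second, the energy identities cannot be run on ``$\ub$-slabs'' alone. In the Bianchi pair \eqref{eq:Bianchi-pairs} the $\psi_1$-flux lives on spacelike or outgoing hypersurfaces while the $\psi_2$-flux lives on incoming ones; to capture both, the paper integrates over the causal domain bounded in the future by $\Sigma_\tau\cup\Hb_{\ub}$ with $\Sigma_\tau$ genuinely spacelike. This is the entire reason the time function $\tau=\tfrac{1}{10}a\ub+s$ is introduced (Section \ref{subsect:Integrating region}), and the factor $a^{-1/2}$ in front of the $\Sigma_\tau$-flux in Lemma \ref{Lemma:Main-Bianchipairs} comes from $|\grad\tau|\approx a^{1/2}$. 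Without the $\Sigma_\tau$ foliation you cannot close the estimates for $\alpha$ and $\beta$.
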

We later provide  (see Section \ref{section:Main Thm-precise}) a  more precise version of  Theorem \ref{theorem:main}  which also extends to  more general  incoming initial data  on $\Hb_0$.

\medskip

\NI {\bf  Previous results.} 
Christodoulou’s  pioneering 
work \cite{Chr1} is the first result on the formation of trapped surfaces in the Einstein-vacuum spacetime. Klainerman-Rodnianski \cite{KRodn} then adopted a systematical approach by scale invariant estimates to simplify the proof of \cite{Chr1}. This idea was then further generalized by An \cite{An12}. Li-Yu \cite{LiYu} showed that there exists  Cauchy initial data corresponding to Christodoulou's spacetime. Later, Klainerman-Luk-Rodnianski \cite{KLR} significantly relaxed the lower bound in \eqref{eq:chih_0-lowerbound}  by developing a fully anisotropic mechanism for the formation of trapped surfaces.

The first scale-critical result was established by An-Luk \cite{AnLuk}, which led to the further study of the apparent horizon \cite{An17}, \cite{AnHan}. Later An \cite{An1} gave a simplified proof of the scale-critical result in the far-field regime by designing a scale-invariant norm based on the signature and decay rates. Our  work provides proof of a similar result in the finite region using the incoming geodesic foliation instead of the double null foliation.

We also refer the readers to the results generalized to Einstein equation coupled with matter fields \cite{Yu11}, \cite{AnAthanasiou}, \cite{AMY}, \cite{ZHK}.

\subsection{Main  features   in the proof of Theorem \ref{theorem:main} }

\NI  {\bf 1.}   As mentioned earlier  we  make  essential use  of  PT frame \eqref{eq:eta=0} in order to avoid  the loss of derivatives intrinsic to the geodesic foliation. Note that  the horizontal  structure  spanned by $( e_1,  e_2)$ is non-integrable\footnote{It is however integrable  in $e_3$, i.e. $\atrchb=0$.  This is due to the fact that   the corresponding   horizontal structure   is tangent to the $\ub$ hypersurfaces, see  Proposition \ref{Prop:PTframe-properties}.  }  with respect to $e_4$,
     i.e. $\atrch\neq 0$.
The use of  non-integrable structures   was pioneered   in the proof of Kerr stability \cite{KS:Kerr}, \cite{GKS}.  To compensate  for    the lack of integrability  of the  main horizontal structures  used  in these works  one  needs to  consider  associated integrable structures  for which one can derive  elliptic (Hodge  estimates)   and Sobolev inequalities. In our case  this role is played  by  the geodesic frame $\{\eg\}$.
 
 \medskip 
 
 \NI {\bf 2.} 
  The typical  transport equation      verified   by all  Ricci coefficients in the PT frame,   is of the form
    \begin{equation}
    \lab{eq:model-transport}
        \nab_3\psi+\lambda \trchb\psi=F.
    \end{equation}
    The  main contribution of  $\trchb$  is  $-2/|s|$, so neglecting  $F$ (which we expect to  control by a bootstrap assumption), we  infer that the quantity $|s|^{2\lambda}\psi$ is  conserved. It helps to   divide   all    Ricci coefficients  $\Gac$     as follows:
    \begin{enumerate}
        \item[i.] Those  that  are of size $1$ on $H_{-1}$, and satisfy the transport equation \eqref{eq:model-transport}  with $\lambda=\frac 12$.   They behave  like $1/|s|$ on $\MM$.  We    denote  these    by $\Gac_b$ (stands for ``bad");
        \item[ii.] Those  that are of size  $\delta a^\frac 12$ on $H_{-1}$, and satisfy the transport equation with $\lambda=1$. They behave  like  $\delta a^\frac 12/|s|^2$ on $\MM$ and are denoted by $\Gac_g$ (stands for ``good").
        \item[iii.]  The  outgoing shear $\chih$  for which we   have  only  the bounds  $\chih\sim a^\frac 12/|s|$. In addition, in contrast   to the  case   of   the double null foliation (used  in  \cite{KRodn},  \cite{AnLuk}, \cite{KLR},  \cite{An1}, \cite{Li}, \cite{An}),   we also have present   the  signature\footnote{The signature of a quantity is basically the number of $e_4$ minus the number of $e_3$ in its expression.}  $+2$ quantity $\xi$   which behaves  similarly to $\chih$.
          Though $\xi$, like $\chih$,  is  a large quantity, due to   signature  consideration,  it   gets   paired with   better behaved quantities  in nonlinear terms.

    \end{enumerate}
   \medskip 

\NI   {\bf 3.}  The right hand side  of \eqref{eq:model-transport}  denoted by  $F$  contains  linear   curvature components and nonlinear  terms   relative to  the    Ricci coefficients $(\Gac_g, \Gac_b)$. As usual,  the   curvature components are  controlled   by energy type estimates   using  the  null Bianchi  equations. The  nonlinear quadratic  terms  for (i)   are of the form $\Gac_g\cdot\Gac_b$. Those   for (ii) are  of the  form   $\Gac_b\cdot \Gac_b$.   Discounting the anomalous behavior discussed  below,
both of these will result in the gain of  at least an extra $a^{-\frac 12}$ factor in their   estimates\footnote{The absence of worse nonlinear terms is  related to the ``signature conservation" pointed out in \cite{CK}.}.
In our case the terms in  $\Gac_b$  have  signature $+1$ and those  in  $\Gac_g$  have  signature $0$ or $-1$.  By simple signature  considerations  it is easy to see that 
when   \eqref{eq:model-transport} is  applied  to  $\Gac_g$  we cannot have  $\Gac_g\cdot\Gac_g$ terms in $F$.  Similarly, when  \eqref{eq:model-transport}  is applied to   $\Gac_b$ one cannot have  terms of the type  $\Gac_g\cdot\Gac_b$. 
The absence of a worse term is crucial to  close our estimates.  For example, suppose $\psi\in \Gac_b$ and we have the equation
\begin{equation*}
    \nab_3 \psi+\frac 12 \trchb \psi=\Gac_b\cdot \Gac_b+\cdots
\end{equation*}
Since $\Gac_b\sim 1/|s|$, we would end up integrating $1/|s|$ which gives an additional logarithmic growth in $|s|$.

\medskip

 \NI {\bf 4.}  Anomalies: The key quantity $\chih$ is  large even compared with $\Gac_b$, with an extra $a^\frac 12$ factor. This is  a crucial feature   for the mechanism of the  formation of trapped surfaces.
  Its presence makes some nonlinear terms         become borderline. To overcome this  difficulty  one needs to make use of the  triangular structure of  the   main  $e_3$ transport      equations, that is   to follow  a specified,  correct,  order   in  doing the estimates. 

\medskip 

\NI  {\bf 5.}  As already  mentioned  we  need to work with  both  the geodesic and  PT frames. The passage  from the $\eg$-frame to the $e$ frame 
 is made using the frame transformation formulas (see \eqref{eq:PT-frame}) 
 \beaa
    e_4=\eg_4- f^a \,  \eg_a+\frac 14|f|^2 \eg_3,\quad e_a=\eg_a-\frac 12 \,  f _a \eg_3,  \quad e_3=\eg_3
\eeaa
where $f$ verifies\footnote{This follows  by using the condition $\etag=\zetag$ and  the transformation formulas of  Lemma \ref{Lemma:Generalframetransf}. }

\begin{equation}\label{eq:eq-of-f-intro}
    \nab_3 f+\frac 12\trchb\, f=2\zeta-\chibh\cdot f.
\end{equation}
   The  Ricci  coefficients  in the $e$ and $\eg$   frames    are related by Lemma     \ref{Lemma:Generalframetransf}.    The  $\eg$ frame  is used to derive   Hodge  elliptic   estimates  and  Sobolev inequalities. More precisely, whenever  we need to  make use of these, we pass from the PT frame $e$  to the  $\eg$ frame and then transform the result back to the  PT-frame.

\medskip

 \NI {\bf 6.}   The ansatz from the {bootstrap assumption (see \eqref{eq:bootstrap})} $\zeta\in\Gac_g\sim \delta a^\frac 12 |s|^{-2}$   (which is true in the double null frame\footnote{ and in fact, as one can later verify, also in the integrable geodesic frame}) leads to a logarithmic loss in the $|s|$-weighted estimate  when we integrate the equation \eqref{eq:eq-of-f-intro}. To avoid this  problem   we show that in fact,  in the PT frame,  $\zeta $ satisfies a slightly improved estimate of the form  $\ze\sim \delta a^\frac 12 |s|^{-1}+\delta^\frac 32 a |s|^{-\frac 52}$ that circumvents the problem.

\medskip 

\NI  {\bf 7.}  Apart from the transport equations of type  \eqref{eq:model-transport}   verified  by the  Ricci coefficients, we also need to  control  the curvature  components by using  energy type  estimates\footnote{This is in fact   the only place  where   we  need to take  into account  the  incoming  data on $\Hb_0$. }.   This is a  standard procedure,  see for example Section 8.7 in   \cite{KS} or  Chapter 16 in \cite{GKS}.
   A typical pair of null Bianchi equations  can, in our case, be  written  in  the form
 \begin{equation}
 \lab{eq:Bianchi-pairs}
    \begin{split}
        \nab_3 \psi_1+\lambda\trchb \psi_1&=\DD^*\psi_2+F_1,\\
        \nab_4 \psi_2&=\DD\psi_1+ F_2,
    \end{split}
\end{equation}
where $\DD$, $\DD^*$ represent horizontal Hodge operators (defined in Section \ref{subsect:general-Bianchi-pair}) that are   formal adjoint of each other. The  corresponding energy estimates  for  $(\psi_1, \psi_2)$   is derived by integrating  the   divergence identity 
\begin{equation*}
    \Div(|s|^{2(2\lambda-1)}|\psi_1|^2 e_3)+\Div(|s|^{2(2\lambda-1)}|\psi_2|^2 e_4)=\cdots
\end{equation*}
on the  causal region  enclosed by the boundaries  $\Hb_0$, $H_{-1}$, $\Hb_\delta$, $\Sigma_\tau$.

\medskip 

\NI  {\bf 8.}    To estimate  higher derivatives  we need to commute both the transport equations of type \eqref{eq:model-transport} and the null Bianchi pairs   with $\nab$,  more precisely\footnote{ This latter   can be thought of as the ``rotation operator" which is commonly used in the analysis of wave equations.}   with  $|s|\nab$.  A  small  difficulty  appear  when we commute  the  second equation of the Bianchi pair  \eqref{eq:Bianchi-pairs},   applied to  $\psi_1=\bb, \psi_2=\aa$,   with $\nab$  due to  the   commutator   $   [\nab_4,\nab]\aa=\xi\nab_3 \psi+\cdots$  which contains the term  $ \nab_3\aa$ for which   we do not  have an equation.

 It turns out that, with very little additional work, we can also commute  equation  \eqref{eq:model-transport} with 
$|s|\nab_3$ just as  with   $|s|\nab$. As a result, $|s|\nab\psi$ and $|s|\nab_3\psi$ both behave similarly with  $\psi$. We note however that the signature of $\nab$ and $\nab_3$ are different, and this is the reason why we do not pursue the strict hierarchy according to the signatures, as in \cite{KRodn}, \cite{An1},  but only distinguish $\nab_4$ with $\d=(\nab,\nab_3)$, and, by a similar spirit, distinguish $\Gac_b$, which is of signature $+1$, with $\Gac_g$, of signature $0$ or $-1$.

We also note that the analogous problem  of the commutator between $\nab_3$ and $\nab$ is not present in view of the  fact  $\xib=0$  in  our PT gauge.  We rely very little   on 
the  $\nab_4$ transport  equations for the Ricci coefficients - they  are in fact only needed on $H_{-1}$.

\section{Preliminaries}
\subsection{Horizontal structures}
\lab{section:horizontalStr}
We review below   some  basic facts about non-integrable horizontal structures discussed in  Chapter 2 of \cite{GKS}.

Given a pair of null vectors $\{e_3,e_4\}$   satisfying $g(e_3,e_4)=-2$, we consider the  horizontal structure  associated to it given by    the distribution  $\HH=\{e_3,e_4\}^\perp$. With a choice of an orthonormal basis $\{e_1,e_2\}$ of this horizontal structure, we obtain a null frame $\{e_3,e_4,e_a\}$ ($a=1,2$). When the horizontal structure is integrable, i.e. the  distribution $\HH$  is involutive,   we also say that the null frame is integrable (which is not the case for the principal null pair in Kerr spacetime).

The Ricci coefficients and curvature components are defined\footnote{Here $\dual R$ is defined by $\dual R_{\a\b\mu\nu}=\frac 12 \in_{\mu\nu}^{\quad \rho\sigma} W_{\a\b\rho\sigma}$.} by
\beaa
    \chi_{ab}&=&g(D_a e_4,e_b),\quad \chib_{ab}=g(D_a e_3,e_b),\quad \xi_a=\frac 12 g(D_4 e_4,e_a),\quad\, \xib_a=\frac 12 g(D_3 e_3,e_a),\\
    \om&=&\frac 14 g(D_4 e_4,e_3),\quad \omb=\frac 14 g(D_3 e_3,e_4),\quad \eta_a=\frac 12 g(D_3 e_4,e_a),\quad \etab_a=\frac 12 g(D_4 e_3,e_a), \\
    \zeta_a&=&\frac 12 g(D_a e_4,e_3),
\eeaa
\beaa
    \a_{ab}&=&R_{a4b4},\quad \b_a=\frac 12 R_{a434},\quad \rho=\frac 14 R_{3434},\quad \dual\rho=\frac 14\dual R_{3434},\quad \bb_a=\frac 12 R_{a334},\\
     \aa_{ab}&=&R_{a3b3}.
\eeaa

For a vector field $X$, we define its projection onto the horizontal structure $\HH$ by
\begin{equation*}
    \Xh:=X+\frac 12 g(X,e_3)e_4+\frac 12 g(X,e_4)e_3.
\end{equation*}
This also defines the projection operator $\Pi$. A $k$-covariant tensor field $U$ is called horizontal, if 
\begin{equation*}
    U(X_1,\cdots, X_k)=U(\Xh_1,\cdots,\Xh_k).
\end{equation*}
The horizontal covariant derivative operator $\nab$ is defined by
\begin{equation*}
    \nab_X Y:=\, ^{(h)}(D_X Y)=D_X Y-\frac 12 \chib(X,Y)e_4-\frac 12 \chi(X,Y)e_3
\end{equation*}
using the definition of $\chi,\chib$. Similarly, one can define $\nab_3 X$ and $\nab_4 X$ as the projections of $D_3 X$ and $D_4 X$. Then the horizontal covariant derivative can be generalized for tensors in the standard way
\begin{equation*}
    \nab_Z U(X_1,\cdots,X_k)=Z (U(X_1,\cdots X_k))-U(\nab_Z X_1,\cdots, X_k) -\cdots - U(X_1, \cdots, \nab_Z X_k),
\end{equation*}
and similarly for $\nab_3 U$ and $\nab_4 U$.


In the non-integrable case, the null second fundamental forms are  decomposed as
\begin{equation*}
    \chi_{ab}=\chih_{ab}+\frac 12\delta_{ab} \trch+\frac 12\in_{ab}\atrch,
\end{equation*}
\begin{equation*}
    \chib_{ab}=\chibh_{ab}+\frac 12\delta_{ab} \trchb+\frac 12\in_{ab}\atrchb.
\end{equation*}
where the trace and anti-trace are defined by
\begin{equation*}
    \trch=\delta^{ab} \chi_{ab},\quad \trchb=\delta^{ab}\chib_{ab},\quad \atrch:=\in^{ab}\chi_{ab},\quad \atrchb:=\in^{ab}\chib_{ab},
\end{equation*}
and the horizontal volume form $\in_{ab}$ is defined by
\begin{equation*}
    \in (X, Y) :=\frac 12\in (X,Y,e_3,e_4).
\end{equation*}
The horizontal structure $\HH$ is integrable if and only if   $\atrch=\atrchb=0$; see  Chapter 2  in \cite{GKS}.

The left dual of a horizontal $1$-form  $\psi$ and a horizontal covariant $2$-tensor $U$  are defined by
\begin{equation*}
    \dual \psi_a:=\in_{ab}\, \psi_b,\quad (\dual U)_{ab}:=\in_{ac}\, U_{cb}.
\end{equation*}
For two horizontal $1$-forms $\psi$, $\phi$, we also define
\begin{equation*}
    \psi\cdot \phi:=\delta^{ab}\psi_a\phi_b,\quad \psi\wedge\phi:=\in^{ab} \psi_a \phi_b,\quad (\psi\hot\phi)_{ab}=\psi_a\phi_b+\psi_b\phi_a-\delta_{ab}\psi\cdot\phi.
\end{equation*}
In particular $|\psi|:=(\psi\cdot\psi)^\frac 12$ with the straightforward generalization to general horizontal covariant tensors. This will be used to define $L^p$-type norms of $\psi$. Similarly we define the derivative operators
\begin{equation*}
    \div \psi:=\delta^{ab}\, \nab_a \psi_b,\quad \curl \psi:=\, \in^{ab} \nab_a \psi_b,\quad (\nab\hot \psi)_{ab}:=\nab_a \psi_b+\nab_b \psi_a-\delta_{ab}\, \div\psi.
\end{equation*}

\subsection{Frame transformations}
To pass from the geodesic frame $\eg$ to the  PT frame   we need to appeal to the transformation formulas  for the corresponding Ricci coefficients   given in Section 2.2  of \cite{KS:Kerr}.
The general  formula of a transformation between two  null frames  $e$ and $e'$  was given  in Lemma 2.2.1 of that section. In our  context we only need transformations that preserve $e_3$:

\begin{lemma}
\label{Lemma:Generalframetransf}
 A general null transformation between two  null frames $(e_3, e_4, e_1, e_2)$ and $(e_3', e_4', e_1', e_2')$  which preserves $e_3$   has the form
 \bea
 \lab{General-frametransformation}
  e_3'= e_3, \qquad    e_a'= e_a +\frac 1 2 f_a   e_3,\qquad  e_4'= e_4 + f^b e_b+\frac 1 4 |f|^2 e_3 
 \eea
 The inverse  transformation which takes the $e$ frame to the $e'$ frame  is given by replacing $f$ with $-f$, i.e.
 \beaa
  e_3= e'_3, \qquad    e_a= e'_a -\frac 1 2 f_a   e'_3,\qquad  e_4= e'_4 - f^b e'_b+\frac 1 4 |f|^2 e'_3.
 \eeaa
  \end{lemma}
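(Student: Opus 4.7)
The plan is to verify the claim by a direct parametrization argument using the null frame conditions. Since any new frame vector must be a linear combination of $(e_3, e_4, e_1, e_2)$, I would start by writing general ansätze
\begin{equation*}
e_a' = \alpha_a e_3 + \beta_a e_4 + \gamma_a{}^b e_b, \qquad e_4' = \mu\, e_3 + \nu\, e_4 + \lambda^b e_b,
\end{equation*}
and then successively impose the null frame relations: $g(e_3,e_a')=0$, $g(e_3,e_4')=-2$, $g(e_a',e_b')=\delta_{ab}$, $g(e_a',e_4')=0$, and $g(e_4',e_4')=0$. The first two give $\beta_a=0$ and $\nu=1$; the third forces $\gamma_a{}^b$ to be orthogonal (which, after the standard convention of fixing the residual horizontal rotation $O$ to be the identity, reduces to $\gamma_a{}^b=\delta_a{}^b$); the fourth then gives $\alpha_a=\tfrac{1}{2}\lambda_a$, and the fifth yields $\mu=\tfrac{1}{4}|\lambda|^2$. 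Setting $f_a:=\lambda_a$ produces exactly the claimed formula \eqref{General-frametransformation}.

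For the inverse formula, I would simply substitute. Starting from \eqref{General-frametransformation} one checks by direct algebra that
\begin{equation*}
e_a = e_a' - \tfrac{1}{2} f_a e_3', \qquad e_4 = e_4' - f^b e_b' + \tfrac{1}{4}|f|^2 e_3',
\end{equation*}
by plugging in the expressions for $e_a', e_4'$ and using $e_3'=e_3$; the cross terms in $f$ collapse via the identity $-f^b(e_b+\tfrac{1}{2} f_b e_3) + \tfrac{1}{4}|f|^2 e_3 = -f^b e_b - \tfrac{1}{4}|f|^2 e_3$, which gives back precisely $e_4$. Since this is the same formula as \eqref{General-frametransformation} with $f\mapsto -f$, the inverse assertion follows.

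There is no real obstacle here — the proof is essentially bookkeeping. The only subtle point is the convention that the horizontal rotation $O$ between $\{e_1, e_2\}$ and $\{e_1', e_2'\}$ has been normalized to the identity; this is the reason why Lemma \ref{Lemma:Generalframetransf} gives a single-parameter ($f$) family rather than the fuller $(f,O)$ family of Lemma 2.2.1 in \cite{KS:Kerr}. Once this convention is stated explicitly at the beginning, the entire verification reduces to a short computation using only the bilinear identities $g(e_3,e_4)=-2$, $g(e_a,e_b)=\delta_{ab}$, and the vanishing of all other pairings.
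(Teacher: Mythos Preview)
Your proof is correct. The paper does not actually supply a proof of this lemma; it simply states it as the specialization to the $e_3$-preserving case of the general null frame transformation formula, Lemma 2.2.1 of \cite{KS:Kerr}. Your direct parametrization argument---writing the most general linear combination and successively imposing the null frame relations $g(e_3',e_a')=0$, $g(e_3',e_4')=-2$, $g(e_a',e_b')=\delta_{ab}$, $g(e_a',e_4')=0$, $g(e_4',e_4')=0$---is exactly the standard derivation one would find in that reference, and your explicit remark that the residual horizontal rotation $O\in O(2)$ has been normalized to the identity is a useful clarification that the paper leaves implicit. The inverse check by substitution is also clean and correct.
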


\begin{lemma}
\lab{Proposition:transformationRicci}
Under a null frame transformation \eqref{General-frametransformation} the Ricci coefficients  transform as follows:
\begin{itemize}
\item The transformation formula for $\xi$ is given by 
\bea
\bsplit
\xi' &= \xi +\frac{1}{2}\nab_4'f+\frac{1}{4}(\trch f -\atrch\dual f)+\om f+\frac{1}{2}f\c\chih+\frac{1}{4}|f|^2\eta \\
&+\frac{1}{2}(f\c \ze)\,f +\frac{1}{4}|f|^2\etab -\frac 14|f|^2 \om f+\frac{1}{16}|f|^2 f\cdot\chib+\frac 1{16}|f|^4\xib.
       \end{split}
\eea

\item The transformation formula for $\xib$ is given by 
\bea
\bsplit
\xib' &= \xib,
       \end{split}
\eea

\item The transformation formulas for $\chi $ are  given by 
\bea
\bsplit
\chi_{ab}'&=\chi_{ab}+f_a\eta_b+\nab_{e_a'}f_b+\frac 14 |f|^2 \chib_{ab}+\frac 14|f|^2 f_a\xib_b+f_b\zeta_a\\
&-f_a f_b\omb-\frac 12 f_b f_c\chib_{ac}-f_af_bf_c \xib_c.
\end{split}
\eea

\item The transformation formulas for $\chib $ are  given by 
\bea
\chib_{ab}'=\chib_{ab}+f_a\xib_b.
\eea

\item  The transformation formula for $\ze$ is given by 
\bea
\ze' = \ze   -\frac{1}{4}\trchb f -\frac{1}{4}\atrchb \dual f  -\omb f -\frac{1}{2}\chibh\c f -\frac 12 (f\cdot\xib)f.
\eea

\item   The transformation formula for $\eta$ is given by 
\bea\label{eq:transformation-eta}
\eta' = \eta +\frac{1}{2}\nab_3 f -\omb\, f +\frac 14 |f|^2\xib-\frac 12(f\cdot \xib)f.
\eea
\item   The transformation formula for $\etab$ is given by 
\bea
\etab' = \etab +\frac{1}{4}\trchb f - \frac{1}{4}\atrchb\dual f + \frac{1}{2}f\c\chibh+\frac 14 |f|^2 \xib.
\eea

\item   The transformation formula for $\om$ is given by
\bea
\bsplit
\om' &=  \om +\frac{1}{2}f^a (\ze-\etab)_a -\frac{1}{4}|f|^2\omb - \frac{1}{4}f^a f^b \chib_{ab} -\frac 18 |f|^2 f^a\xib_a.
\end{split}
\eea

\item   The transformation formula for $\omb$ is given by
\bea
\omb' = \omb +\frac{1}{2}f\c\xib.
\eea
\end{itemize}
\end{lemma}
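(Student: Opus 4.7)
The plan is to prove each identity by direct computation: substitute the transformation \eqref{General-frametransformation} into the defining spacetime contraction of the relevant Ricci coefficient, then expand using the Leibniz rule for $D$ (permitted because the components $f^a$ are functions and pull through) together with the metric compatibility $X(g(Y,Z)) = g(D_X Y, Z) + g(Y, D_X Z)$. This is precisely the specialization to our restricted transformation of the general argument of Lemma 2.2.1 in \cite{KS:Kerr}.

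First I would dispose of the \emph{lower} coefficients $\xib', \omb', \chib', \etab', \ze'$, all of which involve $D e_3$ or an inner product against $e_3' = e_3$ and thereby benefit from $g(D_X e_3, e_3) = 0$. For instance $\xib'_a = \tfrac 12 g(D_3 e_3, e_a + \tfrac 12 f_a e_3) = \xib_a$ is immediate; $\omb'$ picks up a single $\tfrac 12 f \cdot \xib$ from pairing $D_3 e_3$ against the $f^b e_b$ piece of $e_4'$; and $\chib'_{ab} = \chib_{ab} + f_a \xib_b$ comes from the sole surviving cross term $\tfrac 12 f_a g(D_3 e_3, e_b)$. For $\etab'$ and $\ze'$ one expands $D_{e_4'} e_3 = D_4 e_3 + f^b D_b e_3 + \tfrac 14 |f|^2 D_3 e_3$ and rewrites $f^b \chib_{ba}$ via the decomposition $\chib_{ab} = \chibh_{ab} + \tfrac 12 \trchb \de_{ab} + \tfrac 12 \atrchb \in_{ab}$ together with $\in_{ba} f^b = -\dual f_a$.

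The \emph{upper} coefficients $\eta', \om', \xi', \chi'$ are handled in the same way, except that now $D_{e_4'}(e_4')$ and $D_{e_a'}(e_4')$ hit $f^b e_b$ and $\tfrac 14 |f|^2 e_3$, producing derivatives of $f$ through Leibniz. Once the remaining pure-connection terms are recognized as $\chi, \chib, \ze, \etab$ in the unprimed frame, these derivatives reassemble into the horizontal transport terms $\tfrac 12 \nab_3 f$ in $\eta'$, $\nab_{e_a'} f_b$ in $\chi'_{ab}$, and $\tfrac 12 \nab_4' f$ in $\xi'$. The identities for $\eta'$ and $\om'$ contain only powers $|f|^0, |f|^1, |f|^2$ and are straightforward once the terms have been collected.

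The main obstacle is the bookkeeping in $\xi'$, the only formula containing all powers $|f|^0$ through $|f|^4$, and secondarily in $\chi'$. For $\xi'$ I would compute $D_{e_4'}(e_4)$, $D_{e_4'}(f^b e_b)$, and $D_{e_4'}(\tfrac 14 |f|^2 e_3)$ separately, pair each with $e_a + \tfrac 12 f_a e_3$, and then reorganize the resulting terms by their total power in $f$; the quartic contribution $\tfrac{1}{16} |f|^4 \xib$ arises solely from the $\tfrac 14 |f|^2 e_3$ piece of $e_4'$ differentiated in direction $e_3$ (from its own $e_3$-component) and then paired against $\tfrac 12 f_a e_3$ via $g(D_3 e_3, e_b) = 2 \xib_b$. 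Collecting according to the decomposition of $\chi$ and $\chib$ into trace/antitrace/traceless parts yields the stated identity; for $\chi'$ the same strategy works with fewer terms to collect, the only subtlety being the conversion of $e_a'(f_b)$ into $\nab_{e_a'} f_b$ via the unprimed connection coefficients.
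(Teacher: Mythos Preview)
Your approach is correct and matches the paper's, which simply records that the formulas follow from direct calculation and refers to \cite{KS-GCM1} for full details. One small slip in your bookkeeping: the $\tfrac{1}{16}|f|^4\xib$ contribution to $\xi'$ comes from pairing the $\tfrac{1}{16}|f|^4 D_3 e_3$ part of $D_{e_4'}e_4'$ against $e_a$ (using $g(D_3 e_3,e_a)=2\xib_a$), not against $\tfrac12 f_a e_3$, since $g(D_3 e_3,e_3)=0$; aside from this the outline is sound.
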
 
The proof follows from a direct calculation. See \cite{KS-GCM1} for detailed derivations in full generality. Note that, unlike the  version in \cite{KS:Kerr},  we keep  track of all  error terms\footnote{ This is needed as  our  situation here is non-perturbative.}.


\subsection{Passage from the  PG  to the PT frame}
Consider the transformation formula  from the  PG  frame $\eg$  to  a new frame $e $         
for which  $\eta=0$. In view of Lemma \ref{Proposition:transformationRicci}, with  $f$ replaced\footnote{Thus $f$ corresponds to the inverse  transformation formula  from the $e$-frame to the $\eg$-frame.}   by $-f$,  we must have
\beaa
\eta = \etag -\frac{1}{2}\nabg_3 f +\ombg\, f +\frac 14 |f|^2\xibg+\frac 12(f\cdot \xibg)f.
\eeaa
Note that one can easily verify $\nab_3 f=\nabg_3 f$, as   $e_3$ is geodesic.  Since $\ombg, \xibg$ vanish we deduce that $f$ must verify the  equation $0= \etag -\frac{1}{2}\nab_3 f$.  
\begin{definition}
\lab{def:PT-frame}
 The  PT frame  $\et=e$   is  defined  
  by the transformation formula  
  \bea
  \lab{eq:PT-frame}
    e_4=\eg_4- f^a \,  \eg_a+\frac 14|f|^2 \eg_3,\quad e_a=\eg_a-\frac 12 \,  f _a \eg_3,  \quad e_3=\eg_3
\eea
with  $f$ the unique solution of  the equation  
\bea
\nab_3  f = 2  \etag , \qquad  f\Big|_{H_{-1}}=0.
\eea
\end{definition}

\begin{proposition}
\lab{Prop:PTframe-properties}
The PT frame  defined   above  verifies  the following properties:

\begin{enumerate}
\item  We have 
\bea
\lab{eq:PTframe-property1}
 \omb=\, \xib=0,\qquad  \etab=- \zeta, \qquad  \atrchb=0.
\eea
 \item We have 
 \bea
 \lab{eq:PTframe-property2}
    \nab_3 f=-\frac 12 \trchb f+2\zeta -\chibh\cdot f.
\eea

\end{enumerate}
\end{proposition}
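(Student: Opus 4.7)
The plan is to derive both parts as straightforward consequences of Lemma \ref{Proposition:transformationRicci}, applied to the transformation \eqref{eq:PT-frame}. Since \eqref{eq:PT-frame} expresses the PT frame $e$ in terms of the PG frame $\eg$, while Lemma \ref{Proposition:transformationRicci} is phrased as a transformation going the opposite way (adding the $+f$ components), I will identify the PG frame with the unprimed frame and the PT frame with the primed frame, so that $f$ in the lemma must be replaced by $-f$. The key PG input, coming from \eqref{eq:intro-geodesic} together with the integrability of the PG horizontal structure $\SS$ (which is tangent to the spheres $S_{\ub,s}$), is
\begin{equation*}
\ombg=\xibg=0, \qquad \etag=\zetag=-\etabg, \qquad \,^{(g)}\atrch=\,^{(g)}\atrchb=0.
\end{equation*}

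For part (1), I will process the relevant transformation formulas in order. The formulas for $\xib$ and $\omb$ immediately give $\xib=0$ and $\omb=\tfrac{1}{2}(-f)\cdot\xibg=0$. The formula $\chib'_{ab}=\chibg_{ab}+(-f_a)\xibg_b$ collapses to $\chib_{ab}=\chibg_{ab}$, so in particular $\trchb=\trchbg$, $\chibh=\chibhg$, and $\atrchb=\,^{(g)}\atrchb=0$. Finally, substituting $f\to -f$ in the transformation formulas for $\etab$ and $\zeta$ and discarding PG quantities that vanish yields
\begin{equation*}
\etab=\etabg-\tfrac{1}{4}\trchbg\,f-\tfrac{1}{2}\chibhg\cdot f, \qquad \zeta=\zetag+\tfrac{1}{4}\trchbg\,f+\tfrac{1}{2}\chibhg\cdot f.
\end{equation*}
Adding these and invoking $\etabg+\zetag=0$ gives $\etab+\zeta=0$.

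For part (2), I will combine the defining condition $\nab_3 f=2\etag$ from Definition \ref{def:PT-frame} with the PG identity $\etag=\zetag$ to get $\nab_3 f=2\zetag$. Inverting the $\zeta$-formula derived above (and using $\trchbg=\trchb$, $\chibhg=\chibh$) expresses
\begin{equation*}
\zetag=\zeta-\tfrac{1}{4}\trchb\,f-\tfrac{1}{2}\chibh\cdot f,
\end{equation*}
and substituting back produces exactly \eqref{eq:PTframe-property2}.

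The only mildly delicate point is the interpretation of $\nab_3 f$: since $e_3=\eg_3$ and $f$ is a horizontal covector in both the PG and PT senses (the $e_3,e_4$ components vanish in either frame, as is visible directly from \eqref{eq:PT-frame}), the horizontal projections of $D_3 f$ along $e_3$ coincide, so the defining equation $\nab_3 f=2\etag$ and \eqref{eq:PTframe-property2} refer to the same object. Apart from this consistency check, the entire argument is routine bookkeeping with the transformation formulas of Lemma \ref{Proposition:transformationRicci}, and I expect no real obstacle.
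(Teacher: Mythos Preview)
Your proof is correct and follows essentially the same route as the paper: both use the transformation formulas of Lemma \ref{Proposition:transformationRicci} with $f\to -f$, the PG identities $\ombg=\xibg=0$, $\etag=\zetag=-\etabg$, and the defining equation $\nab_3 f=2\etag$. The only minor variation is in the justification of $\atrchb=0$: you deduce it algebraically from $\chib_{ab}=\chibg_{ab}-f_a\xibg_b=\chibg_{ab}$ together with the integrability of the PG horizontal structure, whereas the paper argues geometrically that $e_a(\ub)=0$ forces $[e_1,e_2]$ to be tangent to the $\ub$-level sets and hence orthogonal to $e_3$. Both arguments are equally short; yours has the advantage of making explicit the useful byproduct $\trchb=\trchbg$, $\chibh=\chibhg$, which you then feed into part (2).
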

\begin{proof}
To check \eqref{eq:PTframe-property1}  we start  from the fact that
  $e_3=\eg_3$ is  geodesic, i.e.   $\,\omb=0, \, \xib=0$.    Note that   $e_a(\ub)=\eg_a(\ub)-\frac 1 2 f_a  e_3(\ub)=0$. Hence     $e_a$ are tangent to  the level surfaces of $\ub$ 
 and so is the commutator  $[e_1, e_2]$. Thus $\atrchb= \in_{ab} g(D_a e_3, e_b)= - \frac 1 2  \in_{ab} g(e_3, [e_a, e_b])=   0$. 
  In view of the transformation formulas for $\ze$ and $\etab$  in Lemma \ref{Proposition:transformationRicci}  we  easily check  that  we also have  $ \ze+\etab= \zeg+ \etabg =0$.

 To check \eqref{eq:PTframe-property2}   we use  the inverse  transformation formulas,               corresponding  to $e \to \eg$. Thus  $\zeg = \ze   -\frac{1}{4}\trchb f -\frac{1}{2}\chibh\c f $.  Since $\zeg=\etag$  and $\etag = \frac{1}{2}\nab_3 f$    we  deduce that  $\frac{1}{2}\nab_3 f  =\ze   -\frac{1}{4}\trchb f -\frac{1}{2}\chibh\c f $ as stated.
       \end{proof}

\subsection{Null structure  and Bianchi equations in PT frame}

\begin{proposition}
\label{Prop:transport.e_3}
Under the  ingoing PT frame the null structure  equations in the  incoming  direction $e_3$  take the form:
\beaa
\nab_3\trchb&=&-|\chibh|^2-\frac 1 2 \big( \trchb\big)^2,\\
\nab_3\chibh&=&-\trchb\,  \chibh-\aa,
\\
\nab_3\trch
&=& -\chibh\c\chih -\frac 1 2 \trchb\trch
+ { {2\rho}},\\
\nab_3\atrch&=& -\chibh\wedge\chih-\frac 12\trchb\atrch
-2\dual\rho,\\
\nab_3\chih
&=&-\frac 1 2 \big( \trch \chibh+\trchb \chih\big)+\frac 12\dual\chibh\atrch,
\\
\nab_3 \zeta&=&-\chibh\c \zeta-\frac 12 \tr\chib \zeta-\bb,\\
\nab_3\om &=&|\ze|^2+ \rho,\\
\nab_3\xi&=&\chih\cdot \zeta+\frac 12\tr\chi \zeta -\frac 12\atrch \dual\zeta+\b.
\eeaa
We also have the equation of $\wtb:=\trchb+\frac{2}{|s|}$
\begin{equation*}
    {\nab_3 \widecheck{\trchb}+\trchb\widecheck{\trchb}=\frac 12(\wtb)^2-|\chibh|^2.}
\end{equation*}
The Bianchi equations take the form
    \beaa
    \nab_3\a-  \nab\hot \b&=&-\frac 1 2 \trchb\a+
  \ze\hot \b - 3 (\rho\chih +\rhod\dual\chih),\\
\nab_4\beta - \div\a &=&-2(\trch\beta-\atrch \dual \b) - 2  \om\b +\a\c  \ze + 3  (\xi\rho+\dual \xi\rhod),\\
     \nab_3\b+\div\varrho&=&-\trchb \b+2\bb\c \chih
     ,\\
 \nab_4 \rho-\div \b&=&-\frac 3 2 (\trch \rho+\atrch \rhod)-\ze\c\b-2\xi\c\bb-\frac 1 2 \chibh \c\a,\\
   \nab_4 \rhod+\curl\b&=&-\frac 3 2 (\trch \rhod-\atrch \rho)+\ze\c\dual \b-2\xi\c\dual \bb+\frac 1 2 \chibh \c\dual \a, \\
     \nab_3 \rho+\div\bb&=&-\frac 3 2 \trchb \rho  +\ze  \c\bb-\frac{1}{2}\chih\c\aa,
 \\
   \nab_3 \rhod+\curl\bb&=&-\frac 3 2 \trchb \rhod+\ze \c\dual \bb-\frac 1 2 \chih\c\dual \aa,\\
     \nab_4\bb-\div\varoc&=&-(\trch \bb+ \atrch \dual \bb)+ 2\om\,\bb+2\b\c \chibh
    +3 (\rho\zeta-\rhod\dual \zeta)-    \aa\c\xi,\\
     \nab_3\bb +\div\aa &=&-2\trchb\,\bb +2\aa\c\ze ,\\
     \nab_4\aa+ \nab\hot \bb&=&-\frac 1 2 \big(\trch\aa {+} \atrch\dual \aa)+4\om \aa+
 5\ze\hot \bb - 3  (\rho\chibh -\rhod\dual\chibh).
\eeaa
Here,
\beaa
\div\varo&=&- (\nab\rho+\dual\nab\rhod),\\
\div\varoc&=&- (\nab\rho-\dual\nab\rhod).
\eeaa

 \begin{proof}
 Immediate consequence of  Propositions 2.2.5  and  2.2.6 in \cite{GKS} by using the  vanishing of  $\xib,\omb,\eta,\atrchb, \etab+\ze$.  The equation of $\wtb$ follows from  the one for 
 $\trchb $, $e_3(s)=1$, $s<0$ and direct computations.
 \end{proof}
\end{proposition}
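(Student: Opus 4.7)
The plan is to start from the general null structure and null Bianchi equations in a non-integrable horizontal null frame, as stated in Propositions 2.2.5 and 2.2.6 of \cite{GKS}, and then impose the five gauge identities that characterize the PT frame in our setting, namely
\begin{equation*}
\omb = 0, \qquad \xib = 0, \qquad \eta = 0, \qquad \atrchb = 0, \qquad \etab + \ze = 0.
\end{equation*}
The first two come from the fact that $e_3 = \eg_3$ is geodesic; the third is the defining property of the PT frame; and the last two are proved in Proposition \ref{Prop:PTframe-properties}. Each claimed equation in the proposition is then obtained by copying the corresponding general $\nab_3$ or $\nab_4$ equation from \cite{GKS} and striking out the terms that involve these vanishing quantities.

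Concretely, I would walk through the structure equations one by one. For instance, the general equation for $\nab_3\trchb$ carries extra contributions of the form $2\xib\cdot\eta$, $2\omb\trchb$ and $2\div\xib$ that all vanish here; the equation for $\nab_3\chibh$ drops its $\nab\hot\xib$, $\omb\chibh$, and $\xib\hot\eta$ contributions; the $\nab_3\zeta$ equation loses its $2\omb\ze$ and $\xib\cdot\chih$ terms and $\etab$ gets replaced by $-\ze$ wherever it appears. The transport equations for $\om$ and $\xi$ simplify via $\etab = -\ze$ so that several $\eta-\etab$ combinations collapse. The same mechanism handles the Bianchi pairs: terms such as $\xib\cdot\a$, $\omb\,\b$, $\xib\hot\bb$, or $\eta\cdot(\cdots)$ disappear and $\etab$ is consistently replaced by $-\ze$, yielding the simplified right-hand sides listed in the statement.

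For the $\wtb$ equation I would just substitute $\trchb = \wtb - 2/|s|$ into $\nab_3\trchb = -\tfrac12(\trchb)^2 - |\chibh|^2$ and use $e_3(s) = 1$ with $s < 0$ so that $\nab_3(2/|s|) = 2/|s|^2$. Expanding $(\trchb)^2 = (\wtb)^2 - 4\wtb/|s| + 4/|s|^2$ and rearranging gives exactly
\begin{equation*}
\nab_3\wtb + \trchb\,\wtb = \tfrac12(\wtb)^2 - |\chibh|^2,
\end{equation*}
as claimed.

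The only real point of care in this verification is to make sure that in the non-integrable setting of \cite{GKS} no cross terms involving $\atrch$, $\chih$, or $\ze$ are inadvertently dropped alongside the vanishing quantities. In particular, one must keep the $-\tfrac12\atrch\,\dual\ze$ term in the $\nab_3\xi$ equation and the $\atrch\,\dual\b$, $\atrch\,\dual\rho$, $\atrch\,\dual\aa$, $\atrch\,\dual\bb$ terms in the Bianchi equations, since the PT frame is integrable in the $e_3$ direction ($\atrchb = 0$) but not in $e_4$ ($\atrch \neq 0$). Once this bookkeeping is respected, the proposition follows directly from \cite{GKS}, as the authors assert.
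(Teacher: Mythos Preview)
Your proposal is correct and follows essentially the same approach as the paper: specialize the general null structure and Bianchi identities of \cite{GKS} (Propositions 2.2.5--2.2.6) using the vanishing of $\xib,\omb,\eta,\atrchb,\etab+\ze$, and derive the $\wtb$ equation by direct substitution using $e_3(s)=1$ and $s<0$. Your write-up simply makes the bookkeeping explicit where the paper leaves it as ``immediate consequence.''
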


\subsection{Commutation lemma}
We rely on the    general commutation     Lemma, see  section   2.2.7 in \cite{GKS},  to derive the following.
\begin{lemma}
 \lab{le:comm-PTframe1}
 With respect to the PT  frame  we have, for a general  $k$-horizontal  tensorfield $\psi_A=\psi_{a_1\cdots a_k}$,
  \bea
  \label{comm:PTframe1}
  \bsplit
    \,    [\nab_3,\nab_b]\psi_A&=-\chib_{bc}\nab_c\psi_A-\zeta_b\nab_3\psi_A-\sum_{i=1}^k \in_{a_i c}\dual\bb_b \psi_{a_1\cdots\, \, \, \cdots a_k}^{\quad\, \, \, \, c},\\
    \,  [\nab_3,\nab_4]\psi_A&=-2\etab_b \nab_b \psi_A+2\sum_{i=1}^k \in_{a_i b}\dual\rho\, \psi_{a_1\cdots\, \, \, \cdots a_k}^{\quad\, \, \, \, c}-2\om\nab_3 \psi_A,\\
     \,   [\nab_4,\nab_b]\psi_A&=-\chi_{bc}\nab_c \psi_A+\sum_{i=1}^k \Big(\chi_{b a_i}\etab_c-\chi_{bc}\etab_{a_i}\Big)\psi_{a_1\cdots\, \, \, \cdots a_k}^{\quad\, \, \, \, c}+\xi_b\nab_3\psi_A\\
            &\quad +\sum_{i=1}^k \Big(\chib_{b a_i}\xi_c-\chib_{bc}\xi_{a_i}+\in_{a_i c}\dual\b_b\Big)\psi_{a_1\cdots\, \, \, \cdots a_k}^{\quad\, \, \, \, c}.
            \end{split}
      \eea
      Moreover
      \bea
       \label{comm:PTframe2}
      \bsplit
         \,   [\nab_a,\nab_b]\psi_A&=\frac 12\atrch \nab_3\psi_A   \in_{ab}        +\Kh   \,   \sum_{i=1}^k \in_{a_i c} \left(g_{a_i a}g_{cb}-g_{a_i b}g_{ca}\right)  \psi_{a_1\cdots\, \, \, \cdots a_k}^{\quad\, \, \, \, c},\\
          \Kh&: =-\frac 14\trch\trchb+\frac 12 \chih\cdot\chibh-\rho.
          \end{split}
      \eea
         \end{lemma}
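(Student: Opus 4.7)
The plan is to specialize the general commutation identity of Section 2.2.7 in \cite{GKS} — which expresses $[\nab_X,\nab_Y]$ acting on a horizontal tensor in an arbitrary null frame — to our PT setting by imposing the vanishing conditions $\omb=\xib=\eta=\atrchb=0$ together with the identity $\etab=-\ze$ from Proposition \ref{Prop:PTframe-properties}; the asserted formulas then emerge simply by collecting the surviving terms.

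First, for $[\nab_3,\nab_b]\psi_A$ the generic expression contains: a term $\xib_b\nab_4\psi_A$ (killed by $\xib=0$); Ricci corrections linear in $\eta$ (killed) and in $\etab$ (rewritten as $-\ze$); and a curvature piece from $R_{b3cd}$ acting on each horizontal slot $a_i$. After substitution only $-\chib_{bc}\nab_c\psi_A-\ze_b\nab_3\psi_A$ survives from the derivative portion, while the null decomposition of $R$ leaves precisely the $\dual\bb$-contribution on each index. Exactly the same recipe yields $[\nab_3,\nab_4]\psi_A$: the would-be coefficient of $\nab_4$ is proportional to $\omb$ and vanishes, the coefficient of $\nab_b$ equals $-2\etab_b=2\ze_b$, the $-2\om\nab_3$ term is standard, and the $(e_3,e_4)$-plane curvature component reduces to $\dual\rho$. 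For $[\nab_4,\nab_b]\psi_A$ only the $\eta$-terms vanish — the gauge leaves $\xi$ (and hence $\xi_b\nab_3\psi_A$) present, together with the $\chi$--$\etab$ combinations and the $\dual\b$ curvature contribution on each horizontal index.

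For the tangential commutator $[\nab_a,\nab_b]\psi_A$ in \eqref{comm:PTframe2}, the extra piece $\tfrac 12\atrch\,\in_{ab}\nab_3\psi_A$ encodes the non-integrability of $\HH$. Expanding $[e_a,e_b]$ in the null frame produces $\tfrac 12(\chi_{ab}-\chi_{ba})e_3+\tfrac 12(\chib_{ab}-\chib_{ba})e_4$ plus a horizontal part; using $\chi_{ab}-\chi_{ba}=\atrch\in_{ab}$ together with $\chib_{ab}-\chib_{ba}=\atrchb\in_{ab}=0$ delivers the asserted $\nab_3$-contribution with no $\nab_4$ counterpart, consistent with $\HH$ being integrable along $e_3$. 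The horizontal Gauss-curvature identification $\Kh=-\tfrac 14\trch\trchb+\tfrac 12\chih\cdot\chibh-\rho$ is then the standard Gauss equation for the horizontal section, taken verbatim from \cite{GKS}.

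The main difficulty is entirely a bookkeeping one: one has to match the sign and index placement of every curvature contribution $R_{\cdots}\psi^{a_i}$ under the null decomposition of the Riemann tensor into $\a,\b,\rho,\rhod,\bb,\aa$, and to check that the dual $\in$-tensors (together with the conventions for $\dual\b$, $\dual\bb$, $\dual\rho$) align on the right-hand sides exactly as stated. None of this requires estimation or perturbative arguments; once the general lemma is in hand and the PT identities are substituted, the result follows immediately.
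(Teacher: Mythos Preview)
Your approach is correct and essentially identical to the paper's: the authors simply cite the general commutation lemma (Lemma 2.2.7 in \cite{GKS}) for \eqref{comm:PTframe1} and Proposition 2.1.45 in \cite{GKS} for \eqref{comm:PTframe2}, and then specialize by imposing the PT-frame vanishing conditions $\xib=\omb=\eta=\atrchb=0$, $\etab+\ze=0$. Your write-up is more detailed than the paper's two-line proof, but the method is the same.
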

\begin{proof}
The commutation formulas  \eqref{comm:PTframe1}  follow  immediately   from   Lemma  2.2.7 in \cite{GKS} while \eqref{comm:PTframe2}  follows from  Proposition 2.1.45 in \cite{GKS}.
In both cases  we take into account  the  vanishing of  the quantities  $\xib,\omb,\, \eta,\atrchb,\, \etab+\ze$ in  our  PT frame.
\end{proof}

\section{Precise version of  the Main Theorem}
\label{section:Main Thm-precise}
Throughout the remaining of the paper, we use $\{e_3,e_4,e_a\}$ to denote the PT frame, and $\{\eg_3,\eg_4,\eg_a\}$ to denote the PG frame. We may also denote the $\eg$ frame simply by $e'$. We  denote the corresponding horizontal derivative $\nab$ and $\nabg$ (or $\nab'$).  We  shall also denote     $\d=(\nab,\nab_3)$.


\subsection{Main Norms}


We    introduce our basic   integral norms  on $\MM$.     All Ricci and curvature coefficients  are  defined  with respect to the   PT frame   but may be integrated along the $S(u, s)$ spheres of the    associated geodesic foliation. Thus, for example, we define
\bea
\bsplit
\RR^S_{k} &:=  \frac{|s|^\frac 72}{\delta^\frac 52 a}||s^k\d^k\ab||_{L^2(S_{\ub,s})}+\frac{|s|^3}{\delta^2 a^\frac 32}||s^k\d^k\bb||_{L^2(S_{\ub,s})}+\frac{|s|^2}{\delta a}||s^k\d^k(\rho,\rhod)||_{L^2(S_{\ub,s})}\\
& +\frac {|s|}{a^\frac 12} ||s^k\d^k\b||_{L^2(S_{\ub,s})}+\frac{1}{\delta^{-1} a^\frac 12} ||s^k\d^k\a||_{L^2(S_{\ub,s})}
\end{split}
\eea
Also, with $ \trchbc:= \trchb+\frac{2}{|s|}$,
\bea
\bsplit
\mathcal O^S_{k}&:=\frac{1}{a^\frac 12}||s^k\d^k\chih||_{L^2(S_{\ub,s})}+||s^k\d^k\trch||_{L^2(S_{\ub,s})}+\frac{|s|^\frac 12}{\delta^\frac 12 a^\frac 12}||s^k\d^k\om||_{L^2(S_{\ub,s})}\\
& +\frac{|s|^{-\frac 12}}{\delta^{-\frac 12}a^\frac 12}||s^k\d^k\xi||_{L^2(S_{\ub,s})}+\frac{|s|}{\delta a^{\frac 12}}||s^k\d^k(\zeta,\chibh,\widecheck{\trchb})||_{L^2(S_{\ub,s})}+\frac{1}{\delta a^\frac 12}||s^k\d^k f||_{L^2(S_{\ub,s})}
\end{split}
\eea
along with a few $L^\infty$ norms
\begin{equation*}
    \OO_{k,\infty}^S:=\frac{|s|^2}{\delta a^\frac 12}||s^k \d^k (\chibh,\wtb,s^{-1} f)||_{L^\infty(S_{\ub,s})}.
\end{equation*}
We also define the energy type norms ($\Sigma_{\tau;\ub}$  refers to  the part of $\Sigma_\tau$ in the past of $\Hb_{\ub}$ and in the future of $\Hb_0\cup H_{-1}$)
\begin{equation*}
    \begin{split}
    \mathcal R_{k,2}&=\de^\frac 12 a^{-\frac 12}\left(||s^k\d^k\a||_{L^2(\Sigma_{\tau;\ub})}+||s^k\d^k\b||_{L^2(\underline H_{\ub})}\right)+\de^{-\frac 12} a^{-\frac 12}||s(s^k\d^k)(\rho,^*\rho)||_{L^2(\underline H_{\ub})}\\
    &  +\de^{-\frac 32} a^{-1} ||s^2 s^k\d^k\bb||_{L^2(\underline H_{\ub})}+\de^{-\frac 52} a^{-\frac 32}||s^3s^k\d^k\underline\a||_{L^2(\underline H_{\ub})}.
    \end{split}
\end{equation*}
We also use $\mathcal R_k[\psi]$ to denote the $\psi$-part of the $\RR_{k,2}$ norm, e.g., we denote $\mathcal R_k[\b]:=\delta^\frac 12 a^{-\frac 12}||s^k\nab^k\b||_{L^2(\hbub)}$.

We also make  use of the compound norms
\bea
\lab{eq:compundnorms}
  \o_{\le N}:=\sup_{k\leq N}\o_k^S, \qquad \RR_{\le N}:=\sup_{k\leq N}\RR_{k,2}, \qquad \RR^S_{\le N-1}=\sup_{k\leq N-1}\RR_{k}^S
  \eea
or simply $\OO$ and $\RR$  when there is no possible confusion.

\subsection{Initial data on \texorpdfstring{$H_{-1}$}{H(-1)}}
First, we note that the PT  frame\footnote{Recall that the PT  and   the PG  frames coincide on $H_{-1}$,  see Remark \ref{remark:PG=PT-onH_0}.}   on $H_{-1}$    coincides with the double null frame used in previous works.  One can thus easily compare  our  conditions with those of   \cite{Chr1} and \cite{AnLuk}.

\begin{proposition}
\label{Prop:InitailData-H_{-1}}
Assume that  the   short pulse condition \eqref{eq:chih_0-upperbound} holds  true for  some $N_0\ge 9$ and that the data on $H_{-1}\cap \Hb_0$ is Minkowskian. Then on $H_{-1}$,   as well as  in a local existence region\footnote{Note that 
$\atrch=0$ initially on $H_{-1}$   but not  in a non-trivial local existence region in the PT  frame.}, for   all  $i\leq N$, with  $\d=(\nab,\nab_3)$.
\bea
\bsplit
  \d^i(\xi,\chih)&\sim a^\frac 12,\quad \d^i(\trch,\atrch)\sim 1,\quad \d^i\om\sim \delta^\frac 12 a^\frac 12,\quad \d^i(\zeta, \chibh, \wtb)\sim \delta a^\frac 12\\
   \d^i\a&\sim \delta^{-1} a^\frac 12,\quad \d^i\b \sim a^\frac 12,\quad \d^i(\rho,\rhod)\sim \delta a,\quad \d^i\bb\sim \delta^2 a,\quad \d^i\aa\sim \delta^3 a^\frac 32,
\end{split}
\eea
\end{proposition}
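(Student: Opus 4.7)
\emph{Proof proposal.} The plan is to establish all the bounds on $H_{-1}$ first, where the PT frame coincides with the geodesic frame (equivalently, the double-null frame of \cite{AnLuk}) thanks to the initial condition $f=0$ on $H_{-1}$, and then to propagate them into a thin slab by integrating the $\nab_3$-transport system of Proposition~\ref{Prop:transport.e_3} together with the equation $\nab_3 f=2\ze-\frac12\trchb f-\chibh\cdot f$ of Proposition~\ref{Prop:PTframe-properties} for the PT transition function $f$.

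On $H_{-1}$ the geodesic gauge gives $\om=\xi=0$, $\etab=-\ze$, $\atrch=0$, and Minkowskian data on $S_{-1,0}$. The $\nab_4$-structure and Bianchi equations for the remaining Ricci coefficients $\chih,\trch,\chibh,\trchb,\ze$ and the curvature components $\a,\b,\rho,\rhod,\bb,\aa$ form a coupled ODE system along the $e_4$-generators, parameterized by $\ub\in[0,\delta]$, and can be closed by a standard bootstrap with the ansatz from the statement in the following order: (i) the short-pulse bound \eqref{eq:chih_0-upperbound} directly gives $\nab^i\chih\sim a^{1/2}$ and $\nab_4\chih\sim\delta^{-1}a^{1/2}$, which in turn yields $\a\sim\delta^{-1}a^{1/2}$ from $\nab_4\chih+\trch\chih=-\a$; (ii) Raychaudhuri keeps $\trch\sim 1$ since the source $|\chih|^2\sim a$ integrated over length $\delta$ produces only an $a\delta\ll 1$ correction to the Minkowskian value $2$; (iii) Bianchi $\nab_4\b-\div\a=\cdots$ gives $\b\sim a^{1/2}$; (iv) the $\nab_4$-equations for $\ze,\chibh,\trchb$, sourced by $\chih$ and $\b$, produce $\ze,\wtb,\chibh\sim\delta a^{1/2}$ after $\ub$-integration from zero; (v) Bianchi for $\rho,\rhod$ then yields $(\rho,\rhod)\sim\delta a$; (vi) continuing down the Bianchi chain with $\nab_4\bb$ and $\nab_4\aa$ produces $\bb\sim\delta^2 a$ and $\aa\sim\delta^3 a^{3/2}$.

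Higher angular derivatives are obtained by commuting each equation with $\nab^i$ and inducting in $i$ via Lemma~\ref{le:comm-PTframe1}; the anomalously large $\chih$ and $\a$ always appear in nonlinear terms paired with small factors in $\delta$, which is the signature-based pairing highlighted in feature~3 of the introduction, and this is what preserves the hierarchy. For transversal $\nab_3$ derivatives, which are not intrinsic to $H_{-1}$, the idea is to read them off directly from the $\nab_3$-transport and Bianchi equations of Proposition~\ref{Prop:transport.e_3} evaluated on $H_{-1}$: the right-hand sides there involve only quantities whose angular estimates have just been established, so mixed $\d^i=(\nab,\nab_3)^i$ bounds follow inductively by combining these $\nab_3$-equations with the commutator identities of Lemma~\ref{le:comm-PTframe1}.

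Finally, to extend the bounds to a local existence region, I would set up a short-time bootstrap on a slab $\{-1\leq s\leq -1+s_0,\ 0\leq\ub\leq\delta\}$ with the schematic sizes of the statement, integrating the $e_3$-system of Proposition~\ref{Prop:transport.e_3} and the $f$-equation in the $e_3$-direction from data on $H_{-1}$; for $s_0$ small enough, the sizes persist, and $\xi$ and $\atrch$ — which vanish on $H_{-1}$ — grow at most linearly in $s_0$ into the ranges stated. The main obstacle is precisely this: the anomalously large $\chih$ (and the freshly produced $\xi$) appearing on the right-hand side of the $e_3$-equations for the other $\Gac$-coefficients threaten to spoil the $\delta$-hierarchy. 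One must therefore follow the triangular order of the $e_3$-system dictated by signature, as in feature~4 of the introduction, so that $\chih$ and $\xi$ are always contracted against a factor carrying the requisite $\delta$ smallness, and the bootstrap closes.
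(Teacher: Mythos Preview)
Your approach is essentially the same as the paper's: integrate the $\nab_4$-structure and Bianchi equations along $H_{-1}$ from Minkowskian data to obtain the undifferentiated bounds in the stated order, commute with $\nab^i$ via Lemma~\ref{le:comm-PTframe1} for angular derivatives, and then read off $\nab_3$-derivatives from the $e_3$-transport system of Proposition~\ref{Prop:transport.e_3}.

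There is one gap in your third step. Proposition~\ref{Prop:transport.e_3} contains no $\nab_3$-equation for $\aa$; the only Bianchi identity for $\aa$ listed there is in the $\nab_4$-direction. So $\nab_3\aa$ (and hence any $\d^i\aa$ involving a $\nab_3$) cannot simply be ``read off'' as you propose. The paper handles this by commuting the $\nab_4\aa$ Bianchi identity with $\nab_3$ and integrating the resulting equation for $\nab_4(\nab_3\aa)$ along $H_{-1}$:
\[
\nab_4(\nab_3\aa)+\nab\hot(\nab_3\bb)=[\nab_4,\nab_3]\aa+[\nab\hot,\nab_3]\bb+\nab_3\big(\Gac_b\cdot\aa\big)+\nab_3\big(\Gac_g\cdot(\bb,\rho,\rhod)\big).
\]
Every term on the right is controlled by estimates already established (either at lower $\d$-order or by the $\nab_3$-equations for $\bb,\rho,\rhod$), so $\nab_3\aa\sim\delta^3 a^{3/2}$ follows by $\ub$-integration from zero initial data on $S_{-1,0}$. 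With this one additional ingredient, your argument is complete and matches the paper's.
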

\begin{proof}

To start with one can deduce, using  the analogues of  Proposition \ref{Prop:transport.e_3} in the $e_4$ direction\footnote{The $e_4$-equations on $H_{-1}$, where the foliation is geodesic, are similar to the ones in Proposition \ref{Prop:transport.e_3} by the substitution $e_3\to e_4$, $\chi\to \chib$, $\chib\to\chi$, $\xi\to \xib$, $\om\to \omb$, $\zeta\to -\zeta$ with potential loss of derivatives (this is like a PG frame rather than a PT frame), which does not matter on $H_{-1}$. Similar for curvature components and note that $\rhod\to -\rhod$.},  the  bounds
\beaa
\bsplit
  \chih&\sim a^\frac 12,\quad \trch\sim 1,\quad   \zeta\sim \delta a^\frac 12,\quad \chibh\sim \delta a^\frac 12,\quad \trchb+2\sim \delta a^\frac 12,\\
        \a &\sim \delta^{-1} a^\frac 12,\quad \b \sim a^\frac 12,\quad (\rho,\rhod)\sim \delta a,\quad \bb\sim \delta^2 a,\quad \aa\sim \delta^3 a^\frac 32.
\end{split}
\eeaa
We   can then show, using the commutation  Lemma \ref{le:comm-PTframe1}   that the same asymptotic conditions   hold true   for   the angular derivatives  $\nab$ of these components. 
The same bounds for the $\nab_3$          derivatives hold also true- they can be easily deduced from transport equations in Proposition  \ref{Prop:transport.e_3}. Indeed, all   quantities  except $\aa$,  verify  a $\nab_3$ equation.
Estimates   for $\nab_3\ab$    can be  derived by   integrating   on $H_{-1}$ of   the equation     for $\nab_4(\nab_3 \aa)  $ obtained      by commuting        the $\nab_4\ab$   Bianchi identity   with $\nab_3$. Schematically, 
\begin{equation*}
    \nab_4 \nab_3\ab+\nab\hot\nab_3\bb=[\nab_4,\nab_3]\ab+[\nab\hot,\nab_3]\bb+\nab_3(\Gac_b\cdot \ab)+\nab_3(\Gac_g \cdot (\bb,\rho,\rhod)).
\end{equation*}
For details  we  refer the reader to \cite{Chr1}.
\end{proof}

\subsection{Main Theorem (second version)}

\begin{theorem}
\label{theorem:main'}
    Consider the characteristic initial value problem described above, with the data on $\Hb_0$ Minkowskian or a perturbation satisfying $\RR_{k,2}|_{\ub=0}\lesssim 1$ for $k\leq N$.
    \begin{enumerate}
    \item  If \eqref{eq:chih_0-upperbound} holds, then the spacetime can be extended to $\MM(\delta,a;-\frac 18 a\delta)$ such that  the following estimate hold true for all $k\le N$, with a sufficiently large constant $C>0$,
    \bea
    \lab{eq:MaiThem-etsimate}
    \RR_{k,2}\leq C,\qquad  \o^S_{k}\leq C.
\eea
   \item   Moreover, if \eqref{eq:chih_0-lowerbound} also holds, then $S_{\delta,-\frac 14 a\delta}$ is a trapped surface.
   \end{enumerate}
\end{theorem}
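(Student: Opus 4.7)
The plan is to run a bootstrap on $\MM(\delta, a; \tau^\ast)$. Fix a sufficiently large constant $C \gg C_0$ and let $\tau^\ast$ be the supremum of values in $(-1, -\tfrac{1}{8} a\delta]$ for which the spacetime extends, the PT frame is well defined, and the bootstrap inequalities $\OO^S_{\le N}, \OO^S_{k, \infty}, \RR_{\le N} \le 2C$ hold. Initial data is provided by Proposition \ref{Prop:InitailData-H_{-1}} on $H_{-1}$ and by the assumed bound $\RR_{k,2}|_{\ub=0} \lesssim 1$ on $\Hb_0$, so the bootstrap set is nonempty by a short local existence argument in the $e_3$ direction. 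The goal is then to improve all bootstrap constants to $\le C$, at which point standard continuation gives $\tau^\ast = -\tfrac{1}{8} a\delta$ and statement (1); statement (2) follows from a separate direct computation exploiting \eqref{eq:chih_0-lowerbound}.

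\textbf{Improvement of Ricci and frame-transformation estimates.} Each $\psi \in \Gac$ satisfies an $e_3$-transport equation of the schematic form \eqref{eq:model-transport} coming from Proposition \ref{Prop:transport.e_3}, with $\lambda = \tfrac{1}{2}$ for the bad coefficients $\Gac_b$, $\lambda = 1$ for the good coefficients $\Gac_g$, and weights tailored to the postulated sizes for the anomalous $\chih, \xi$. Multiplying by $|s|^{2\lambda}$ and integrating from $s=-1$ gives pointwise bounds, provided the nonlinear source is under control. Here the signature hierarchy of item 3 in the introduction is essential: the sources for $\Gac_g$ never contain $\Gac_g \cdot \Gac_g$ and those for $\Gac_b$ never contain $\Gac_g \cdot \Gac_b$, so the estimates close in the triangular order $(\trchbc, \chibh) \to (\zeta, \om, \trch, \atrch) \to (\chih, \xi)$. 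The refinement $|\zeta| \lesssim \delta a^{1/2}/|s| + \delta^{3/2} a/|s|^{5/2}$ (item 6) is essential, because inserting it into the $f$ equation \eqref{eq:PTframe-property2} yields $\|f\|_{L^\infty} \lesssim \delta a^{1/2}/|s|$ without any logarithmic loss. With such an $f$ in hand the transformation in Lemma \ref{Lemma:Generalframetransf} is perturbative, so Hodge and Sobolev estimates are first obtained on the integrable PG spheres and then transferred back to the PT frame. Higher derivatives are obtained by commuting transport equations and Bianchi identities with $|s|\d = (|s|\nab, |s|\nab_3)$ via Lemma \ref{le:comm-PTframe1}; the only subtle commutator is $[\nab_4, \nab]\aa$, whose $\xi \nab_3 \aa$ term is handled by commuting the $\nab_3 \aa$ Bianchi identity with $\nab$ and estimating $\nab_3\nab \aa$ directly (item 8).

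\textbf{Curvature estimates and trapped-surface formation.} The norms $\RR_{\le N}$ are controlled by applying the divergence identity to each Bianchi pair \eqref{eq:Bianchi-pairs} over the region $\MM(\delta, a; \tau^\ast)$ bounded by $\Hb_0, H_{-1}, \Hb_{\ub}$ and $\Sigma_\tau$. The $\Hb_0$ contribution is absorbed by the assumed data bound, the $H_{-1}$ contribution by Proposition \ref{Prop:InitailData-H_{-1}}, and the bulk nonlinear terms by the Ricci bounds of the previous step together with the smallness of $a\delta$. For statement (2), I integrate the $\nab_3\trch$ equation of Proposition \ref{Prop:transport.e_3} along $\Hb_\delta$ from $s=-1$ to $s=-\tfrac14 a\delta$. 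Rewriting it as $\nab_3(|s|\trch) = -|s|\chibh\cdot\chih + 2|s|\rho - \tfrac{|s|}{2}\trchbc\trch$ and using the initial bound $\trch|_{S_{\delta,-1}} \le 2 - \inf_\theta \int_0^\delta |\chih_0|^2 d\ub \le 2-\delta a$, which comes from $e_4$-Raychaudhuri along $H_{-1}$ together with \eqref{eq:chih_0-lowerbound}, a careful accounting of the $s$-integral via the estimates of the previous step yields $\trch|_{S_{\delta,-\frac14 a\delta}} < 0$. Combined with $\trchb < 0$, which is immediate from $\nab_3\trchb = -|\chibh|^2 - \tfrac12(\trchb)^2$ and the initial condition $\trchb \approx -2$, this is exactly the trapped-surface condition.

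\textbf{Main obstacle.} The delicate step is the closure of the Ricci estimates for the anomalous coefficients $\chih$ and $\xi$: their sizes $\sim a^{1/2}/|s|$ exceed those of every $\Gac_b$ component, so products such as $\chih\cdot \Gac_b$ and $\xi\cdot\Gac_b$ sit exactly at the borderline of integrability and can only be tamed by pairing with factors of the correct signature. Together with the refined $\zeta$ bound needed to prevent a logarithmic loss in the frame-transformation function $f$, this is where the combination of the PT-gauge normalizations \eqref{eq:PTframe-property1} and the signature hierarchy becomes indispensable; once this triangular structure is unpacked, the remaining curvature energy estimates and the trapped-surface computation proceed along lines familiar from \cite{AnLuk} and \cite{An1}.
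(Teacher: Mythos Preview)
Your outline for part (1) is essentially the paper's own argument: bootstrap on $\MM(\delta,a;\tau^*)$, Ricci estimates via $e_3$-transport in a triangular order (including the refined $\zeta$ bound needed to close the $f$-equation without logarithmic loss), Sobolev inequalities transferred between the integrable PG and non-integrable PT frames, and weighted energy estimates on the Bianchi pairs. One minor imprecision: the $\xi\nab_3\aa$ term appearing in $[\nab_4,\nab]\aa$ is not handled by any ``$\nab_3\aa$ Bianchi identity'' (there is none), but rather by commuting all Bianchi pairs with $\d=(\nab,\nab_3)$ from the outset, so that $\nab_3\aa$ is already among the energy-controlled quantities.

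Your argument for part (2), however, has a real gap. You integrate $\nab_3(|s|\trch)=-|s|\chibh\cdot\chih+2|s|\rho-\tfrac{|s|}{2}\trchbc\,\trch$ along $\Hb_\delta$ from the initial value $\trch|_{S_{\delta,-1}}\le 2-\delta a$. But with only the bootstrap size bounds $|\chibh|\lesssim\delta a^{1/2}|s|^{-2}$, $|\chih|\lesssim a^{1/2}|s|^{-1}$, $|\rho|\lesssim\delta a|s|^{-3}$, the first two source terms each integrate to quantities of size $O(\delta a\cdot (a\delta)^{-1})=O(1)$ with \emph{undetermined sign}; since $2-\delta a\approx 2$, size alone cannot force $|s|\trch<0$ at $s=-\tfrac14 a\delta$. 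The paper offers two fixes. The direct one (Section 7) abandons the $e_3$-direction and integrates instead along the vectorfield $V$ with $V(s)=0$, $V(\ub)=1$; there $V(\trch)=-|\chih|^2+O(a^{1/2}|s|^{-2})$ via the $\nab_4\trch$ Raychaudhuri equation, the negative sign is manifest, and $\chih=\chih_0|s|^{-1}+O(|s|^{-1})$ plus \eqref{eq:chih_0-lowerbound} give the conclusion. The alternative (Section 8) does make your $e_3$-integration work, but only after deriving \emph{precise} leading-order expansions $\chibh\approx a^{1/2}(\int I)|s|^{-2}$, $\chih\approx a^{1/2}I|s|^{-1}$, $\rho\approx\tfrac12 a\big(I\cdot\!\int I-\int|I|^2\big)|s|^{-3}$, so that the combination $-\chibh\cdot\chih+2\rho$ has leading term $-a(\int|I|^2)|s|^{-3}$ with the correct sign; this requires an additional layer of linearized transport estimates beyond the bootstrap bounds you invoke.
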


\medskip

We prove the theorem by a continuity argument  based  on the following bootstrap assumption.\\
\NI
{\bf Bootstrap assumptions.} Assume that  for  some $\tau^*$ with $-1<\tau^*\leq -\frac 18 a\delta$, the following bounds hold  true for all $\ub$ and $s$ satisfying $\tau=\frac{1}{10}a\ub+s\leq \tau^*$, and for a sufficiently large constant $C_b$ to be chosen,
\bea
\label{eq:bootstrap}
\bsplit
    \o_{k,\infty}^S(\ub,s)&\leq C_b,\qquad \mbox{for all}\,\, k\leq [N/2]+1,\\
    \RR^S_{k}(\ub,s)&\leq C_b,\qquad \mbox{for all}\,\,  k\leq N-1,\\
     \o^S_{k}(\ub,s)+\RR_{k,2}(\ub)&\leq C_b,\qquad \mbox{for all}\,\,  k\leq N.
    \end{split}
\eea
The existence of such $\tau^*$ is ensured by  a standard, characteristic   local wellposedness  result, see for example \cite{Luk}.
We  improve the bootstrap assumption, summarized   in the sequence of steps below,    showing   that  the constant $C_b$ can be replaced by a universal constant depending 
only on the initial data and  the size of  $-\frac 18 a\delta$. The local existence result can  then be invoked  to  extend the existence region  to the whole  $\MM(a,\delta;-\frac 18 a\delta)$  while preserving the 
same bounds.

\subsubsection{Main Steps}
\begin{enumerate}
    \item In Section \ref{Section:Integrating-Ricci}, we integrate the $e_3$-transport equations in Proposition \ref{Prop:transport.e_3},  including the equation \eqref{eq:PTframe-property2}  for $f$,  and derive  $L^2$-estimates on the spheres $S_{\ub, s}$.
    \item   To pass from $L^2$ to $L^\infty$ estimates  we need to rely on a version of the Sobolev inequalities  which holds true for the   non-integrable    PT frame.
    This is done in   Section \ref{Section:Non-integrable-Sobolev}   by  going back and forth between the  PT frame and  the   integrable PG  frame\footnote{Where we can rely on the standard Sobolev inequalities in the geodesic frame.}.
    \item In Section \ref{Section:energy-estimates}, we derive the spacetime energy estimate for the null curvature components and close the bootstrap argument.
\end{enumerate}
\begin{remark}
\lab{remark:size-tau_*}  We note that since $0\leq \ub \leq \delta$, the size of $|\tau|$ and $|s|$ are always comparable. In particular, we always have $|s|\geq \frac 18 a\delta$. The reason to introduce $\tau$ is to give an achronal boundary of the spacetime to derive the energy estimates.
\end{remark}

\section{Sobolev estimates in a non-integrable frame}\label{Section:Non-integrable-Sobolev}
Recall that we denote by $\SS$  the horizontal structure given by the PG  frame and by $\HH$ the  one of the PT frame. For simplicity of notation we   use   $'$  rather than $\,^{(g)}$ to denote the quantities associated to the PG  frame  $\SS$.

\begin{lemma}
\lab{Lemma:Sobolev}
    Suppose that the bootstrap assumption \eqref{eq:bootstrap} holds. Then for an $\HH$-horizontal covariant tensor $\psi$, we have the estimate\footnote{Throughout this work, the implicit constant implied by the symbol ``$\lesssim$" is independent of bootstrap constants $C_b$ and $\OO$, $\RR$.}
    \begin{equation}\label{eq:non-integrable-Sobolev-estimate}
        ||\psi||_{L^\infty (S_{\ub,s})}\lesssim \sum_{i\leq 2} ||s^i \nab^i \psi||_{L^2(S_{\ub,s})}+a^{-\frac 12} ||(s\d)^{\leq 2} \psi||_{L^2(S_{\ub,s})}.
    \end{equation}
    for any $S_{\ub,s}$ in $\MM(\delta,a;\tau^*)$.
    The right hand side can also simply be replaced by $\sum_{i\leq 2}||s^i\d^i\psi||_{L^2(S_{\ub,s})}$.
\end{lemma}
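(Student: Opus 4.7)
The plan is to reduce the claimed inequality to the standard Sobolev inequality on the $2$-sphere $S_{\ub,s}$ equipped with the integrable PG horizontal structure $\SS$, by transferring $\psi$ across the frame transformation \eqref{eq:PT-frame}. Since $\eg_a = e_a + \frac12 f_a\, e_3$ and $\psi$ is $\HH$-horizontal (i.e.\ it vanishes when any argument lies in the span of $e_3,e_4$), the components of $\psi$ in the two frames coincide:
\begin{equation*}
\psi'_{a_1\cdots a_k} := \psi(\eg_{a_1},\ldots,\eg_{a_k}) = \psi(e_{a_1},\ldots,e_{a_k}) = \psi_{a_1\cdots a_k}.
\end{equation*}
Consequently $|\psi|_\HH = |\psi'|_\SS$ pointwise, so $||\psi||_{L^p(S_{\ub,s})} = ||\psi'||_{L^p(S_{\ub,s})}$ for $p\in\{2,\infty\}$. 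Since $\SS$ is integrable and tangent to $S_{\ub,s}$, and the sphere geometry is controlled by the bootstrap \eqref{eq:bootstrap}, the standard Sobolev inequality on a $2$-sphere (proved in the usual way via the isoperimetric inequality) yields
\begin{equation*}
||\psi||_{L^\infty(S_{\ub,s})} = ||\psi'||_{L^\infty(S_{\ub,s})} \lesssim \sum_{i\leq 2}||s^i(\nabg)^i\psi'||_{L^2(S_{\ub,s})}.
\end{equation*}

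Second, I would compare $(\nabg)^i\psi'$ with $\nab^i\psi$. On scalar components $\eg_a\phi = e_a\phi + \frac12 f_a\, e_3\phi$; combined with the difference of the Christoffel symbols of the two connections, which is computable directly from the transformation formulas of Lemma \ref{Proposition:transformationRicci}, this yields schematically
\begin{equation*}
\nabg \psi' - \nab \psi = \tfrac12 f\cdot \nab_3\psi + (\nab f)\cdot \psi + O(f\cdot\Ga\cdot \psi),
\end{equation*}
with an analogous formula for $(\nabg)^2\psi'$ in terms of $\nab^{\leq 2}\psi$, $\nab_3\nab\psi$, $\nab_3^2\psi$, the first- and second-order derivatives of $f$, and the Ricci coefficients of the PT frame.

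Finally, I would absorb the correction terms using the bootstrap. The $L^\infty$ bound on $f$ contained in $\OO_{k,\infty}^S$ gives $|f|\lesssim \delta a^{\frac12}/|s|$, and the lower bound $|s|\geq \frac18 a\delta$ from Remark \ref{remark:size-tau_*} upgrades this to $|f|\lesssim a^{-\frac12}$. Hence a typical correction such as $|s|\cdot ||f\cdot \nab_3\psi||_{L^2}\lesssim a^{-\frac12}||s\,\nab_3\psi||_{L^2}$ is absorbed into the $a^{-\frac12}||(s\d)^{\leq 2}\psi||_{L^2}$ term on the right-hand side of \eqref{eq:non-integrable-Sobolev-estimate}. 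The corrections appearing at the second-derivative level, involving $\nab f$, $\d^2 f$, and the small Ricci coefficients ($\chih,\chibh,\ze,\ldots$), are treated the same way using the remaining parts of $\OO^S_{\leq N}$ together with $\delta a/|s|\lesssim 1$. The main obstacle is the bookkeeping in the second-order comparison: ensuring that no correction demands a derivative of $\psi$ outside $(s\d)^{\leq 2}\psi$ and that all Ricci-coefficient factors come with weights consistent with \eqref{eq:bootstrap}. The alternative form of the right-hand side follows immediately since $\nab\subset\d$.
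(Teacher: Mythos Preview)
Your proposal is correct and follows essentially the same route as the paper: transfer $\psi$ to the integrable $\SS$-frame via matching components, apply the standard sphere Sobolev inequality (with isoperimetric constant controlled by the bootstrap, as in the appendix), and compare $(\nabg)^{\leq 2}\psi'$ with $\nab^{\leq 2}\psi$ through the frame transition parameter $f$. One small inaccuracy: at first order the actual correction is $\nabg_b\psi' - \nab_b\psi = \tfrac12 f_b\nab_3\psi + f\cdot\chib\cdot\psi$, with no $(\nab f)\cdot\psi$ term (the derivative of $f$ only enters at second order); the main part $f\cdot\trchb\sim |s|^{-1}f\in\Gac_g$ is then absorbed exactly as you describe.
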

\begin{proof}
    Given an $\HH$-horizontal covariant tensor $\psi_A=\psi_{a_1\cdots a_k}$, we define the $\SS$-horizontal tensor $\tilde\psi_A=\tilde \psi_{a_1\cdots a_k}$ so that
\begin{equation*}
    \tilde\psi_{a_1\cdots a_k}:=\tilde\psi(e_{a_1}',\cdots,e_{a_k}')=\psi_{a_1\cdots a_k}
\end{equation*}
for any $a_1,\cdots,a_k$. To apply the Sobolev estimate, we wish to control $\nab'\nab'\tilde \psi$. We first compute $\nab'\psi$, which is an $\SS$-horizontal covariant $(k+1)$-tensor:
\begin{equation*}
\begin{split}
    \nab_b'\tilde\psi_{a_1\cdots a_k}&=e_b'(\tilde\psi_{a_1\cdots a_k})-\tilde\psi(D_{e_b'}e_{a_1}',\cdots,e_{a_k}')-\cdots-\tilde\psi (e_{a_1}',\cdots,D_{e_b'}e_{a_k}')\\
    &=e_b'(\psi_{a_1\cdots a_k})-\tilde\psi(D_{e_b'}e_{a_1}',\cdots,e_{a_k}')-\cdots-\tilde\psi (e_{a_1}',\cdots,D_{e_b'}e_{a_k}')\\
    &=(D_{e_b'}\psi)(e_{a_1},\cdots,e_{a_k})+\psi(D_{e_b'}e_{a_1},\cdots,e_{a_k})+\cdots+\psi(e_{a_1},\cdots,D_{e_b'}e_{a_k})\\
    &\quad -\tilde\psi(D_{e_b'}e_{a_1}',\cdots,e_{a_k}')-\cdots-\tilde\psi (e_{a_1}',\cdots,D_{e_b'}e_{a_k}')\\
    &=\nab_{e_b'}\psi_{a_1\cdots a_k}+\sum_{i=1}^k\left(g(D_{e_b'}e_{a_i},e_c)-g(D_{e_b'}e_{a_i'},e_c')\right)\psi_{a_1\cdots c\cdots a_k}\\
    &=\nab_{e_b'}\psi_{a_1\cdots a_k}+\sum_{i=1}^k\left(-\frac 12 f_{a_i}\chib_{bc}\psi_{a_1\cdots c\cdots a_k}+\frac 12 f_c \chib_{a_i b} \psi_{a_1\cdots c \cdots a_k}\right)=:(\nab\tilde\psi)_{ba_1\cdots a_k},
    \end{split}
\end{equation*}
where we used $\xib=0$ in the last step. Here we define $\nab\tilde\psi$ as an $\HH$-horizontal tensor (this is simply a notation). 
Then, applying the calculation above with $\tilde\psi$ replaced $\nab'\tilde\psi$, we obtain
\begin{equation*}
\begin{split}
    \nab_b'\nab_c'\tilde\psi_{a_1\cdots a_k}=\nab_{e_b'}(\nab\tilde\psi)_{c a_1\cdots a_k}+\sum_{i=0}^k(\nab\tilde\psi)_{a_0 a_1\cdots d\cdots a_k}\cdot \left(-\frac 12 f_{a_i}\chib_{bd}+\frac 12 f_d \chib_{a_i b}\right),
\end{split}
\end{equation*}
where $a_0:=c$. For the first term, by the definition of $\nab\tilde\psi$, we have
\begin{equation*}
\begin{split}
    \nab_{e_b'}(\nab\psi)_{c a_1\cdots a_k}&=\nab_{e_b'}\nab_{e_c'}\psi_{a_1\cdots a_k}+\sum_{i=1}^k \nab_{e_b'}\left(\Big(-\frac 12 f_{a_i}\chib_{bd}+\frac 12 f_d\chib_{a_i c}\Big)\psi_{a_1\cdots d\cdots a_k}\right)\\
    &=(\nab_b+\frac 12 f_b \nab_3)(\nab_c+\frac 12 f_c \nab_3)\psi_{a_1\cdots a_k}+\nab (f\cdot \chib\cdot \psi)\\
    &=\nab_b \nab_c \psi_{a_1\cdots a_k}+\d f\cdot \d\psi+f\cdot \d^{\leq 2}\psi+\nab(\Gac_g\cdot\psi)
    \end{split}
\end{equation*}
using that $|s|^{-1}f\in\Gac_g^1:=\{\chibh,\wtb,|s|^{-1}f\}$, and hence $f\cdot\chib=\Gac_g^1\cdot f+|s|^{-1}f\in \Gac_g^1$.

Therefore, using the bootstrap assumption $|s\nab\Gac_g^1|,|\Gac_g^1|\lesssim \o \delta a^\frac 12 |s|^{-2}$ for $\Gac_g^1$ and the Sobolev estimate on the sphere $S_{\ub,s}$ (see appendix for the proof), we derive, using $|s|\geq \frac 18 a\delta$,
\begin{equation*}
    \begin{split}
    ||\psi||_{L^\infty(S_{\ub,s})}&=||\tilde\psi||_{L^\infty(S_{\ub,s})}\lesssim ||(s\nab')^{\leq 2}\tilde\psi||_{L^2(S_{\ub,s})}\\
    &\lesssim ||s^2\nab_{e_b'}(\nab\tilde\psi)||_{L^2(S_{\ub,s})}+||s\nab\tilde\psi||_{L^2(S_{\ub,s})}||sf\cdot\chib||_{L^\infty(S_{\ub,s})}\\
    &\lesssim ||s^2\nab^2\psi||_{L^2(S_{\ub,s})}+||(s\nab)^{\leq 1}\psi||_{L^2(S_{\ub,s})} ||s(s\nab)^{\leq 1} \Gac_g||_{L^\infty(S_{\ub,s})}\\
    &\quad +||f||_{L^\infty(S_{\ub,s})}||(s\d)^{\leq 2}\psi||_{L^2(S_{\ub,s})}+||s\d f||_{L^\infty(S_{\ub,s})}||s\d\psi||_{L^2(S_{\ub,s})}\\
    &\lesssim \sum_{i=0}^2 ||s^i\nab^i \psi||_{L^2(S_{\ub,s})} (1+\OO \delta a^\frac 12 |s|^{-2})+\OO \delta a^\frac 12 |s|^{-2} ||(s\d)^{\leq 2}\psi||_{L^2(S_{\ub,s})}\\
    &\lesssim \sum_{i=0}^2 ||s^i\nab^i \psi||_{L^2(S_{\ub,s})}+a^{-\frac 12} ||(s\d)^{\leq 2}\psi||_{L^2(S_{\ub,s})},
    \end{split}
\end{equation*}
and the result follows.
\end{proof}
With this estimate, each bootstrap bound on $L^2(S_{\ub,s})$ norm implies a lower-order $L^\infty(S_{\ub,s})$ bound of the same quantity. We will make use of these $L^\infty$ bounds without mentioning the use of the Sobolev lemma.


\section{Estimate of Ricci coefficients in PT frame}\label{Section:Integrating-Ricci}


We denote $\Gamma$ as all possible Ricci coefficients in the PT frame. We also  set, with  $f$   introduced in Definition \ref{def:PT-frame}   and   verifying   \eqref{eq:PTframe-property2},
\begin{equation*}
    \Gac:=\{\xi,\chih,\trch,\atrch,\om,\zeta,\chibh,\widecheck{\trchb},|s|^{-1}f\}=  \Gac_a\cup \Gac_b\cup\Gac_g
\end{equation*}
 where
\bea
    \Gac_a=\{\xi,\chih\},\quad \Gac_b=\{\trch,\atrch,\om\},\quad \Gac_g=\{\zeta,\chibh,\wtb,|s|^{-1}f\}.
\eea
\begin{remark}Note that $\nab\trchb=\nab\widecheck{\trchb}+\nab(2s^{-1})=\nab\widecheck{\trchb}-2s^{-2}f=s^{-1}(s\nab\wtb-2s^{-1}f)$, so we see that while $\trchb$ is not in $\Gac_g$, $s\nab\trchb$ is schematically $s\nab\Gac_g+\Gac_g$ (higher orders are also similar).
\end{remark}


\subsection{Integrating the model transport equation}


We  rely on the  following  weighted integration lemma. This is similar to Proposition 5.5 in \cite{AnLuk}.
\begin{lemma}
\lab{Lemma:transport-3}
    Suppose that the bootstrap assumptions \eqref{eq:bootstrap} holds true in $\MM(\de, a, \tau_*)$.  Then for a $\HH$-horizontal covariant tensor field satisfying the equation
    \begin{equation*}
        \nab_3\psi+\lambda\trchb \psi=F,
    \end{equation*}
    we have,  for $\lambda_1=2(\lambda-\frac 12)$ and $s\leq -\frac 18 a\delta$,
    \begin{equation}\label{eq:integrate-model-zero-order}
        |s|^{\lambda_1}||\phi||_{L^2(S_{\ub,s})}\lesssim ||\phi||_{L^2(S_{\ub,-1})}+\int_{-1}^{s} |s'|^{\lambda_1} ||F||_{L^2(S_{\ub,s'})}\, ds'.
    \end{equation}
    We also have the higher-order estimates:
    \begin{equation}\label{eq:integrate-model-higher-order}
    \begin{split}
        ||s^{2\lambda-1+k}\d^k\psi||_{L^2(S_{\ub,s})}\lesssim ||s^{2\lambda-1+k}\d^k\psi||_{L^2(S_{\ub,-1})}+\sum_{i\leq k}\int_{-1}^s |s'|^{2\lambda-1+i} ||\d^i F||_{L^2(S_{\ub,s'})}\, ds'
    \end{split}
\end{equation}
\end{lemma}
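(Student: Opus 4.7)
The plan is to prove \eqref{eq:integrate-model-zero-order} by deriving a scalar $L^2$-identity on the PG-sphere $S_{\ub,s}$, and to obtain the higher-order estimate \eqref{eq:integrate-model-higher-order} inductively by commutation. For the zero-order bound, I contract the transport equation with $\psi$ to get $\nab_3|\psi|^2 + 2\lambda\trchb|\psi|^2 = 2\psi\cdot F$. Since $e_3 = \eg_3$ is geodesic and $\xib = 0$, the frame transformation formula in Lemma \ref{Proposition:transformationRicci} gives $\chib_{ab}^{(PT)} = \chib_{ab}^{(PG)}$; in particular $\atrchb = 0$ in the PG frame and $\trchb$ is the same in both frames, so the induced volume form on $S_{\ub,s}$ evolves along $e_3$ by $\partial_s\sqrt{g\!\!\!/} = \trchb\sqrt{g\!\!\!/}$. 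Combining gives
\begin{equation*}
\frac{d}{ds}\|\psi\|_{L^2(S_{\ub,s})}^2 = -(2\lambda - 1)\int_{S_{\ub,s}}\trchb\,|\psi|^2 + 2\int_{S_{\ub,s}}\psi\cdot F.
\end{equation*}
Splitting $\trchb = -2/|s| + \wtb$, the $-2/|s|$ part produces the exact integrating factor $|s|^{2\lambda_1}$ with $\lambda_1 = 2\lambda - 1$. The $\wtb$-contribution is a Gronwall perturbation: from the bootstrap \eqref{eq:bootstrap}, $\|\wtb\|_{L^\infty(S_{\ub,s})} \lesssim \delta a^{1/2}/|s|^2$, and on the domain $|s|\geq a\delta/8$ one computes $\int_{-1}^s\delta a^{1/2}/|s'|^2\,ds'\lesssim a^{-1/2}$. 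Applying Gronwall to $W(s) := |s|^{\lambda_1}\|\psi\|_{L^2(S_{\ub,s})}$, which after dividing through by $2W$ satisfies $W' \lesssim a^{-1/2}|s|^{-2}W + |s|^{\lambda_1}\|F\|_{L^2(S_{\ub,s})}$, yields \eqref{eq:integrate-model-zero-order}.

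For \eqref{eq:integrate-model-higher-order} I argue by induction on $k$, showing that $\d^k\psi$ satisfies a transport equation of the same type with $\lambda$ replaced by $\lambda + k/2$, after which the zero-order estimate applied with this new $\lambda + k/2$ delivers the weight $|s|^{2(\lambda+k/2)-1} = |s|^{\lambda_1 + k}$. Each commutation shifts $\lambda$ by exactly $1/2$. For $\nab$-commutation, Lemma \ref{le:comm-PTframe1} gives $[\nab_3,\nab_b]\psi_A = -\chib_{bc}\nab_c\psi_A - \zeta_b\nab_3\psi_A + \text{curv}\cdot\psi$; decomposing $\chib_{bc} = \tfrac{1}{2}\trchb\,\delta_{bc} + \chibh_{bc}$ (using $\atrchb = 0$) yields a $-\tfrac{1}{2}\trchb\,\nab_b\psi$ term that combines with the $\lambda\trchb\,\nab_b\psi$ coming from differentiating the equation to give an effective coefficient $\lambda + \tfrac{1}{2}$ on the LHS. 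For $\nab_3$-commutation, the structure equation $\nab_3\trchb = -\tfrac{1}{2}\trchb^2 - |\chibh|^2$ produces a $\tfrac{\lambda}{2}\trchb^2\psi$ source; rewriting $\lambda\trchb\psi = F - \nab_3\psi$ turns this into $\tfrac{1}{2}\trchb F - \tfrac{1}{2}\trchb\,\nab_3\psi$, and transposing the last term yields the same $\lambda + \tfrac{1}{2}$ coefficient. Iterating, $\d^k\psi$ satisfies
\begin{equation*}
\nab_3(\d^k\psi) + \bigl(\lambda + \tfrac{k}{2}\bigr)\trchb\,\d^k\psi = \mathrm{RHS}_k,
\end{equation*}
where $\mathrm{RHS}_k$ contains $\d^kF$ together with lower-order commutator and source terms, of the schematic form $\sum \d^i \Gamma\cdot\d^j\psi$ and $\sum \d^i(\text{curv})\cdot\d^j\psi$.

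The remaining work is to absorb $\mathrm{RHS}_k$ into the integral on the right of \eqref{eq:integrate-model-higher-order}. Each commutator term is either quadratic in the small Ricci coefficients $\Gac_g \sim \delta a^{1/2}/|s|^2$ or linear in a small curvature component, so by the Sobolev inequality of Lemma \ref{Lemma:Sobolev} together with the bootstrap \eqref{eq:bootstrap}, they fit into the $\sum_{i\leq k}|s'|^{\lambda_1 + i}\|\d^i F\|$-type right-hand side of \eqref{eq:integrate-model-higher-order}. The main obstacle to watch for is preventing logarithmic loss from accumulating in $k$: this requires that the Gronwall coefficient for the $\wtb$-perturbation remains uniformly $O(a^{-1/2})\ll 1$, which is precisely why the restriction $s\leq -\tfrac{1}{8}a\delta$ is imposed. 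A secondary subtlety is the $-\zeta_b\nab_3\d^{k-1}\psi$ term produced by $[\nab_3,\nab]$ at each inductive step: I substitute $\nab_3\d^{k-1}\psi$ directly from the inductive hypothesis equation, after which the smallness of $\zeta \in \Gac_g$ is exactly enough to handle the resulting term without losing a power of $|s|$.
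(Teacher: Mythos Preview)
Your proposal is correct and follows essentially the same route as the paper: the zero-order bound via the variation formula $\partial_s\int_{S_{\ub,s}}\phi = \int_{S_{\ub,s}}(e_3\phi + \trchb\phi)$ with the $\wtb$-remainder absorbed by Gronwall, and the higher-order bound by commuting with $\d$ to shift $\lambda\mapsto\lambda+\tfrac{k}{2}$. One small clarification: the commutator error terms $\d^{\le k-1}(\Gac_g\cdot\d\psi)$ and $\d^{\le k-1}(\bb^*\cdot\psi)$ do not ``fit into'' the $\|\d^i F\|$-integral on the right of \eqref{eq:integrate-model-higher-order}---they involve $\psi$, not $F$---and are instead absorbed by Gronwall exactly as you do for $\wtb$, after summing over $i\le k$ (the paper carries this out explicitly, with the same $O(a^{-1/2})$ integrated coefficient).
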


\begin{proof}
Note that  $e_3(|\psi|^2)=2\psi\cdot \nab_3\psi$. 
    We  can thus  make use  of the following    variation formula  for scalar functions $\phi$
    \begin{equation*}
        \pa_s\int_{S_{\ub,s}} \phi=\int_{S_{\ub,s}} e_3\phi+\trchb\phi
    \end{equation*}
    since $e_3=\pa_s$. Letting $\phi=|s|^{2\lambda_1}|\psi|^2$, we have (note that $s<0$)
       \begin{equation*}
        \begin{split}
            \Big|\pa_s \int_{S_{\ub,s}} |s|^{2\lambda_1} |\psi|^2\Big|&=\Big|\int_{S_{\ub,s}} -2\lambda_1|s|^{2\lambda_1-1}|\psi|^2+2|s|^{2\lambda_1}\psi\cdot \nab_3\psi+|s|^{2\lambda_1}\trchb|\psi|^2\Big|\\
            &=\Big|\int_{S_{\ub,s}} 2|s|^{2\lambda_1}\psi\cdot (-\lambda_1 |s|^{-1}\psi+\nab_3\psi)+|s|^{2\lambda_1}\trchb|\psi|^2\Big|\\
            &=\Big|\int_{S_{\ub,s}} 2|s|^{2\lambda_1} \psi\cdot (F-\lambda \trchb\psi+\lambda_1|s|^{-1}\psi)+|s|^{2\lambda_1}\trchb|\psi|^2\Big|\\
            &=|\int_{S_{\ub,s}} 2|s|^{2\lambda_1}\psi\cdot F-\lambda_1|s|^{2\lambda_1}\psi\cdot \wtb \psi\Big|\\
            &\leq 2\left(\int_{S_{\ub,s}} |s|^{2\lambda_1} |\psi|^2\right)^\frac 12 \left(\int_{S_{\ub,s}} |s|^{2\lambda_1} |F|^2\right)^\frac 12+\lambda_1 \o\delta a^\frac 12 |s|^{-2} \int_{S_{\ub,s}} |s|^{2\lambda_1} |\psi|^2,
        \end{split}
    \end{equation*}
    where we  made use   of the bootstrap assumption in the last step. Now, since\footnote{See Remark \ref{remark:size-tau_*} .} $|s|\geq \frac 18 a \delta$, so we have $\int_{-1}^s \o\delta a^\frac 12 |s'|^{-2} ds'\lesssim \OO a^{-\frac 12}\ll 1$, the estimate follows by integration using the  Gr\"{o}nwall  inequality. This finishes the proof of \eqref{eq:integrate-model-zero-order}.

    For the higher-order version, we need to commute the equation with $\d=(\nab,\nab_3)$. 
When we commute the equation with $\nab_3$, we have 
\begin{equation*}
    \nab_3\nab_3\psi+\lambda\trchb \nab_3\psi+\lambda\nab_3(\trchb)\psi=\nab_3 F,
\end{equation*}
so using $\nab_3\trchb=-|\chibh|^2-\frac 12(\trchb)^2$ and the original equation $\lambda \trchb \psi=F-\nab_3\psi$, we get
\begin{equation*}
    \nab_3(\nab_3\psi)+\lambda \trchb\nab_3\psi-\frac 12\lambda|\chibh|^2\psi+\frac 12 \trchb (\nab_3 \psi-F)=\nab_3 F,
\end{equation*}
i.e.,
\begin{equation*}
    \nab_3(\nab_3\psi)+\Big(\lambda+\frac 12\Big)\trchb\nab_3\psi=\nab_3 F+\frac 12 \trchb F+\frac 12 \lambda|\chibh|^2 \psi
\end{equation*}
The term $|\chih|^2\psi=s^{-1}(s|\chibh|^2)\psi$ and is of the type (in fact better) $s^{-1}\Gac_g \psi$.
Inductively, we get the schematic expression 
\begin{equation*}
    \nab_3 \nab_3^i \psi+\Big(\lambda+\frac i2\Big) \trchb \nab_3^i \psi=\nab_3^i F+\sum_{j=1}^{i} |s|^{-j}\nab_3^{i-j}(F+\Gac_g\cdot \psi).
\end{equation*}
The commutation with $\nab^i$, by the formula, gives
\begin{equation*}
    [\nab_3,\nab^i]\psi=-\frac i2 \trchb \nab^i\psi+\nab^{i-1}(\Gac_g\cdot  \d\psi)+\nab^{i-1}(\bb^*\cdot\psi)
\end{equation*}
hence
\begin{equation*}
    \nab_3\nab^i\psi+\Big(\lambda+\frac i2\Big)\trchb\nab^i\psi=\nab^i F+\nab^{i-1}(\Gac_g\cdot  \d\psi)+\nab^{i-1}(\bb^*\cdot\psi).
\end{equation*}
Therefore, to commute with $\d^i$, we can commute in either way for finite times, and get
\begin{equation*}
    \nab_3\d^i\psi+\Big(\lambda+\frac i2\Big) \trchb \d^i\psi=\sum_{j=0}^i |s|^{-j}\d^{i-j} F+\d^{\leq i-1}(\Gac_g\cdot  \d\psi)+\d^{\leq i-1}(\bb^*\cdot\psi).
\end{equation*}
Then, applying the integration lemma, we get (for simplicity, we denote here $S=S_{\ub,s'}$)
\begin{equation*}
\begin{split}
    ||&s^{2\lambda+i-1}\d^i \psi||_{L^2(S_{\ub,s})}\lesssim ||s^{2\lambda+i-1}\d^i\psi||_{L^2(S_{\ub,-1})}+\int_{-1}^s |s'|^{2\lambda+i-1} ||\d^i F||_{L^2(S)}\, ds'\\
    &\quad +\int_{-1}^s |s'|^{2\lambda+i-1} ||\d^i(\Gac_g \cdot \psi)||_{L^2(S)}+||\d^{i-1}(\bb^*\cdot\psi)||_{L^2(S)}\, ds'\\
    &\lesssim ||s^{2\lambda+i-1}\d^i\psi||_{L^2(S_{\ub,-1})}+\int_{-1}^s |s'|^{2\lambda+i-1} ||\d^i F||_{L^2(S)} ds'\\
    &\quad+\int_{-1}^s |s'|^{2\lambda-1}\sum_{\substack{i_1+i_2=i\\ i_2\leq i/2}} ||s'^{i_1}\d^{i_1}\Gac_g||_{L^2(S)}||s'^{i_2}\d^{i_2}\psi||_{L^\infty(S)}+||s'^{i_2}\d^{i_2}\Gac_g||_{L^\infty(S)}||s'^{i_1}\d^{i_1}\psi||_{L^2(S)}\\
    &\quad+\int_{-1}^s |s'|^{2\lambda}\sum_{\substack{i_1+i_2=i-1\\ i_2\leq i/2}} ||s'^{i_1}\d^{i_1}\bb||_{L^2(S)}||s'^{i_2}\d^{i_2}\psi||_{L^\infty(S)}+||s'^{i_2}\d^{i_2}\bb||_{L^\infty(S)}||s'^{i_1}\d^{i_1}\psi||_{L^2(S)}\\
    &\lesssim ||s^{2\lambda+i-1}\d^i\psi||_{L^2(S_{\ub,-1})}+\int_{-1}^s |s'|^{2\lambda+i-1} ||\d^i F||_{L^2(S)}\, ds'\\
    &\quad+\int_{-1}^s \mathcal O\frac{\delta a^\frac 12}{|s'|}\cdot \sum_{i_2\leq i/2} ||s'^{2\lambda-1}s'^{i_2}\d^{i_2}\psi||_{L^\infty(S_{\ub,s'})}+\mathcal O\frac{\delta a^\frac 12}{|s'|^2} \sum_{i_1\leq i}||s'^{2\lambda-1+i_1}\d^{i_1}\psi||_{L^2(S)}\, ds'\\
    &\quad+\int_{-1}^s \RR^S_{\leq i-1}[\bb] \delta^2 a^\frac 32 |s'|^{-3}\cdot |s'|\sum_{i_2\leq i/2}  ||s'^{2\lambda-1+i_2}\d^{i_2}\psi||_{L^\infty(S_{\ub,s'})} ds'\\
    &\quad+\int_{-1}^s \RR^S_{\leq i-1}[\bb] \delta^2 a^\frac 32 |s|^{-4}\cdot |s'| \sum_{i_1\leq i-1}||s'^{2\lambda-1}s^{i_1}\d^{i_1}\psi||_{L^2(S_{\ub,s'})}\, ds'.
\end{split}
\end{equation*}
Then, using the non-integrable Sobolev estimate, together with the bootstrap assumption \eqref{eq:bootstrap}, we have
\begin{equation*}
\begin{split}
    ||s^{2\lambda+i-1}&\d^i \psi||_{L^2(S_{\ub,s})}\lesssim  ||s^{2\lambda+i-1}\d^i\psi||_{L^2(S_{\ub,-1})}+\int_{-1}^s |s'|^{2\lambda+i-1} ||\d^i F||_{L^2(S_{\ub,s'})}\, ds'\\
    &+\int_{-1}^s \big(C_b \frac{\delta a^\frac 12}{|s'|^2}+C_b \delta^2 a^\frac 32 |s'|^{-3}\big) \sum_{j\leq i}||s'^{2\lambda-1+j}\d^{j}\psi||_{L^2(S_{\ub,s'})}\, ds'.
\end{split}
\end{equation*}
Since \[\int_{-1}^{-a\delta}  C_b\frac{\delta a^\frac 12}{|s'|^3}+C_b\delta^2 a^\frac 32 |s'|^{-2} ds'\lesssim C_b\, a^{-\frac 12}\ll 1,\]we can sum up $i=1,\cdots,k$ and use Gr\"{o}nwall's lemma to get
\begin{equation*}
    \begin{split}
        ||s^{2\lambda-1+k}\d^k\psi||_{L^2(S_{\ub,s})}\lesssim ||s^{2\lambda-1+k}\d^k\psi||_{L^2(S_{\ub,-1})}+\sum_{i\leq k}\int_{-1}^s |s'|^{2\lambda-1+i} ||\d^i F||_{L^2(S_{\ub,s'})}\, ds'
    \end{split}
\end{equation*}
which finishes the proof of \eqref{eq:integrate-model-higher-order}.
\end{proof}
\begin{remark}
    From the proof, it is clear that the same estimate holds when $F$ is replaced by $F+\Gac_g\cdot {\psi}$ in view of the bootstrap bounds of $\Gac_g$. 
\end{remark}

\subsection{Estimate of Ricci coefficients}
\begin{remark}   We make systematic use  of the $e_3$ transport  equations of Proposition \ref{Prop:transport.e_3} to which we apply the transport Lemma 
\ref{Lemma:transport-3}.
Without  further notice  we   estimate weighted $L^2(S_{\ub, s})$  expressions  of the form  $  ||s^i\d^i(\psi_1\cdot\psi_2)||_{L^2(S_{\ub,s})}$  by
\begin{equation*}
    \sum_{\substack{i_1+i_2=i\\ i_2\leq i/2}} ||s^{i_1}\d^{i_1}\psi_1||_{L^2(S_{\ub,s})}||s^{i_2}\d^{i_2}\psi_2||_{L^\infty(S_{\ub,s})}+||s^{i_2}\d^{i_2}\psi_1||_{L^\infty(S_{\ub,s})}||s^{i_1}\d^{i_1}\psi_2||_{L^2(S_{\ub,s})}.
\end{equation*}
\end{remark}
\begin{proposition}
    We have $||s^k\d^k\om||_{L^2(S_{\ub,s})}\lesssim \mathcal R_k[\rho]\, \delta^\frac 12 a^\frac 12 |s|^{-\frac 12}$, $k\leq N$.
\end{proposition}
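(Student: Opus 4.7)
From Proposition \ref{Prop:transport.e_3} the transport equation for $\om$ along $e_3$ is
\begin{equation*}
\nab_3\om = \rho + |\ze|^2,
\end{equation*}
which has the model form \eqref{eq:model-transport} with $\lambda = 0$ and source $F = \rho + |\ze|^2$. I would apply the higher-order integrating estimate \eqref{eq:integrate-model-higher-order} of Lemma \ref{Lemma:transport-3}; with $2\lambda - 1 + k = k - 1$ this produces
\begin{equation*}
||s^{k-1}\d^k\om||_{L^2(S_{\ub,s})} \lesssim ||s^{k-1}\d^k\om||_{L^2(S_{\ub,-1})} + \sum_{i\le k}\int_{-1}^s |s'|^{i-1}\bigl(||\d^i\rho||_{L^2(S_{\ub,s'})} + ||\d^i|\ze|^2||_{L^2(S_{\ub,s'})}\bigr)\, ds',
\end{equation*}
after which I multiply through by $|s|$ to pass to the desired $||s^k\d^k\om||_{L^2(S_{\ub,s})}$. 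The initial term is bounded by Proposition \ref{Prop:InitailData-H_{-1}} via $||\d^k\om||_{L^2(S_{\ub,-1})}\lesssim \delta^{1/2}a^{1/2}$; since $|s|\le 1$ on $\MM$, the resulting contribution $|s|\cdot\delta^{1/2}a^{1/2}$ is dominated by the target right-hand side $\delta^{1/2}a^{1/2}|s|^{-1/2}$.

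\textbf{Curvature term.} The key step is to match the $s'$-integral of $\rho$ with the Bianchi norm $\RR_k[\rho] = \delta^{-1/2}a^{-1/2}||s^{k+1}\d^k\rho||_{L^2(\Hb_{\ub})}$. I apply Cauchy--Schwarz in $s'$ with the weight split $|s'|^{i-1} = |s'|^{-2}\cdot |s'|^{i+1}$. The explicit computation $\int_{-1}^s |s'|^{-4}\, ds' = \tfrac{1}{3}(|s|^{-3} - 1) \sim |s|^{-3}$ then yields
\begin{equation*}
\int_{-1}^s |s'|^{i-1}||\d^i\rho||_{L^2(S_{\ub,s'})}\, ds' \lesssim |s|^{-3/2}\Bigl(\int_{-1}^s |s'|^{2(i+1)}||\d^i\rho||^2_{L^2(S_{\ub,s'})}\, ds'\Bigr)^{1/2} \lesssim |s|^{-3/2}\,\RR_i[\rho]\,\delta^{1/2}a^{1/2}.
\end{equation*}
After multiplication by $|s|$ this produces exactly $\RR_k[\rho]\,\delta^{1/2}a^{1/2}|s|^{-1/2}$, matching the claim.

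\textbf{Nonlinear term and main obstacle.} For $|\ze|^2$, the product rule with the Sobolev estimate of Lemma \ref{Lemma:Sobolev} and the bootstrap bounds on $\ze\in\Gac_g$ give $||s^i\d^i|\ze|^2||_{L^2(S_{\ub,s'})}\lesssim C_b^2\,\delta^2 a\,|s'|^{-3}$. Integrating against $|s'|^{i-1}$ and multiplying by $|s|$ yields a contribution $\lesssim C_b^2\,\delta^2 a\,|s|^{-2}$; using $|s|\ge \tfrac{1}{8}a\delta$ this is bounded by $C_b^2\,a^{-1}\cdot \delta^{1/2}a^{1/2}|s|^{-1/2}$, which is easily absorbed for $a$ large. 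The only delicate point I anticipate is the Cauchy--Schwarz weight selection in the curvature step: because $\lambda = 0$ produces no $\trchb$-damping on the left-hand side, the weight in $s'$ must be picked precisely so that the $L^2(\Hb_{\ub})$ norm from $\RR_k[\rho]$ enters against exactly the power $|s|^{-3/2}$ that, after the final multiplication by $|s|$, yields the asserted decay $|s|^{-1/2}$. Once this balance is set up, the remaining steps are routine integration and bootstrap absorption.
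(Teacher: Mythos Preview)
Your proof is correct and follows essentially the same route as the paper: apply Lemma \ref{Lemma:transport-3} with $\lambda=0$ to $\nab_3\om=\rho+|\ze|^2$, handle the $\rho$ integral by Cauchy--Schwarz against the $\RR_k[\rho]$ flux with the weight split $|s'|^{i-1}=|s'|^{-2}\cdot|s'|^{i+1}$, and bound the $|\ze|^2$ term by the bootstrap $\Gac_g$ estimate. The only cosmetic difference is that the paper writes ``$0$'' for the initial data contribution (since $\om=0$ on $H_{-1}$ in the geodesic frame), whereas you invoke the $\delta^{1/2}a^{1/2}$ bound from Proposition \ref{Prop:InitailData-H_{-1}} and absorb it; both are fine.
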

\begin{proof}
We apply Lemma \ref{Lemma:transport-3}         to  the equation  $ \nab_3\, \om=|\zeta|^2+\rho$ 
and derive
\begin{equation*}
    \begin{split}
    ||\, |s|^{i-1}\d^i\om &||_{L^2(S_{\ub,s})}\lesssim  ||\, |s|^{i-1}\d^i\om||_{L^2(S_{\ub,-1})}+\int_{-1}^{s} |s'|^{i-1} ||\d^i(|\zeta|)^2||_{L^2(S_{\ub,s'})}+|s'|^{i-1} ||\d^i\rho||_{L^2(S_{\ub,s'})} ds' \\
    &\lesssim  0+\int_{-1}^{s} |s'|^{-1} \o \delta a^\frac 12 |s'|^{-2}\cdot \o\delta a^\frac 12 |s'|^{-1}\, ds'+\Big(\int_{-1}^s |s'|^{-4} ds'\Big)^\frac 12 ||s(s^i\d^i\rho)||_{L^2(\hbub)}\\
    &\lesssim  \mathcal O^2\delta^2 a |s|^{-3}+\RR_i[\rho]\delta^\frac 12 a^\frac 12 |s|^{-\frac 32}\\
    &\lesssim  (\RR_i[\rho]+\mathcal O^2 a^{-\frac 12})\delta^\frac 12 a^\frac 12 |s|^{-\frac 32}\lesssim \mathcal R_i[\rho]\, \delta^\frac 12 a^\frac 12 |s|^{-\frac 32}.
    \end{split}
\end{equation*}
\end{proof}

\begin{proposition}
\label{Prop:estimate-xi}
    We have $||s^k\d^k \xi||_{L^2(S_{\ub,s})}\lesssim \RR_k[\b]\, \delta^{-\frac 12} a^\frac 12 |s|^{-\frac 12}$, $k\leq N$.
\end{proposition}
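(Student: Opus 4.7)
The plan is to apply the transport machinery of Lemma \ref{Lemma:transport-3} to the $e_3$-equation for $\xi$ derived in Proposition \ref{Prop:transport.e_3}, namely
\[
\nab_3\xi = \b + \chih\cdot\zeta + \tfrac 12\trch\,\zeta - \tfrac 12\atrch\,\dual\zeta,
\]
which is the model form $\nab_3\xi + \lambda\trchb\,\xi = F$ with $\lambda = 0$. Invoking the higher-order estimate \eqref{eq:integrate-model-higher-order} with $\lambda_1 = -1$ gives
\[
\|s^{k-1}\d^k\xi\|_{L^2(S_{\ub,s})} \lesssim \|s^{k-1}\d^k\xi\|_{L^2(S_{\ub,-1})} + \sum_{i\leq k}\int_{-1}^s |s'|^{i-1}\|\d^i F\|_{L^2(S_{\ub,s'})}\,ds',
\]
and the target bound follows after multiplication by $|s|$. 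Since the PT and PG frames coincide on $H_{-1}$ where $\om = \xi = 0$ and Proposition \ref{Prop:InitailData-H_{-1}} gives $\|\d^i\xi\|_{L^2(S_{\ub,-1})} \lesssim a^{1/2}$, the initial data contribution to $\|s^k\d^k\xi\|_{L^2}$ is at worst $|s|\, a^{1/2}$, which is easily absorbed in $\delta^{-1/2}a^{1/2}|s|^{-1/2}$ since $|s|\lesssim 1$ and $\delta\ll 1$.

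The heart of the argument is the treatment of the $\b$-source. The key computation, analogous to the $\rho$-estimate in the preceding proposition for $\om$, is a Cauchy--Schwarz in the $s'$ variable pairing $|s'|^{i-1}$ against the weight $|s'|^i$ that matches the definition of $\RR_i[\b]$:
\[
\int_{-1}^s |s'|^{i-1}\|\d^i\b\|_{L^2(S)}\,ds' \le \left(\int_{-1}^s |s'|^{-2}\,ds'\right)^{1/2} \|s^i\d^i\b\|_{L^2(\hbub)} \lesssim |s|^{-1/2}\,\RR_i[\b]\,\delta^{-1/2}a^{1/2},
\]
using $\int_{-1}^s |s'|^{-2}ds' = |s|^{-1}-1 \lesssim |s|^{-1}$. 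Multiplying by $|s|$ produces the factor $|s|^{1/2}$, which in the range $|s|\leq 1$ is dominated by $|s|^{-1/2}$, yielding the required contribution to the stated bound.

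For the quadratic nonlinearities $\chih\cdot\zeta$, $\trch\,\zeta$, $\atrch\,\dual\zeta$ in $F$, we use the bootstrap assumption \eqref{eq:bootstrap} and the non-integrable Sobolev Lemma \ref{Lemma:Sobolev}. Although $\chih\in\Gac_a$ is anomalously large ($\sim a^{1/2}$), its partner $\zeta\in\Gac_g$ satisfies the favorable bound $\|\zeta\|_{L^2(S)} \lesssim C_b\,\delta a^{1/2}|s|^{-1}$; hence the $L^2(S)$ norms of each quadratic product, together with their $\d^i$ derivatives distributed as in the Remark after Lemma \ref{Lemma:transport-3}, are of order $C_b\,\delta a\,|s|^{-(i+1)}$. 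Integrating against $|s'|^{i-1}$ and using $|s|\leq 1$ together with $a\delta\ll 1$ shows that all such terms are much smaller than $\RR_k[\b]\,\delta^{-1/2}a^{1/2}|s|^{-1/2}$ and can thus be absorbed.

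The only real subtlety I foresee lies in the top-order step: commuting with $\d^k$ produces commutator terms of the schematic type $\bb^*\cdot\xi$ and $\Gac_g\cdot\d\xi$ via Lemma \ref{le:comm-PTframe1} (as incorporated in the proof of Lemma \ref{Lemma:transport-3}). These either feed back into $\xi$ itself, requiring a Gr\"onwall closure, or involve $\bb$, which is small enough ($\sim \delta^2 a^{3/2}|s|^{-3}$ by the $\RR^S$ bootstrap) that the resulting integrand decays superpolynomially in $|s|$ on top of the $\RR^S_{\leq k-1}[\bb]$ contribution. As in the preceding $\om$-proof, the product $a^{-1/2}$ times these small factors is negligible, so the Gr\"onwall argument built into Lemma \ref{Lemma:transport-3} closes and delivers the bound for all $k\leq N$.
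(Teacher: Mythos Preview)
Your proof is correct and follows essentially the same route as the paper's: apply Lemma~\ref{Lemma:transport-3} with $\lambda=0$ to the $\nab_3\xi$ equation, treat the $\b$-source by Cauchy--Schwarz against the $L^2(\hbub)$ flux, and bound the quadratic terms $(\chih,\Gac_b)\cdot\zeta$ using the bootstrap. You are in fact more careful than the paper about the initial data term, since $\nab_3^j\xi$ need not vanish on $H_{-1}$ even though $\xi$ itself does; the paper simply writes ``$\lesssim 0$'' there. One minor slip: your stated size $C_b\,\delta a\,|s|^{-(i+1)}$ for $\|\d^i(\chih\cdot\zeta)\|_{L^2(S)}$ is off by one power of $|s|^{-1}$ (it should be $C_b^2\,\delta a\,|s|^{-(i+2)}$, since $\|\chih\|_{L^\infty}\lesssim a^{1/2}|s|^{-1}$ and $\|\zeta\|_{L^2(S)}\lesssim \delta a^{1/2}|s|^{-1}$), but this only makes the term \emph{better} and does not affect your conclusion.
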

\begin{proof}
We apply Lemma \ref{Lemma:transport-3}         to  the equation  for $\nab_3 \xi$ written schematically in the form
$\nab_3\xi=\chih\cdot \zeta+\Gac_b\cdot \zeta+\b$ to derive 
   \begin{equation*}
    \begin{split}
    ||\, |s|^{i-1}\d^i\xi||_{L^2(S_{\ub,s})}&\lesssim  ||\, |s|^{i-1}\d^i\xi||_{L^2(S_{\ub,-1})}+\int_{-1}^{s} |s'|^{i-1} ||\d^i((\chih,\Gac_b)\cdot\zeta)||_{L^2(S_{\ub,s'})}+|s'|^{i-1} ||\d^i\b||_{L^2(S_{\ub,s'})} ds' \\
    &\lesssim  0+\int_{-1}^{s} |s'|^{-1}\o a^\frac 12 |s'|^{-1}\cdot \o \delta a^\frac 12 |s'|^{-1}\, ds'
    +\Big(\int_{-1}^s |s'|^{-2} ds'\Big)^\frac 12 ||s^i\d^i\b||_{L^2(\hbub)}\\
    &\lesssim  \mathcal O^2 \delta a |s|^{-2}+\RR_i[\b]\, \delta^{-\frac 12} a^\frac 12 |s|^{-\frac 12}\\
    &\lesssim  (\RR_i[\b]+\mathcal O^2 a^{-\frac 12})\, \delta^\frac 12 a^\frac 12 |s|^{-\frac 32}\lesssim \mathcal R_i[\b]\, \delta^\frac 12 a^\frac 12 |s|^{-\frac 32}.
    \end{split}
\end{equation*}
\end{proof}

\begin{proposition}
\lab{Prop:estimate-ze}
    We have $||s^k\d^k \zeta||_{L^2(S_{\ub,s})}\lesssim \delta a^\frac 12+\RR_k[\bb]\, \delta^\frac 32 a|s|^{-\frac 32}$, $k\leq N$.
\end{proposition}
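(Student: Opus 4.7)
The equation to integrate is the one for $\nab_3\zeta$ in Proposition \ref{Prop:transport.e_3}, namely
\[
\nab_3\zeta+\frac{1}{2}\trchb\,\zeta=-\chibh\cdot\zeta-\bb,
\]
which fits the model transport form with $\lambda=\frac{1}{2}$ and $F=-\chibh\cdot\zeta-\bb$. Since $\chibh\in\Gac_g$, the nonlinearity $\chibh\cdot\zeta$ is of the schematic type $\Gac_g\cdot\zeta$, so by the remark following Lemma \ref{Lemma:transport-3} it can be absorbed. Applying Lemma \ref{Lemma:transport-3} with $\lambda=\frac{1}{2}$, hence weight $|s|^{k}$, I get
\[
\|s^{k}\d^{k}\zeta\|_{L^{2}(S_{\ub,s})}\lesssim \|s^{k}\d^{k}\zeta\|_{L^{2}(S_{\ub,-1})}+\sum_{i\leq k}\int_{-1}^{s}|s'|^{i}\,\|\d^{i}\bb\|_{L^{2}(S_{\ub,s'})}\,ds'.
\]

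The initial data term is handled by Proposition \ref{Prop:InitailData-H_{-1}}: on $H_{-1}$ the PT and PG frames coincide, so $\d^{i}\zeta\sim\delta a^{\frac{1}{2}}$ for $i\leq N$, and since $|s|=1$ there, $\|s^{k}\d^{k}\zeta\|_{L^{2}(S_{\ub,-1})}\lesssim \delta a^{\frac{1}{2}}$, producing the first term of the claimed bound.

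For the $\bb$-integral, the key move is to use the hypersurface curvature norm rather than any pointwise or spherical $\bb$-bound. I write $|s'|^{i}=|s'|^{-2}\cdot|s'|^{i+2}$ and apply Cauchy--Schwarz in $s'$:
\[
\int_{-1}^{s}|s'|^{i}\,\|\d^{i}\bb\|_{L^{2}(S_{\ub,s'})}\,ds'\leq \Bigl(\int_{-1}^{s}|s'|^{-4}\,ds'\Bigr)^{\frac{1}{2}}\bigl\|s'^{\,i+2}\d^{i}\bb\bigr\|_{L^{2}(\hbub)}.
\]
A direct computation gives $\bigl(\int_{-1}^{s}|s'|^{-4}\,ds'\bigr)^{\frac{1}{2}}\lesssim |s|^{-\frac{3}{2}}$, while the definition of $\mathcal R_{k}[\bb]$ yields $\|s'^{\,i+2}\d^{i}\bb\|_{L^{2}(\hbub)}\lesssim \mathcal R_{k}[\bb]\,\delta^{\frac{3}{2}}a$ for $i\leq k$. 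Combining gives $\mathcal R_{k}[\bb]\,\delta^{\frac{3}{2}}a\,|s|^{-\frac{3}{2}}$, which is the second term of the claimed bound.

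The plan is thus a direct application of Lemma \ref{Lemma:transport-3} in the pattern of the preceding estimates for $\om$ and $\xi$. I do not anticipate a serious obstacle: the borderline $\chibh\cdot\zeta$ term is absorbed by the small $\Gac_g$-factor as in the proof of that lemma, and the only nontrivial bookkeeping is choosing the Cauchy--Schwarz weight so that the $|s'|^{-4}$ integral yields exactly the $|s|^{-\frac{3}{2}}$ factor that pairs with the $\hbub$-norm of $s^{2}\bb$. This is also what produces the improved behavior of $\zeta$ highlighted in feature \textbf{6}\ of the introduction, which is why the bound is of the form $\delta a^{\frac{1}{2}}+\mathcal R_{k}[\bb]\,\delta^{\frac{3}{2}}a\,|s|^{-\frac{3}{2}}$ rather than the naive $\Gac_{g}$-type bound $\delta a^{\frac{1}{2}}|s|^{-1}$.
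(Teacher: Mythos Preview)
Your proof is correct and follows essentially the same approach as the paper: apply Lemma \ref{Lemma:transport-3} with $\lambda=\tfrac12$ to the $\nab_3\zeta$ equation, use the initial data bound $\delta a^{1/2}$ from Proposition \ref{Prop:InitailData-H_{-1}}$, and handle the $\bb$ contribution via Cauchy--Schwarz with the $|s'|^{-4}$ weight to land on the $L^2(\hbub)$ norm. The only cosmetic difference is that the paper explicitly estimates the $\Gac_g\cdot\zeta$ term (obtaining $\mathcal O^2\delta^2 a|s|^{-2}$, which is then absorbed), whereas you invoke the remark after Lemma \ref{Lemma:transport-3} to absorb it directly; both are equivalent.
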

\begin{proof}
We apply  Lemma \ref{Lemma:transport-3} (with $\lambda=\frac 12$)    to the equation  for $\nab_3 \ze $,  written schematically in the form $\nab_3 \zeta+\frac 12\trchb\, \zeta=\Gac_g\cdot\zeta-\bb$, 
\begin{equation*}
    \begin{split}
    ||\, |s|^{i}\d^i\zeta &||_{L^2(S_{\ub,s})}\lesssim  ||\, |s|^{i}\d^i\zeta||_{L^2(S_{\ub,-1})}+\int_{-1}^{s} |s'|^{i} ||\d^i(\Gac_g\cdot\zeta)||_{L^2(S_{\ub,s'})}+|s'|^{i} ||\d^i\bb||_{L^2(S_{\ub,s'})} ds' \\
    &\lesssim  \delta a^\frac 12+\int_{-1}^{s} \o\delta a^\frac 12 |s'|^{-2}\cdot  \o \delta a^\frac 12 |s'|^{-1}\, ds'+\Big(\int_{-1}^s |s'|^{-4} ds'\Big)^\frac 12 ||s^2 s^i\d^i\bb||_{L^2(\hbub)}\\
    &\lesssim  \delta a^\frac 12+\mathcal O^2 \delta^2 a |s|^{-2}+\RR_i[\bb]\delta^{\frac 32} a |s|^{-\frac 32}\\
    &\lesssim  \delta a^\frac 12+(\RR_i[\bb]+\mathcal O^2 a^{-\frac 12})\delta^\frac 32 a |s|^{-\frac 32}\lesssim \delta a^\frac 12+\mathcal R_i[\bb]\, \delta^\frac 32 a |s|^{-\frac 32}.
    \end{split}
\end{equation*}
\end{proof}
\begin{remark}
    This is slightly better than the $\delta a^\frac 12 |s|^{-2}$ size of the bootstrap assumption.  This turns out to be useful to avoid a logarithmic loss in the estimate of the frame transformation $f$.
\end{remark}

\begin{proposition}\label{prop:L2-estimate-wtb,chibh}
    We have $||s^k\d^k (\widecheck{\trchb},\chibh)||_{L^2(S_{\ub,s})}\lesssim \delta a^\frac 12 |s|^{-1}$.
\end{proposition}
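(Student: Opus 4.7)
The plan is to apply Lemma \ref{Lemma:transport-3} with $\lambda=1$ to the two transport equations from Proposition \ref{Prop:transport.e_3}, namely
\begin{equation*}
\nab_3\chibh+\trchb\,\chibh=-\aa,\qquad \nab_3\wtb+\trchb\,\wtb=\tfrac12(\wtb)^2-|\chibh|^2.
\end{equation*}
Both are of the model form \eqref{eq:model-transport} with $\lambda=1$, so the correct weighted target is $\|s^{k+1}\d^k\psi\|_{L^2(S_{\ub,s})}\lesssim\delta a^{1/2}$, which is exactly the claimed estimate after dividing by $|s|$. Since the $\chibh$-equation has source depending only on $\aa$, while the $\wtb$-equation couples back to $\chibh$, I would first treat $\chibh$ and then $\wtb$. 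In both cases, the initial data at $s=-1$ is controlled by $\delta a^{1/2}$ thanks to Proposition \ref{Prop:InitailData-H_{-1}}.

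For $\chibh$, the only source is $-\aa$. The bootstrap assumption $\RR^S_k[\aa]\le C_b$ (from \eqref{eq:bootstrap}) translates into $\|s^k\d^k\aa\|_{L^2(S_{\ub,s})}\lesssim C_b\,\delta^{5/2}a\,|s|^{-7/2}$, so the relevant integral reduces to
\begin{equation*}
\int_{-1}^s|s'|^{k+1}\|\d^k\aa\|_{L^2(S_{\ub,s'})}\,ds'\lesssim C_b\,\delta^{5/2}a\int_{-1}^s|s'|^{-5/2}\,ds'\lesssim C_b\,\delta^{5/2}a,
\end{equation*}
which is $\ll\delta a^{1/2}$ provided $C_b\delta^{3/2}a^{1/2}\ll 1$ (guaranteed by $\delta a\ll 1$). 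Combined with the initial datum contribution, Lemma \ref{Lemma:transport-3} yields $\|s^{k+1}\d^k\chibh\|_{L^2(S_{\ub,s})}\lesssim\delta a^{1/2}$.

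For $\wtb$, the source is quadratic in $\Gac_g$-type quantities $(\wtb,\chibh)$. Expanding $\d^k$ of the products via the Leibniz rule and estimating one factor in $L^\infty$ (using the Sobolev Lemma \ref{Lemma:Sobolev} or directly the $L^\infty$ bootstrap $\o^S_{k,\infty}\le C_b$) and the other in $L^2$ (using $\o^S_{\le N}\le C_b$), one obtains the schematic bound $\|\d^k((\wtb)^2,|\chibh|^2)\|_{L^2(S_{\ub,s'})}\lesssim C_b^2\,\delta^2a\,|s'|^{-3-k}$, so that
\begin{equation*}
\int_{-1}^s|s'|^{k+1}\|\d^k((\wtb)^2,|\chibh|^2)\|_{L^2(S_{\ub,s'})}\,ds'\lesssim C_b^2\,\delta^2a\int_{-1}^s|s'|^{-2}\,ds'\lesssim C_b^2\,\delta^2a\ll\delta a^{1/2}.
\end{equation*}
Applying Lemma \ref{Lemma:transport-3} again gives $\|s^{k+1}\d^k\wtb\|_{L^2(S_{\ub,s})}\lesssim\delta a^{1/2}$, hence the desired estimate.

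The only conceptual point requiring care, rather than a serious obstacle, is verifying that both sources are subleading compared with the target $\delta a^{1/2}$: this relies on the strong smallness of $\aa$ in the $L^2(S)$ norm (size $\delta^{5/2}a|s|^{-7/2}$), and on the general mechanism that products of two $\Gac_g$ factors gain an extra $\delta a^{1/2}|s|^{-2}$ over a single one. Both improvements ultimately close because of the smallness condition $\delta a\ll 1$, and the resulting constants are independent of $C_b$, as required by the convention of the section.
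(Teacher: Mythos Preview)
Your approach is essentially the paper's: apply Lemma \ref{Lemma:transport-3} with $\lambda=1$ to the two transport equations and check that the sources are subleading. Two points need correction.

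First, a computational slip: the integrals $\int_{-1}^{s}|s'|^{-5/2}\,ds'$ and $\int_{-1}^{s}|s'|^{-2}\,ds'$ are not $O(1)$; they are of size $|s|^{-3/2}$ and $|s|^{-1}$ respectively. This does not break the argument, since $|s|\gtrsim a\delta$ still yields the required smallness (the $\aa$-contribution becomes $C_b\,\delta^{5/2}a\,|s|^{-3/2}\lesssim C_b\,a^{-1}\cdot\delta a^{1/2}$, and the quadratic one becomes $C_b^2\,\delta^2 a\,|s|^{-1}\lesssim C_b^2\,a^{-1/2}\cdot\delta a^{1/2}$), but your displayed chains of inequalities are wrong as written.

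Second, and more substantive: the proposition must hold for all $k\le N$ (it feeds into $\o_{\le N}$ in Proposition \ref{Prop:summary-Ricci}), but the bootstrap \eqref{eq:bootstrap} only gives $\RR^S_k[\aa]\le C_b$ for $k\le N-1$. At $k=N$ your estimate of the $\aa$-source is therefore unjustified. The paper avoids this by controlling $\aa$ through the flux norm $\RR_{k,2}[\aa]$ on $\Hb_{\ub}$, which \emph{is} available for $k\le N$: one writes
\[
\int_{-1}^{s}|s'|^{1+i}\|\d^i\aa\|_{L^2(S_{\ub,s'})}\,ds'\le\Big(\int_{-1}^{s}|s'|^{-4}\,ds'\Big)^{1/2}\|s^3 s^i\d^i\aa\|_{L^2(\Hb_{\ub})}
\]
via Cauchy--Schwarz in $s'$. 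This is the device used in all the other Ricci-coefficient estimates of Section \ref{Section:Integrating-Ricci}; once you adopt it here, your argument closes at top order as well.
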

\begin{proof}
We  apply  Lemma \ref{Lemma:transport-3}  (with $\lambda=1$)   to the equations
\beaa
    \nab_3 \trchbc +\trchb\trchbc&=&\frac 12(\wtb)^2-|\chibh|^2,\qquad 
    \nab_3 \chibh+\trchb\, \chibh=-\underline\a.
\eeaa
\begin{equation*}
    \begin{split}
        ||s^{1+i}\d^i(\chibh,\wtb)||_{L^2(S_{\ub,s})}&\lesssim ||s^{1+i}\d^i(\chibh,\wtb)||_{L^2(S_{\ub,-1})}+\int_{-1}^s |s'|^{1+i} ||\d^i (\Gac_g\cdot\Gac_g),\d^i\ab||_{L^2(S_{\ub,s'})}\\
        &\lesssim \delta a^\frac 12+\int_{-1}^s \mathcal O^2 \delta^2 a |s'|^{-2} ds'+\Big(\int_{-1}^s |s'|^{-4} ds'\Big)^\frac 12 ||s^3 s^i\d^i\aa||_{L^2(\hbub)}\\
        &\lesssim \delta a^\frac 12+\o^2 \delta^2 a |s|^{-1}+\RR[\aa]\, \delta^\frac 52 a^\frac 32 |s|^{-\frac 32}\\
        &\lesssim (1+\o^2 a^{-\frac 12}+\RR[\ab]\, a^{-\frac 12})\, \delta a^\frac 12\lesssim \delta a^\frac 12.
    \end{split}
\end{equation*}
\end{proof}

\begin{proposition}\label{Prop:estimate-of-chih}
    We have $||s^k\d^k\chih||_{L^2(S_{\ub,s})}\lesssim a^\frac 12$, $k\leq N$.
\end{proposition}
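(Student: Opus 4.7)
The plan is to apply the transport estimate of Lemma \ref{Lemma:transport-3} directly to the $e_3$ equation for $\chih$ from Proposition \ref{Prop:transport.e_3},
\[
\nab_3\chih = -\tfrac{1}{2}\trch\,\chibh - \tfrac{1}{2}\trchb\,\chih + \tfrac{1}{2}\dual\chibh\,\atrch,
\]
which is of the model form $\nab_3\psi + \lambda\trchb\psi = F$ with $\lambda = \tfrac12$. The key structural point is that the forcing $F = -\tfrac12\trch\,\chibh + \tfrac12\dual\chibh\,\atrch$ is schematically of the type $\Gac_b\cdot\Gac_g$: by signature considerations, no $\chih$ or $\xi$ factor reappears on the right, which is exactly what allows us to close the estimate at the anomalous size $a^{1/2}$.

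With $\lambda = \tfrac12$, we have $2\lambda - 1 = 0$, so the weight in \eqref{eq:integrate-model-higher-order} is simply $|s|^k$:
\[
\|s^k\d^k\chih\|_{L^2(S_{\ub,s})} \lesssim \|s^k\d^k\chih\|_{L^2(S_{\ub,-1})} + \sum_{i\leq k}\int_{-1}^s |s'|^i\|\d^i F\|_{L^2(S_{\ub,s'})}\,ds'.
\]
The initial term on $H_{-1}$ is bounded by $a^{1/2}$ directly from Proposition \ref{Prop:InitailData-H_{-1}}. For the forcing, I would apply the standard product rule on $S$, pairing the $L^2$ bound $\|s^j\d^j\Gac_b\|_{L^2(S)}\lesssim \o$ (for $\trch,\atrch$) with the bootstrap $L^\infty$ bound $\|s^j\d^j\chibh\|_{L^\infty(S)}\lesssim \o\,\delta a^{1/2}|s|^{-2}$ from $\mathcal O^S_{k,\infty}$, and conversely using Lemma \ref{Lemma:Sobolev} to obtain the $L^\infty$ bound on $\Gac_b$ when the higher derivatives land on $\chibh$. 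Either way this produces
\[
|s'|^i\|\d^i F\|_{L^2(S_{\ub,s'})} \lesssim \o^2\,\delta a^{1/2}|s'|^{-2}.
\]

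To finish, I would integrate using $\int_{-1}^s |s'|^{-2}ds' \lesssim |s|^{-1} \lesssim (a\delta)^{-1}$ (since $|s|\geq \tfrac18 a\delta$ in $\MM(\delta,a;\tau^*)$ by Remark \ref{remark:size-tau_*}), giving a total forcing contribution of order $\o^2 a^{-1/2} \ll a^{1/2}$ by largeness of $a$. Combining with the initial data bound yields $\|s^k\d^k\chih\|_{L^2(S_{\ub,s})}\lesssim a^{1/2}$ as claimed. The only potential obstacle is purely bookkeeping: one must verify that no factor of $\chih$ itself or of $\xi$ ever re-enters the right-hand side (which would upgrade $F$ to $\Gac_a\cdot\Gac_g$ and lose the extra $a^{-1/2}$ we need to absorb $\o^2$); this is guaranteed by the explicit form of the $\nab_3\chih$ equation in Proposition \ref{Prop:transport.e_3}, where only $\chibh$, $\trch$, and $\atrch$ appear on the right.
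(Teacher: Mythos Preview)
Your proof is correct and follows essentially the same approach as the paper: apply Lemma~\ref{Lemma:transport-3} with $\lambda=\tfrac12$ to the $\nab_3\chih$ equation, recognize the forcing as schematically $\Gac_b\cdot\Gac_g$, bound it by $\o^2\,\delta a^{1/2}|s'|^{-2}$, and integrate using $|s|\ge\tfrac18 a\delta$ to absorb the $\o^2$ factor. Your additional remark about the absence of $\chih$ or $\xi$ on the right-hand side is a useful explicit articulation of the structural point the paper handles by the schematic notation $\Gac_g\cdot\Gac_b$.
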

\begin{proof}
 We apply  Lemma \ref{Lemma:transport-3} (with $\lambda=\frac 12$)    to the equation  for  $\nab_3\chih $ written schematically   $\nab_3\chih
=-\frac 1 2 \trchb \chih+\Gac_g\cdot\Gac_b$
\begin{equation*}
    \begin{split}
        ||s^{i}\d^i\chih||_{L^2(S_{\ub,s})}&\lesssim ||s^{i}\d^i\chih||_{L^2(S_{\ub,-1})}+\int_{-1}^s |s'|^{i} ||\d^i (\Gac_g\cdot\Gac_b)||_{L^2(S_{\ub,s'})}\, ds'\\
        &\lesssim a^\frac 12 +\int_{-1}^s \o \delta a^\frac 12 |s'|^{-2}\cdot \o \, ds'\\
        &\lesssim (1+\o^2 \delta |s|^{-1})a^\frac 12\lesssim a^\frac 12.
    \end{split}
\end{equation*}
\end{proof}

\begin{proposition}
    We have $||s^k\d^k(\trch,\atrch)||_{L^2(S_{\ub,s})}\lesssim 1+\RR_k[\rho,\rhod]$, $k\leq N$.
\end{proposition}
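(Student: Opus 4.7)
The transport equations from Proposition~\ref{Prop:transport.e_3} read
\[
\nab_3\trch+\tfrac{1}{2}\trchb\,\trch=-\chibh\!\cdot\!\chih+2\rho,\qquad
\nab_3\atrch+\tfrac{1}{2}\trchb\,\atrch=-\chibh\wedge\chih-2\,\dual\rho,
\]
so both fit the model \eqref{eq:model-transport} with $\lambda=\tfrac12$. My plan is to apply Lemma~\ref{Lemma:transport-3} directly with this value of $\lambda$, so the weight is $|s|^{i}$ and no additional $|s|$-power appears. The initial value on $H_{-1}$ is of size $1$ by Proposition~\ref{Prop:InitailData-H_{-1}}, which produces the leading $1$ on the right-hand side of the claimed bound.

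For the nonlinear source $\chih\cdot\chibh$, the key observation is that $\chibh\in\Gac_g$, so this is an $\Gac_a\cdot\Gac_g$ product. Using the product splitting in the remark preceding the proposition for $\om$, together with the bootstrap bounds $||s^{j}\d^{j}\chih||_{L^{2}(S)}\lesssim\o\,a^{1/2}$, $||s^{j}\d^{j}\chibh||_{L^{\infty}(S)}\lesssim \o_{\infty}\delta a^{1/2}|s|^{-2}$ (the $L^{\infty}$ bound on low derivatives of $\chibh$ comes directly from $\o_{k,\infty}^{S}$, and for the high derivatives we invoke Lemma~\ref{Lemma:Sobolev}), I expect
\[
|s'|^{i}\,\|\d^{i}(\chih\cdot\chibh)\|_{L^{2}(S_{\ub,s'})}\lesssim \o\,\o_{\infty}\,\delta a\,|s'|^{-2}.
\]
Integrating in $s'$ from $-1$ to $s$ gives a contribution $\lesssim C_{b}^{2}\,\delta a\lesssim 1$, using $a\delta\ll 1$.

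For the curvature term I apply Cauchy--Schwarz in $s'$, writing
\[
\int_{-1}^{s}|s'|^{i}\,\|\d^{i}(\rho,\dual\rho)\|_{L^{2}(S_{\ub,s'})}\,ds'
\le\Big(\int_{-1}^{s}|s'|^{-2}ds'\Big)^{\!1/2}\Big(\int_{-1}^{s}|s'|^{2i+2}\|\d^{i}(\rho,\dual\rho)\|_{L^{2}(S_{\ub,s'})}^{2}\,ds'\Big)^{\!1/2}.
\]
The first factor is $\lesssim 1$; the second is exactly the $L^{2}(\hbub)$ norm $\|s(s^{i}\d^{i})(\rho,\dual\rho)\|_{L^{2}(\hbub)}=\delta^{1/2}a^{1/2}\,\RR_{i}[\rho,\dual\rho]$, by definition of $\RR_{i}$. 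Since $\delta^{1/2}a^{1/2}=(\delta a)^{1/2}\ll 1$, this contribution is absorbed into $\RR_{k}[\rho,\dual\rho]$.

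Combining the three contributions yields $\|s^{k}\d^{k}(\trch,\atrch)\|_{L^{2}(S_{\ub,s})}\lesssim 1+\RR_{k}[\rho,\dual\rho]$ as claimed. I do not anticipate any real obstacle; the only step requiring some care is the bookkeeping for the product term $\chih\cdot\chibh$, where one must resist the temptation to put $\chih$ in $L^{\infty}$ (which would cost a factor of $|s|^{-1}$ and give the wrong integrability in $s'$) and instead keep $\chih$ in $L^{2}(S)$ and $\chibh$ in $L^{\infty}(S)$.
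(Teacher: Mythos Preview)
Your overall approach matches the paper's: apply Lemma~\ref{Lemma:transport-3} with $\lambda=\tfrac12$, use the initial data from Proposition~\ref{Prop:InitailData-H_{-1}}, and handle the curvature term by Cauchy--Schwarz against the $L^2(\hbub)$ flux. However, your treatment of the nonlinear term $\chih\cdot\chibh$ has a gap that prevents the bootstrap from closing.

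You estimate $\chih$ and $\chibh$ using the bootstrap bounds $\o,\o_\infty\leq C_b$. Correcting your arithmetic (one has $\int_{-1}^s|s'|^{-2}\,ds'\sim|s|^{-1}$, not $\lesssim 1$), this yields
\[
\int_{-1}^s C_b^{2}\,\delta a\,|s'|^{-2}\,ds' \;\lesssim\; C_b^{2}\,\delta a\,|s|^{-1} \;\lesssim\; C_b^{2},
\]
using $|s|\geq\tfrac18 a\delta$. This is $O(C_b^{2})$, not $\lesssim 1$; the bound depends on the bootstrap constant and therefore does not improve it. The product $\chih\cdot\chibh$ is genuinely borderline: unlike the $\Gac_g\cdot\psi$ terms absorbed in Lemma~\ref{Lemma:transport-3} via Gr\"onwall (which gain an $a^{-1/2}$), here there is no extra smallness to spare. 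The paper resolves this by invoking the \emph{already improved} estimates of Propositions~\ref{prop:L2-estimate-wtb,chibh} and~\ref{Prop:estimate-of-chih}, proved immediately before, which give $\|s^k\d^k\chih\|_{L^2(S)}\lesssim a^{1/2}$ and $\|s^k\d^k\chibh\|_{L^2(S)}\lesssim\delta a^{1/2}|s|^{-1}$ with universal implicit constants. With these (and their $L^\infty$ consequences via Lemma~\ref{Lemma:Sobolev}), the product term integrates to $\delta a|s|^{-1}\lesssim 1$. This is why the triangular order of the propositions in Section~\ref{Section:Integrating-Ricci} matters.

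Two secondary points. First, your remark about keeping $\chih$ out of $L^\infty$ is misplaced: the product splitting in the remark you cite forces you to use \emph{both} orderings (the factor carrying more than half the derivatives must go in $L^2$), and in fact both orderings give the same size $\delta a|s|^{-2}$ here. Second, the same arithmetic slip ($\int_{-1}^s|s'|^{-2}ds'\sim|s|^{-1}$) affects your curvature term, but there it is harmless since $\delta^{1/2}a^{1/2}|s|^{-1/2}\lesssim 1$; the paper records this as $\RR_i[\rho,\rhod]\,\delta^{1/2}a^{1/2}|s|^{-1/2}$.
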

\begin{proof}
We apply  Lemma \ref{Lemma:transport-3} (with $\lambda=\frac 12$)    to the $\nab_3$  equations   for $\trch$ and $\atrch$
of the form
\beaa
  \nab_3(\trch,\atrch)+\frac 12\trchb (\trch,\atrch)=-\chih\cdot \chibh+2(\rho,-\rhod).
  \eeaa
Making use of the $L^2$ bounds of $\chibh$ and $\chih$ and the corresponding $L^\infty$ bounds, derived using  Lemma \ref{Lemma:Sobolev},  we obtain
\begin{equation}\label{eq:intermediate-estimate-trch}
    \begin{split}
        ||s^{i}\d^i(\trch,\atrch)&||_{L^2(S_{\ub,s})}\lesssim ||s^{i}\d^i(\trch,\atrch)||_{L^2(S_{\ub,-1})}+\int_{-1}^s |s'|^{i} ||\d^i (\chibh\cdot\chih)||_{L^2(S_{\ub,s'})}\\
        &\quad +\int_{-1}^s |s'|^{i} ||\d^i (\rho,\dual\rho)||_{L^2(S_{\ub,s'})}\\
        &\lesssim 1+\sum_{j\leq i/2}\int_{-1}^s a^\frac 12 \cdot ||s^j\d^j\chibh||_{L^\infty(S_{\ub,s'})}+\frac{\delta a^\frac 12}{|s'|}\cdot ||s^j\d^j\chih||_{L^\infty(S_{\ub,s'})} ds'\\
        &\quad+\Big(\int_{-1}^s |s'|^{-2} ds'\Big)^\frac 12 ||s(s^i\d^i(\rho,\rhod))||_{L^2(\hbub)}\\
        &\lesssim  1+\int_{-1}^s \delta a |s'|^{-2} ds'+\RR_i[\rho,\rhod]\, \delta^\frac 12 a^\frac 12 |s|^{-\frac 12}\\
        &\lesssim 1+\RR_i[\rho,\rhod]+\delta a |s|^{-1}\lesssim 1+\RR_i[\rho,\rhod].
    \end{split}
\end{equation}
\end{proof}

\subsection{\texorpdfstring{$L^2$}{L2} Estimate of \texorpdfstring{$f$}{f}}
To derive the estimate of $f$, we make use of the equation, see  \eqref{eq:PTframe-property2},
 \beaa
    \nab_3 f+\frac 12 \trchb f=2\zeta -\chibh\cdot f, \quad f\big|_{H_{-1}}=0.
\eeaa
\begin{proposition}
    We have $||s^k\d^k f||_{L^2(S_{\ub,s})}\lesssim \RR_k[\bb]\, \delta a^\frac 12$.
\end{proposition}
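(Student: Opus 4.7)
The plan is to apply the higher-order transport integration Lemma \ref{Lemma:transport-3} with $\lambda=\frac 12$ directly to the equation \eqref{eq:PTframe-property2} for $f$. Since $\chibh\in\Gac_g$, the term $\chibh\cdot f$ is of type $\Gac_g\cdot f$, and by the remark following Lemma \ref{Lemma:transport-3} it can be absorbed into the left-hand side of the integrated estimate. The initial data contribution $||s^{k}\d^k f||_{L^2(S_{\ub,-1})}$ vanishes for purely tangential derivatives because $f|_{H_{-1}}=0$, and for derivatives involving $\nab_3$ it is controlled by repeatedly substituting the equation (so $\nab_3 f|_{H_{-1}}=2\zeta|_{H_{-1}}$, etc.) and invoking the initial data bounds of Proposition \ref{Prop:InitailData-H_{-1}}; this yields a contribution of size $\delta a^\frac 12$.

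The main work is then to estimate the source integral
\begin{equation*}
\sum_{i\leq k}\int_{-1}^{s} |s'|^{i}\, ||\d^i\zeta||_{L^2(S_{\ub,s'})}\, ds'.
\end{equation*}
Here we crucially exploit the improved $\zeta$-bound from Proposition \ref{Prop:estimate-ze}, namely $||s^i\d^i\zeta||_{L^2(S_{\ub,s})}\lesssim \delta a^\frac 12+\RR_i[\bb]\,\delta^\frac 32 a\, |s|^{-\frac 32}$. The integrand then splits into a piece of size $\delta a^\frac 12$ (independent of $s'$) and a piece of size $\RR_i[\bb]\,\delta^\frac 32 a |s'|^{-\frac 32}$. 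The first piece integrates over a range of length at most $1$ and contributes $\lesssim \delta a^\frac 12$. The second piece integrates to $\RR_i[\bb]\,\delta^\frac 32 a \,(|s|^{-\frac 12}-1)$.

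The closing step uses Remark \ref{remark:size-tau_*}: since we work in $\MM(\delta,a;\tau^*)$ with $\tau^*\leq -\frac 18 a\delta$, we have $|s|\geq \frac 18 a\delta$ and therefore $|s|^{-\frac 12}\lesssim (a\delta)^{-\frac 12}$. Consequently the second piece is bounded by $\RR_i[\bb]\,\delta^\frac 32 a\cdot (a\delta)^{-\frac 12}=\RR_i[\bb]\,\delta a^\frac 12$, and combining all contributions yields the desired bound $||s^k\d^k f||_{L^2(S_{\ub,s})}\lesssim \RR_k[\bb]\,\delta a^\frac 12$. The main point --- flagged already in item 6 of the discussion of main features --- is that the naive bootstrap ansatz $\zeta\sim \delta a^\frac 12 |s|^{-1}$ would force $\int_{-1}^{s}|s'|^{-1}\,ds'\sim \log|s|^{-1}$ in the source term and produce a logarithmic loss that cannot be absorbed; the strengthened Proposition \ref{Prop:estimate-ze} is precisely what averts this loss, which is the key (and only nontrivial) obstacle in the argument.
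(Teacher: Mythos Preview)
Your proposal is correct and follows essentially the same route as the paper: apply the transport Lemma \ref{Lemma:transport-3} with $\lambda=\tfrac12$ to \eqref{eq:PTframe-property2}, feed in the improved $\zeta$-bound of Proposition \ref{Prop:estimate-ze}, integrate the two resulting pieces, and close using $|s|\ge \tfrac18 a\delta$. Your treatment of the initial data term is in fact slightly more careful than the paper's (which writes the initial contribution as $0$ rather than noting that the $\nab_3$-derivative piece contributes $\delta a^{\frac12}$ via the equation); and you absorb the $\chibh\cdot f$ term via the remark after Lemma \ref{Lemma:transport-3}, whereas the paper carries it explicitly as an $\mathcal O^2\delta^2 a|s'|^{-2}$ term --- both are equivalent.
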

\begin{proof}
 Applying  Lemma \ref{Lemma:transport-3} (with $\lambda=\frac 12$)   and using the $L^2(S)$ estimate of $\zeta$  obtained\footnote{It is essential here  to use  the better  result of Proposition \ref{Prop:estimate-ze} rather than  $\ze\in \Gac_g\sim \delta a^\frac 12 |s|^{-2}$.}   in Proposition \ref{Prop:estimate-ze}, we  derive
\begin{equation*}
    \begin{split}
        ||s^{i}&\d^i f||_{L^2(S_{\ub,s})}\lesssim ||s^{i}\d^i f||_{L^2(S_{\ub,-1})}+\int_{-1}^s |s'|^{i} ||\d^i \zeta,\d^i(\Gac_g\cdot f)||_{L^2(S_{\ub,s'})}\, ds'\\
        &\lesssim 0+\int_{-1}^s \delta a^\frac 12+\RR_i[\bb]\, \delta^\frac 32 a |s'|^{-\frac 32}+\o^2 \delta^2 a |s'|^{-2}\,  ds'\\
        &\lesssim \delta a^\frac 12+\RR_i[\bb]\, \delta^\frac 32 a |s|^{-\frac 12}+\o^2 \delta^2 a |s|^{-1}\\
        &\lesssim \delta a^\frac 12 (1+\RR_i[\bb]\, \delta^\frac 12 a^\frac 12 |s|^{-\frac 12}+\o^2 \delta a^\frac 12 |s|^{-1})\lesssim (1+\RR_i[\b])\, \delta a^\frac 12.
    \end{split}
\end{equation*}
\end{proof}

\subsection{\texorpdfstring{$L^2(S)$}{L2S}-estimates for  the  curvature components}

In what follows we derive non-top $L^2(S_{\ub,s})$  estimates\footnote{That is ignoring loss of derivatives.}   of  the  curvature components. 
We start with the following.
\begin{proposition}
    We have $||s^k\d^k\aa||_{L^2(S_{\ub,s})}\lesssim \mathcal R[\aa]\, \delta^\frac 52 a ^\frac 32 |s|^{-\frac 72}$, $k\leq N-1$.
\end{proposition}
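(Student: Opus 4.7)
Unlike the preceding Ricci and curvature components, $\aa$ admits no $\nab_3$-transport equation in the Bianchi system of Proposition \ref{Prop:transport.e_3}; only the $\nab_4\aa$ equation is at our disposal. Thus Lemma \ref{Lemma:transport-3} is not directly applicable. The strategy is to eliminate $\nab_3\bb$ between the two Bianchi equations
\begin{equation*}
  \nab_3\bb+\div\aa=-2\trchb\bb+2\aa\cdot\ze,\qquad \nab_3\bb+\div\varrho=-\trchb\bb+2\bb\cdot\chih,
\end{equation*}
which, using $\div\varrho=-(\nab\rho+\dual\nab\rhod)$, produces the purely elliptic relation
\begin{equation*}
  \div\aa = -\nab\rho-\dual\nab\rhod - \trchb\bb - 2\bb\cdot\chih + 2\aa\cdot\ze,
\end{equation*}
so that $\aa$ can be recovered from $\div\aa$ via Hodge theory on $S_{\ub,s}$.

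\textbf{Main steps.} First, pass from the non-integrable PT structure $\HH$ to the integrable PG structure $\SS$ by Lemma \ref{Lemma:Generalframetransf}, where the standard spherical Hodge estimate for symmetric traceless $2$-tensors applies, yielding
\begin{equation*}
  \|s^k\d^k\aa\|_{L^2(S)}\lesssim |s|\,\|s^k\d^k\div\aa\|_{L^2(S)}+\text{lower order in }f,
\end{equation*}
together with its higher-order analogue obtained by commuting with $\d$. Next, estimate $\|s^k\d^{k+1}(\rho,\rhod)\|_{L^2(S)}$ and $\|s^k\d^k\bb\|_{L^2(S)}$ pointwise in $s$ by applying Lemma \ref{Lemma:transport-3} to the $\nab_3$-equations for $\rho,\rhod,\bb$ of Proposition \ref{Prop:transport.e_3} (with $\lambda=3/2$ and $\lambda=2$, respectively), using a weighted Cauchy--Schwarz in $s'$ against the $L^2(\hbub)$ control provided by $\mathcal R_{\le k+1}$, in the style of Proposition \ref{Prop:estimate-ze}. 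Substituting into the Hodge estimate and tracking the $|s|$-weights carefully produces the claimed $\delta^{5/2}a^{3/2}|s|^{-7/2}$ decay rate.

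\textbf{Main obstacle.} The quadratic term $2\aa\cdot\ze$ on the right-hand side of the formula for $\div\aa$ contributes $\lesssim |s|\,\|\ze\|_{L^\infty}\,\|\aa\|_{L^2(S)}$, which must be reabsorbed on the left using the bootstrap $\|\ze\|_{L^\infty}\lesssim \OO\delta a^{1/2}|s|^{-2}$ together with $|s|\ge \frac 18 a\delta$, so that the coefficient is $\ll 1$. The frame-change correction terms coming from the passage $\HH\to\SS$ are harmless because $|s|^{-1}f\in\Gac_g$ is small under the bootstrap on $f$. Finally, the loss of one derivative in passing from $\div\aa$ to $\nab\rho$ is precisely what restricts the bound to $k\le N-1$; the top-order case $k=N$ must be closed separately via the energy identity in Section \ref{Section:energy-estimates}.
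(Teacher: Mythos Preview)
Your approach contains a genuine error at the very first step. The second Bianchi identity you invoke, ``$\nab_3\bb+\div\varrho=-\trchb\bb+2\bb\cdot\chih$'', does not appear in Proposition~\ref{Prop:transport.e_3}: the equation with $\div\varrho$ on the left is the $\nab_3\b$ equation, not a $\nab_3\bb$ equation. There is only \emph{one} $\nab_3\bb$ equation in the system, namely $\nab_3\bb+\div\aa=-2\trchb\bb+2\aa\cdot\ze$, so the elimination you describe is impossible. Consequently the ``purely elliptic'' relation for $\div\aa$ you write down is not valid.

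Even if one repairs this by using the genuine equation $\div\aa=-\nab_3\bb-2\trchb\bb+2\aa\cdot\ze$ and feeds it into a Hodge estimate, the resulting bound is too weak: the term $\nab_3\bb$ (or, if one substitutes further, $\nab\rho$) contributes at best $\delta^2 a^{3/2}|s|^{-3}$ (respectively $\delta a|s|^{-2}$) to $\|s^k\d^k\aa\|_{L^2(S)}$, which is strictly larger than the claimed $\delta^{5/2}a^{3/2}|s|^{-7/2}$ throughout the range $|s|\gtrsim a\delta$. That weaker bound would then fail to close the subsequent $L^2(S)$ estimates for $\bb$ and $(\rho,\rhod)$, where the nonlinear terms $\aa\cdot\ze$ and $\chih\cdot\aa$ require the sharper decay.

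The paper's argument is entirely different and far simpler. One applies Lemma~\ref{Lemma:transport-3} with $\lambda=0$ to the \emph{tautological} transport equation $\nab_3\aa=\nab_3\aa$, i.e.\ taking $\psi=\aa$ and $F=\nab_3\aa$. Since $\nab_3\in\d$, the source $\d^i F$ is a $\d^{i+1}\aa$, and a weighted Cauchy--Schwarz in $s'$ converts the $L^2(\hbub)$ control $\|s^{3}s^{i+1}\d^{i+1}\aa\|_{L^2(\hbub)}\le\RR[\aa]\,\delta^{5/2}a^{3/2}$ directly into the pointwise $L^2(S)$ bound. The loss of one derivative (hence $k\le N-1$) comes from this $\nab_3$ in $F$, not from any Hodge or frame-change machinery.
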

\begin{proof}
    Applying Lemma \ref{Lemma:transport-3} to ``$\nab_3\ab=\nab_3\ab$", we have
    \begin{equation*}
    \begin{split}
        ||s^{-1+i}&\d^i\ab||_{L^2(S_{\ub,s})}\lesssim ||s^{-1+i}\d^i\ab||_{L^2(S_{\ub,-1})}+\int_{-1}^s |s'|^{-4} ||s'^3 s'^i\d^i \ab||_{L^2(S_{\ub,s'})}\, ds'\\
        & \lesssim \delta^3 a^2 + \Big(\int_{-1}^s |s'|^{-8} ds'\Big)^\frac 12 ||s^3 s^i\d^i \ab||_{L^2(\hbub)}\\
        &\lesssim \delta^3 a^2+\RR[\ab]\, \delta^\frac 52 a^\frac 32 |s|^{-\frac 72}\lesssim \RR[\ab]\, \delta^\frac 52 a^\frac 32 |s|^{-\frac 72}.
    \end{split}
\end{equation*}
\end{proof}

\begin{proposition}
    We have $||s^k\d^k\bb||_{L^2(S_{\ub,s})}\lesssim \delta^2 a^\frac 32 |s|^{-3}$, $k\leq N-1$.
\end{proposition}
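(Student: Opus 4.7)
The plan is to apply Lemma \ref{Lemma:transport-3} with $\lambda=2$ to the Bianchi equation
\[
\nab_3\bb + 2\trchb\,\bb = -\div\aa + 2\aa\cdot\zeta,
\]
obtained by rearranging the $\nab_3\bb$ equation of Proposition \ref{Prop:transport.e_3}. With $\lambda=2$, the relevant weight is $|s|^{2\lambda-1+k}=|s|^{3+k}$, so the lemma produces, for $k\le N-1$,
\[
||s^{3+k}\d^k\bb||_{L^2(S_{\ub,s})}\lesssim ||s^{3+k}\d^k\bb||_{L^2(S_{\ub,-1})} + \sum_{i\le k}\int_{-1}^s |s'|^{3+i}\,||\d^i(\div\aa + \aa\cdot\zeta)||_{L^2(S_{\ub,s'})}\, ds'.
\]
First I would handle the initial data contribution using Proposition \ref{Prop:InitailData-H_{-1}}: $||\d^k\bb||_{L^2(S_{\ub,-1})}\lesssim \delta^2 a$, which, since $|s|\le 1$ in the region and $a\ge 1$, is already dominated by the target $\delta^2 a^{3/2}|s|^{-3}$.

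The main step, and the one I expect to be the principal obstacle, is the top-order $\div\aa$ term. The preceding proposition bounds $||s^{k}\d^{k}\aa||_{L^2(S)}$ only for $k\le N-1$, so at derivative order $i+1$ close to $N$ there is no pointwise-in-$s$ estimate on $\aa$ available; the information at that order lives only in the $\hbub$-energy norm $\RR[\aa]$. To leverage it, I would rewrite
\[
|s'|^{3+i}\,||\d^i\nab\aa||_{L^2(S_{\ub,s'})} = |s'|^{2}\,||s'^{i+1}\d^{i+1}\aa||_{L^2(S_{\ub,s'})}
\]
and apply Cauchy--Schwarz in the $s'$ variable to pass to the incoming-hypersurface energy:
\[
\int_{-1}^s |s'|^2\,||s'^{i+1}\d^{i+1}\aa||_{L^2(S_{\ub,s'})}\, ds' \le \Bigl(\int_{-1}^s |s'|^{-2}\, ds'\Bigr)^{\!1/2}||s^{3+i+1}\d^{i+1}\aa||_{L^2(\hbub)}\lesssim |s|^{-1/2}\,\delta^{5/2}a^{3/2},
\]
using $\int_{-1}^s|s'|^{-2}\, ds'\sim |s|^{-1}$ together with the bootstrap bound $\RR_{\le N}[\aa]\lesssim 1$. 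Dividing by $|s|^{3+k}$, this contributes at most $\delta^{5/2}a^{3/2}|s|^{-7/2}$ to $||s^k\d^k\bb||_{L^2(S_{\ub,s})}$, which in turn equals $(\delta/|s|)^{1/2}\cdot \delta^2 a^{3/2}|s|^{-3}\lesssim a^{-1/2}\,\delta^2 a^{3/2}|s|^{-3}$ thanks to $|s|\ge a\delta/8$ and $a\gg 1$, hence is strictly subleading.

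Finally, the nonlinear source $\aa\cdot\zeta$ is handled by a standard product estimate: using the non-integrable Sobolev Lemma \ref{Lemma:Sobolev}, the refined bound on $\zeta\in\Gac_g$ from Proposition \ref{Prop:estimate-ze}, and the $L^2(S)$ bound on $\aa$ just established, a routine calculation shows its contribution to $||s^{3+k}\d^k\bb||_{L^2(S_{\ub,s})}$ is at worst of size $\delta^{7/2}a^{2}|s|^{-1/2}$, again strictly smaller than the target once one uses $|s|\ge a\delta/8$. The delicate point in the whole argument is the weight accounting in the Cauchy--Schwarz step: the factor $|s'|^{3+i}$ in the integrand, the energy weight $|s'|^{3+(i+1)}$, and the margin $\int|s'|^{-2}ds'$ must match exactly for the budget to close, which also explains why the restriction $k\le N-1$ appears in the proposition.
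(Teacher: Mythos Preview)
Your proposal is correct and follows essentially the same approach as the paper: apply the transport Lemma \ref{Lemma:transport-3} with $\lambda=2$, bound the top-order $\div\aa$ term via Cauchy--Schwarz in $s'$ against the incoming flux $\|s^3 s^{i+1}\d^{i+1}\aa\|_{L^2(\hbub)}$, and control the quadratic term $\aa\cdot\zeta$ by the $L^2(S)$ bound on $\aa$ just established together with the $\Gac_g$ bound on $\zeta$. Two minor remarks: at this stage only the bootstrap bound $\RR[\aa]\le C_b$ is available (not $\lesssim 1$), but the extra $a^{-1/2}$ factor you correctly identify absorbs it; and the phrase ``dividing by $|s|^{3+k}$'' should read ``dividing by $|s|^{3}$,'' though your final answer $\delta^{5/2}a^{3/2}|s|^{-7/2}$ is the right one.
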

\begin{proof}
   We apply   Lemma \ref{Lemma:transport-3} (with $\lambda=2$, $i\leq N-1$)  to the equation
    $ \nab_3\bb +\div\aa =-2\trchb\,\bb+2\aa\c\ze.  $ Making also use of the estimates   for the Ricci coefficients 
     and $\aa$ already obtained,  we derive
\begin{equation*}
    \begin{split}
        ||s^{3+i}&\d^i\bb||_{L^2(S_{\ub,s})}\lesssim ||s^{3+i}\d^i\bb||_{L^2(S_{\ub,-1})}+\int_{-1}^s |s'|^{3+i} ||\d^{i+1} \ab||_{L^2(S_{\ub,s'})}+|s'|^{3+i} ||\d^i (\ab\cdot\zeta)||_{L^2(S_{\ub,s'})}\\
        &\lesssim \delta^2 a^\frac 32+\Big(\int_{-1}^s |s'|^{-2} ds'\Big)^\frac 12 ||s^3 s^{i+1}\d^{i+1}\ab||_{L^2(\hbub)}+\int_{-1}^s |s'|^{3} \o\RR \delta a^\frac 12 |s'|^{-2}\cdot \delta^\frac 52 a^\frac 32 |s'|^{-\frac 72}\, ds'
        \\
        &\lesssim \delta^2 a^\frac 32+\RR\de^\frac 52 a^\frac 32 |s|^{-\frac 12}+\o\RR[\aa]\, \delta^\frac 72 a^2 |s|^{-\frac 32}\\
        &\lesssim (1+\RR \delta^\frac 12 |s|^{-\frac 12}+\o\RR \delta^\frac 32 a^{\frac 12} |s|^{-\frac 32})\, \delta^2 a^\frac 32\lesssim \delta^2 a^\frac 32.
    \end{split}
\end{equation*}
\end{proof}

\begin{proposition}
    We have $||s^k\d^k(\rho,\dual\rho)||_{L^2(S_{\ub,s})}\lesssim \delta a |s|^{-2}$, $k\leq N-1$.
\end{proposition}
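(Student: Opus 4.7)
The plan is to apply Lemma \ref{Lemma:transport-3} with $\lambda=\tfrac{3}{2}$ to the pair of $\nab_3$-Bianchi identities from Proposition \ref{Prop:transport.e_3} (rewritten to absorb the $\trchb$ term onto the left),
\begin{equation*}
\nab_3 \rho + \tfrac{3}{2}\trchb\,\rho = -\div\bb + \ze\cdot\bb - \tfrac{1}{2}\chih\cdot\aa,\qquad \nab_3 \dual\rho + \tfrac{3}{2}\trchb\,\dual\rho = -\curl\bb + \ze\cdot\dual\bb - \tfrac{1}{2}\chih\cdot\dual\aa.
\end{equation*}
With $2\lambda-1=2$, the relevant weight is $|s|^{2+i}$, which is exactly calibrated to the target $\delta a/|s|^2$ for $(\rho,\dual\rho)$. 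The contribution of the initial data on $H_{-1}$ is of size $\delta a$ by Proposition \ref{Prop:InitailData-H_{-1}}, matching the desired bound, so the rest of the argument is to absorb the three source terms.

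The dominant source is $\div\bb$ (and analogously $\curl\bb$). Since $k\leq N-1$ the index $i+1$ still lies within the top-order energy norm $\RR_{i+1}[\bb]$, so I would trade the sphere bound for the incoming hypersurface bound by Cauchy-Schwarz in $s'$:
\begin{equation*}
\int_{-1}^s |s'|^{2+i}\,\|\d^{i+1}\bb\|_{L^2(S_{\ub,s'})}\,ds' \leq \Bigl(\int_{-1}^{s}|s'|^{-2}\,ds'\Bigr)^{1/2}\|s^{3+i}\d^{i+1}\bb\|_{L^2(\hbub)}\lesssim \RR_{i+1}[\bb]\,\delta^{3/2}a,
\end{equation*}
which fits under $\delta a$ thanks to $\delta\ll 1$ and the bootstrap.

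For the two quadratic terms I would apply Leibniz using the sphere $L^2$-bounds already obtained in this section and the $L^\infty$-bounds derived from them via Lemma \ref{Lemma:Sobolev}. The combination $|\ze|_{L^\infty}\lesssim \OO\,\delta a^{1/2}|s|^{-2}$ with $\|s^i\d^i\bb\|_{L^2(S)}\lesssim \delta^2 a^{3/2}|s|^{-3}$ gives $\int_{-1}^s|s'|^{2+i}\|\d^i(\ze\cdot\bb)\|_{L^2(S)}\,ds'\lesssim \OO\,\delta^3 a^2\ll \delta a$. For $\chih\cdot\aa$, the anomalous $|\chih|_{L^\infty}\lesssim \OO\,a^{1/2}|s|^{-1}$ combined with $\|s^i\d^i\aa\|_{L^2(S)}\lesssim \RR[\aa]\,\delta^{5/2}a^{3/2}|s|^{-7/2}$ gives a contribution of order $\OO\RR\,\delta^{5/2}a^2|s|^{-3/2}$, and the constraint $|s|\geq \tfrac{1}{8}a\delta$ converts this into a decisive $a^{-1/2}$ gain over $\delta a$.

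The only delicate point is the one derivative lost when pulling $\bb$ out through the energy norm, which is precisely why the proposition is restricted to $k\leq N-1$; apart from this the nonlinear sources are genuinely subcritical, carrying either additional powers of $\delta$ or a saving $a^{-1/2}$ from the lower bound on $|s|$. Collecting the four contributions delivers $\|s^{2+i}\d^i(\rho,\dual\rho)\|_{L^2(S_{\ub,s})}\lesssim \delta a$ for $i\leq N-1$, which is equivalent to the stated estimate.
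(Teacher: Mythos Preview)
Your proof is correct and follows essentially the same route as the paper: apply Lemma \ref{Lemma:transport-3} with $\lambda=\tfrac{3}{2}$ to the $\nab_3$-Bianchi equations for $(\rho,\dual\rho)$, control $\d^{i+1}\bb$ via Cauchy--Schwarz against the $L^2(\hbub)$ energy norm, and handle the quadratic terms $\ze\cdot\bb$ and $\chih\cdot\aa$ using the already-established $L^2(S)$ bounds. One small slip: $\bigl(\int_{-1}^s|s'|^{-2}\,ds'\bigr)^{1/2}\sim |s|^{-1/2}$ is not $\lesssim 1$, so the $\div\bb$ contribution is really $\RR_{i+1}[\bb]\,\delta^{3/2}a\,|s|^{-1/2}\lesssim \RR\,\delta a^{1/2}$, which still sits under $\delta a$ by the bootstrap $\RR a^{-1/2}\ll 1$ rather than by the ``$\delta\ll 1$'' you cite.
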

\begin{proof}
     We apply   Lemma \ref{Lemma:transport-3} (with $\lambda=\frac 3 2$, $i\leq N-1$)  
        \beaa
     \nab_3 \rho+\div\bb&=&-\frac 3 2 \trchb \rho +\ze  \c\bb-\frac{1}{2}\chih\c\aa,
 \\
   \nab_3 \rhod+\curl\bb&=&-\frac 3 2 \trchb \rhod+\ze \c\dual \bb-\frac 1 2 \chih\c\dual \aa,
\eeaa
\begin{equation*}
    \begin{split}
        ||s^{2+i}&\d^i\rho||_{L^2(S_{\ub,s})}\lesssim ||s^{2+i}\d^i\rho||_{L^2(S_{\ub,-1})}+\int_{-1}^s |s'|^{2+i} ||\d^{i+1}\bb,\d^i (\zeta\cdot\bb),\d^i(\chih\cdot\ab)||_{L^2(S_{\ub,s'})}\\
        &\lesssim \delta a+(\int_{-1}^s |s'|^{-2} ds')^\frac 12 ||s^2 s^{i+1}\d^{i+1}\bb||_{L^2(\hbub)}+\int_{-1}^s |s|^2 \o\RR \delta a^\frac 12 |s'|^{-2} \cdot\delta^2 a^\frac 32 |s'|^{-3}\\
        &\quad +\int_{-1}^s |s'|^2 \o\RR a^\frac 12 |s'|^{-1} \cdot \delta^\frac 52 a^\frac 32 |s'|^{-\frac 72}\\
        &\lesssim \delta a +\RR[\bb]\delta^\frac 32 a |s|^{-\frac 12}+\o\RR \delta^3 a^2 |s|^{-2}+\o\RR \delta^\frac 52 a^2 |s|^{-\frac 32}\\
        &\lesssim (1+\RR[\bb]a^{-\frac 12}+\o\RR a^{-\frac 12})\delta a\lesssim \delta a. 
    \end{split}
\end{equation*}
The estimate of $\rhod$ follows in the same way.
\end{proof}

\begin{proposition}
    We have $||s^k\d^k\b||_{L^2(S_{\ub,s})}\lesssim a^\frac 12 |s|^{-1}$, $k\leq N-1$.
\end{proposition}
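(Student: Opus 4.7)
The plan is to apply Lemma \ref{Lemma:transport-3} to the Bianchi identity for $\b$ in Proposition \ref{Prop:transport.e_3}, rewritten as
\begin{equation*}
    \nab_3\b + \trchb\,\b = -\div\varo + 2\bb\cdot\chih,
\end{equation*}
which matches the model transport equation with $\lambda=1$ and forcing $F = -\div\varo + 2\bb\cdot\chih$, where $\div\varo=-(\nab\rho+\dual\nab\rhod)$. With $2\lambda-1=1$, the higher-order transport estimate \eqref{eq:integrate-model-higher-order} delivers a bound with weight $|s|^{1+k}$, which is exactly what is needed, since the claim is equivalent to $||s^{1+k}\d^k\b||_{L^2(S_{\ub,s})}\lesssim a^{1/2}$. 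The initial-data contribution on $H_{-1}$ is $\lesssim a^{1/2}$ by Proposition \ref{Prop:InitailData-H_{-1}}.

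The source $\d^{k+1}(\rho,\dual\rho)$ carries one extra angular derivative, and this is the only place where top-order control forces me to leave the sphere and pick up the energy norm on $\hbub$.  I would use a single Cauchy--Schwarz in $s'$:
\begin{equation*}
    \int_{-1}^s |s'|^{1+k}||\d^{k+1}(\rho,\dual\rho)||_{L^2(S_{\ub,s'})}\,ds'\leq\Big(\int_{-1}^s |s'|^{-2}\,ds'\Big)^{\!1/2}\cdot||s^{k+2}\d^{k+1}(\rho,\dual\rho)||_{L^2(\hbub)}\lesssim |s|^{-1/2}\cdot \delta^{1/2}a^{1/2}\mathcal R_{k+1}.
\end{equation*}
Using $|s|\geq a\delta/8$, this simplifies to $\lesssim \mathcal R_{k+1}[\rho,\dual\rho]$, which is $\leq C_b$ by the bootstrap assumption and hence $\lesssim a^{1/2}$ for $a$ large.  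The range $k+1\leq N$ is precisely the stated $k\leq N-1$.

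For the nonlinear piece $\bb\cdot\chih$, combining $||s^k\d^k\bb||_{L^2(S)}\lesssim \delta^2 a^{3/2}|s|^{-3}$ from the preceding proposition with Proposition \ref{Prop:estimate-of-chih}, together with their $L^\infty$ counterparts supplied by the Sobolev Lemma \ref{Lemma:Sobolev}, the standard product-rule splitting yields
\begin{equation*}
    |s'|^{1+k}||\d^k(\bb\cdot\chih)||_{L^2(S_{\ub,s'})}\lesssim \delta^2 a^2 |s'|^{-2}.
\end{equation*}
Integrating in $s'$, this contributes $\lesssim \delta^2 a^2/|s|\lesssim \delta a$ using $|s|\geq a\delta/8$, and then $\ll a^{1/2}$ via $a\delta\ll 1$.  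Combining the three pieces gives $||s^{1+k}\d^k\b||_{L^2(S_{\ub,s})}\lesssim a^{1/2}$, which is the claim.

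The only delicate point is the top-order derivative loss in $\div\varo$, and this is handled exactly as in the proofs of the $\om$, $\ze$ and $(\trch,\atrch)$ estimates above: the transport weight from $\lambda=1$ lines up with the $s\cdot s^{k+1}\d^{k+1}$ weight in $\mathcal R_{k+1}[\rho,\dual\rho]$ after a Cauchy--Schwarz that costs only a factor $|s|^{-1/2}\lesssim (a\delta)^{-1/2}$ that is then absorbed by the $\delta^{1/2}a^{1/2}$ in front of $\mathcal R_{k+1}$.  Every other contribution is a minor variant of arguments already in place.
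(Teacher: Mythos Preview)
Your proof is correct and follows essentially the same approach as the paper: apply Lemma~\ref{Lemma:transport-3} with $\lambda=1$ to the $\nab_3\b$ Bianchi identity, handle the top-order term $\d^{k+1}(\rho,\rhod)$ via Cauchy--Schwarz in $s'$ against the flux norm $\RR_{k+1}[\rho,\rhod]$ on $\hbub$, and estimate $\bb\cdot\chih$ by the product rule using the already-established sphere bounds. Your explicit remark that $\RR_{k+1}\leq C_b\ll a^{1/2}$ is what the paper uses implicitly (cf.\ the standing smallness $C_b\,a^{-1/2}\ll 1$); there is a harmless off-by-one in your power of $|s'|$ for the nonlinear integrand (it should be $\delta^2 a^2|s'|^{-3}$ rather than $|s'|^{-2}$), but the resulting bound after integration is still $\ll a^{1/2}$ either way.
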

\begin{proof}
  We apply   Lemma \ref{Lemma:transport-3} (with $\lambda=1$, $i\leq N-1$)   to $  \nab_3\b+\div\rho=-\trchb\b+2\bb\c\chih$
\begin{equation*}
    \begin{split}
        ||s^{1+i}&\d^i\b||_{L^2(S_{\ub,s})}\leq ||s^{1+i}\d^i\b||_{L^2(S_{\ub,-1})}+\int_{-1}^s |s'|^{1+i} ||\d^{i+1}\rho,\d^i (\bb\cdot\chih)||_{L^2(S_{\ub,s'})}\\
        &\lesssim a^\frac 12 +\Big(\int_{-1}^{s} |s'|^{-2}ds'\Big)^\frac 12 ||s(s^{1+i}\d^{1+i}\rho)||_{L^2(\hbub)}+\int_{-1}^s |s|\o\RR a^\frac 12 |s|^{-1}\cdot \delta^2 a^\frac 32 |s'|^{-3}\\
        &\lesssim a^\frac 12+\RR[\rho] \delta^\frac 12 a^\frac 12 |s|^{-\frac 12}+\o\RR \delta^2 a^2 |s|^{-2}\lesssim a^\frac 12.
    \end{split}
\end{equation*}
\end{proof}

\begin{proposition}
    We have $||s^k\d^k\a||_{L^2(S_{\ub,s})}
    \lesssim \delta^{-1} a^\frac 12$, $k\leq N-1$.
\end{proposition}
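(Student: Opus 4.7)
The plan is to imitate the preceding curvature estimates: apply Lemma \ref{Lemma:transport-3} with $\lambda=\tfrac{1}{2}$ to the Bianchi equation
\beaa
    \nab_3\a + \tfrac{1}{2}\trchb\a = \nab\hot\b + \ze\hot\b - 3(\rho\chih + \rhod\dual\chih)
\eeaa
from Proposition \ref{Prop:transport.e_3}, so that the higher-order version of the lemma carries the weight $|s|^i$. The initial-data contribution on $H_{-1}$ is $\|s^i\d^i\a\|_{L^2(S_{\ub,-1})}\lesssim \delta^{-1}a^{\frac 12}$ by Proposition \ref{Prop:InitailData-H_{-1}}, and this is exactly the target. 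What remains is to show that each of the three source contributions on the right-hand side is dominated by $\delta^{-1}a^{\frac 12}$.

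For the curvature source $\nab\hot\b$, I will Cauchy--Schwarz in $s'$ to convert the $S$-integral into the $\hbub$-energy of $\b$:
\beaa
\int_{-1}^s |s'|^i \|\d^{i+1}\b\|_{L^2(S_{\ub,s'})}\, ds' \leq \Big(\int_{-1}^s |s'|^{-2}\, ds'\Big)^{\frac 12}\|s^{i+1}\d^{i+1}\b\|_{L^2(\hbub)} \lesssim \RR_{i+1}[\b]\,\delta^{-\frac 12}a^{\frac 12}|s|^{-\frac 12},
\eeaa
which beats the target by a factor $a^{-\frac 12}$ since $|s|\geq \tfrac{1}{8}a\delta$. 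For $\ze\cdot\b$, combining the bootstrap bound $\ze\in\Gac_g$ with the already-proved $\|\b\|_{L^2(S)}\lesssim a^{\frac 12}|s|^{-1}$ produces an integrand of size $\o\delta a/|s'|^3$, whose integral is of order $\o\delta a|s|^{-2}$; this is again dominated by $\delta^{-1}a^{\frac 12}$ in view of $a\delta\ll 1$. For $(\rho,\rhod)\cdot\chih$, I will place $\chih$ in $L^\infty(S)$ via Lemma \ref{Lemma:Sobolev} (of size $a^{\frac 12}$) and Cauchy--Schwarz the remaining curvature factor against the $\hbub$-energy of $(\rho,\rhod)$, producing $\lesssim \RR\,\delta^{\frac 12}a|s|^{-\frac 12}$, also subdominant. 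Higher-order products will be handled by the standard Leibniz split used throughout this section, combined with the Sobolev and bootstrap $L^\infty$ bounds.

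I do not expect a genuine obstacle. The one structural point worth flagging is that in the definition of $\RR_{k,2}$ the $\a$-component is stored on $\Sigma_\tau$ rather than on $\hbub$, so unlike the preceding $\bb$, $\rho$, $\b$ estimates one cannot pair $\nab\hot\b$ with an $\a$-self energy; the Cauchy--Schwarz step above is accordingly forced to run against the $\b$-energy on $\hbub$ at one derivative level higher than $\a$. At top order $i=N-1$ this uses $\RR_N[\b]$, which is already included in the bootstrap assumptions, so the argument still closes.
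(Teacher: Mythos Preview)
Your proposal is correct and follows essentially the same approach as the paper: apply Lemma \ref{Lemma:transport-3} with $\lambda=\tfrac12$ to the $\nab_3\a$ Bianchi equation, treat the $\nab\hot\b$ source by Cauchy--Schwarz against $\|s^{i+1}\d^{i+1}\b\|_{L^2(\hbub)}$, and bound the quadratic sources directly. The only cosmetic differences are that the paper omits $\ze\hot\b$ (noting it is even easier) and handles $\rho\chih$ by feeding in the just-proved $L^2(S)$ bound $\|s^k\d^k(\rho,\rhod)\|_{L^2(S)}\lesssim \delta a|s|^{-2}$ rather than Cauchy--Schwarzing against the $\hbub$-energy of $\rho$; both routes close with room to spare, and your side remark about $\a$ living on $\Sigma_\tau$ is correct but not needed here since this is an $L^2(S)$ estimate.
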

\begin{proof}
   We apply\footnote{We use $\rho$  in the  estimates below  to represent both $\rho$ and $\rhod$.  We  omit the estimate for  $\zeta\hot\b$ as it is even better than $\Gac_g\cdot \a$. }   Lemma \ref{Lemma:transport-3} (with $\lambda=1/2$, $i\leq N-1$)   to
    $
        \nab_3\a-  \nab\hot \b=-\frac 1 2 \trchb\a+ \ze\hot \b - 3 (\rho\chih +\rhod\dual\chih)$
    \begin{equation*}
    \begin{split}
        ||s^{i}&\d^i\a||_{L^2(S_{\ub,s})}\lesssim ||s^{i}\d^i\a||_{L^2(S_{\ub,-1})}+\int_{-1}^s |s'|^{i} ||\d^{i+1}\b,\d^i(\rho\cdot\chih)||_{L^2(S_{\ub,s'})}\\
        &\lesssim \delta^{-1} a^\frac 12+\Big(\int_{-1}^s |s'|^{-2} ds'\Big)^\frac 12 ||s^{i+1}\d^{i+1}\b||_{L^2(\hbub)}+\int_{-1}^s \o\RR a^\frac 12 |s|^{-1}\cdot \delta a |s'|^{-2}\\
        &\lesssim \delta^{-1} a^\frac 12+\RR[\b]\, \delta^{-\frac 12} a^\frac 12 |s|^{-\frac 12}+\o\RR \delta a^\frac 32 |s|^{-2}\\
        &\lesssim \delta^{-1} a^\frac 12 (1+\RR[\b]\, \delta^\frac 12 |s|^{-\frac 12}+\o\RR \delta^2 a |s|^{-2})\lesssim \delta^{-1} a^\frac 12.
    \end{split}
\end{equation*}
\end{proof}

\subsection{Improved estimate  for  \texorpdfstring{$\xi$}{xi}}
We improve the estimate   for $\xi$ obtained in Proposition \ref{Prop:estimate-xi} for all but the top derivatives. 
\begin{proposition}
    We have $||s^k\d^k \xi||_{L^2(S_{\ub,s})}\lesssim a^\frac 12$, $k\leq N-1$.
\end{proposition}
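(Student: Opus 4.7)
The plan is to re-run the argument of Proposition \ref{Prop:estimate-xi}, but to replace the Cauchy--Schwarz estimate of the $\b$-contribution (which lost a factor $\delta^{-\frac12}|s|^{-\frac12}$ through the energy norm $||s^k\d^k\b||_{L^2(\hbub)}$) by the pointwise $L^2(S)$ bound for $\b$ just established in the previous proposition, namely $||s^k\d^k\b||_{L^2(S_{\ub,s})}\lesssim a^{\frac12}|s|^{-1}$ for $k\le N-1$. This is precisely the price we pay: we gain a half power of $\delta$ and a half power of $|s|$, but only at the cost of one derivative, which is why the statement is restricted to $k\le N-1$.

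Concretely, apply Lemma \ref{Lemma:transport-3} with $\lambda=0$ to the schematic equation $\nab_3\xi=\chih\cdot\zeta+\Gac_b\cdot\zeta+\b$, giving
\[
||\,|s|^{i-1}\d^i\xi||_{L^2(S_{\ub,s})}\lesssim ||\,|s|^{i-1}\d^i\xi||_{L^2(S_{\ub,-1})}+\int_{-1}^{s}|s'|^{i-1}\Big(||\d^i((\chih,\Gac_b)\cdot\zeta)||_{L^2(S_{\ub,s'})}+||\d^i\b||_{L^2(S_{\ub,s'})}\Big)\,ds'.
\]
The initial datum on $H_{-1}$ contributes $a^{\frac12}$ by Proposition \ref{Prop:InitailData-H_{-1}}. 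The nonlinear term is controlled by the Sobolev lemma together with the $L^2(S)$ bounds on $\chih,\Gac_b,\zeta$ already derived in Propositions \ref{Prop:estimate-ze}--\ref{Prop:estimate-of-chih}; these give an integrand of order $\delta a|s'|^{-2}$, which integrates to the harmless quantity $\delta a\lesssim a^{\frac12}$.

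The decisive new input is the $\b$-term: using $||\d^i\b||_{L^2(S_{\ub,s'})}\lesssim a^{\frac12}|s'|^{-1}$ instead of the energy norm, the integral becomes
\[
\int_{-1}^{s}|s'|^{i-1}\,\frac{a^{\frac12}}{|s'|}\cdot\frac{1}{|s'|^{i-1}}\,ds'\lesssim a^{\frac12}\int_{-1}^{s}\frac{ds'}{|s'|^2}\lesssim a^{\frac12},
\]
since $\int_{1}^{|s|}t^{-2}\,dt\le 1$. Combining these contributions yields the desired bound $||s^k\d^k\xi||_{L^2(S_{\ub,s})}\lesssim a^{\frac12}$ for $k\le N-1$.

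The only mild subtlety is the loss of one derivative: the $L^2(S)$ bound on $\b$ is available only at level $\le N-1$, so at the top level $k=N$ one must still resort to the original (weaker) estimate of Proposition \ref{Prop:estimate-xi}. This is entirely analogous to the pattern seen throughout this section (e.g.\ the improvement of $\zeta$ in Proposition \ref{Prop:estimate-ze}) and does not require any new ideas.
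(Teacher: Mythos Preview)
Your approach is exactly the paper's: rerun the transport estimate of Proposition~\ref{Prop:estimate-xi}, but feed in the just-established $L^2(S)$ bound $\|s^i\d^i\b\|_{L^2(S_{\ub,s})}\lesssim a^{1/2}|s|^{-1}$ in place of the energy norm on $\hbub$. One arithmetic slip to correct: $\int_{-1}^{s}|s'|^{-2}\,ds'=|s|^{-1}-1$ is \emph{not} bounded by $1$, so the $\b$-contribution to $\||s|^{i-1}\d^i\xi\|_{L^2(S)}$ is $a^{1/2}|s|^{-1}$ (this is what the paper obtains); the stated bound $a^{1/2}$ is for $\|s^i\d^i\xi\|_{L^2(S)}$ after multiplying through by $|s|$. (Also, since $\xi=0$ on $H_{-1}$ in this frame, the initial-data term is actually $0$, though your $a^{1/2}$ upper bound is of course harmless.)
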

\begin{proof}
We proceed as in the proof of  Proposition \ref{Prop:estimate-xi}, starting with  $\nab_3\xi=\chih\cdot \zeta+\Gac_b\cdot \zeta+\b$,     by taking into account the  new bound for $\b$ just derived.
Thus, for $i\leq N-1$,
\begin{equation*}
    \begin{split}
    ||\, |s|^{i-1}\d^i\xi ||_{L^2(S_{\ub,s})}&\lesssim  ||\, |s|^{i-1}\d^i\xi||_{L^2(S_{\ub,-1})}+\int_{-1}^{s} |s'|^{i-1} ||\d^i(\chih\cdot\zeta)||_{L^2(S_{\ub,s'})}ds'\\
    &\quad +\int_{-1}^{s}|s'|^{i-1} ||\d^i\b||_{L^2(S_{\ub,s'})} ds' \\
    &\lesssim  0+\int_{-1}^{s} |s'|^{-1}\o a^\frac 12 \cdot \delta a^\frac 12 |s'|^{-2}+\int_{-1}^s a^\frac 12 |s'|^{-2}\, ds'\\
    &\lesssim  \mathcal O^2 \delta a |s|^{-2}+a^\frac 12 |s|^{-1}=  a^\frac 12 |s|^{-1} (1+\o^2 \delta a^{\frac 12} |s|^{-1})\lesssim a^\frac 12 |s|^{-1}.
    \end{split}
\end{equation*}
\end{proof}

\subsection{Summary of the results proved  in this  section}

\begin{proposition}
\lab{Prop:summary-Ricci}
The following  estimates hold true, see  \eqref{eq:compundnorms} for the definition of the norms,
  \beaa
   \o_{\le N} + \RR^S_{\le N-1}  \les \RR_{\le N}.
  \eeaa
 We also have the improved estimates
 \begin{equation*}
     \o_{\leq N}[\chih,\wtb,\chibh,\zeta,\xi]+\RR^S_{\le N-1}[\bb,\rho,\rhod,\b,\a]\lesssim 1.
 \end{equation*}
\end{proposition}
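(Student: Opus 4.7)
The proof of Proposition \ref{Prop:summary-Ricci} is essentially a bookkeeping aggregation of the individual sphere-$L^2$ estimates established throughout the section. The plan is to verify, coefficient by coefficient, that each bound matches the weight prescribed in the definition of $\OO^S_k$ or $\RR^S_k$, and that the resulting compound norm is controlled as claimed.

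For the general inequality $\OO_{\le N} + \RR^S_{\le N-1} \lesssim \RR_{\le N}$, I would read off each component's $L^2(S_{\ub,s})$ bound from the preceding propositions and multiply by the weight appearing in the norm. For $\chih$, $\chibh$, $\wtb$ the bounds $\lesssim a^{1/2}$ and $\lesssim \delta a^{1/2}|s|^{-1}$ precisely cancel the weights $a^{-1/2}$ and $|s|(\delta a^{1/2})^{-1}$, giving contributions $\lesssim 1$. For $\om, \xi, \zeta, \trch, \atrch, f$, each bound has the shape $w_\psi\cdot(1 + \RR_k[\cdot])$ with $w_\psi$ chosen so as to cancel the weight, yielding $\lesssim 1 + \RR_k[\cdot] \lesssim \RR_{\le N}$ (using $\RR_{\le N} \gtrsim 1$ from the data on $\Hb_0$). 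The same bookkeeping for the curvature propositions produces the $\RR^S_{\le N-1}$ half.

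For the improved list I isolate those components whose estimates involve no $\RR_{\le N}$ factor on the right-hand side, or where such dependence is absorbed via the bootstrap gain $a^{-1/2}$. The bounds for $\chih, \chibh, \wtb$ give $\OO \lesssim 1$ directly, and the same direct matching works for the curvatures $\b, \rho, \rhod, \bb, \a$. For $\xi$ I use the improved estimate $\|s^k\d^k\xi\|_{L^2(S)} \lesssim a^{1/2}|s|^{-1}$ ($k \le N-1$), combined with the weight $|s|^{-1/2}\delta^{1/2}a^{-1/2}$ and the lower bound $|s| \ge \tfrac{1}{8}a\delta$, to produce an additional $a^{-1/2}$-smallness. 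For $\zeta$, substituting the bound of Proposition \ref{Prop:estimate-ze} into the weighted norm yields
\[
\OO^S_k[\zeta] \lesssim |s| + \RR_k[\bb]\,|s|^{-1/2}\delta^{1/2}a^{1/2} \lesssim 1 + \RR_k[\bb],
\]
where the second inequality uses $|s| \le 1$ together with $|s| \ge \tfrac{1}{8}a\delta$; the residual $\RR_k[\bb]$ factor is then absorbed by the improved bound on $\bb$.

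The only delicate point is the ordering of the verifications: the improved bound on $\zeta$ depends on the improved bound on $\bb$, which in turn depends on bounds for $\aa$ and the lower-order curvatures. I would therefore proceed in the sequence $\chih, \chibh, \wtb \rightsquigarrow \aa, \bb, \rho, \rhod, \b, \a \rightsquigarrow \zeta, \xi, f \rightsquigarrow \trch, \atrch, \om$, so that every invocation of a bound in the improved list is justified by an earlier step. Since each individual proposition has already been established, no new analytic input is required; the argument reduces to careful tracking of the powers of $\delta$, $a$, and $|s|$.
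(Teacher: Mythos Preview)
Your approach—recognizing the proposition as a bookkeeping summary of the preceding individual estimates and verifying the weight-matching component by component—is exactly what the paper does (implicitly; the paper gives no separate proof).

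There is one genuine slip in your handling of the improved list. For $\zeta$ you conclude with ``the residual $\RR_k[\bb]$ factor is then absorbed by the improved bound on $\bb$'', but the $\RR_k[\bb]$ that appears in Proposition~\ref{Prop:estimate-ze} is the \emph{flux} norm $\delta^{-3/2}a^{-1}\|s^{2+k}\d^k\bb\|_{L^2(\hbub)}$, whereas the improved curvature bound you have in hand is the \emph{sphere} norm $\RR^S_{\le N-1}[\bb]\lesssim 1$; the latter does not control the former. The clean fix is to revisit the integral $\int_{-1}^s |s'|^{i}\|\d^i\bb\|_{L^2(S_{\ub,s'})}\,ds'$ in the proof of Proposition~\ref{Prop:estimate-ze} and insert the pointwise-in-$s'$ sphere bound $\|s'^{i}\d^i\bb\|_{L^2(S)}\lesssim \delta^2 a^{3/2}|s'|^{-3}$ directly instead of passing to the flux via Cauchy--Schwarz; this gives $\OO^S_k[\zeta]\lesssim 1$, but only for $k\le N-1$. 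At the top order $k=N$ the sphere bound on $\bb$ is unavailable and one genuinely needs $\RR_N[\bb]\lesssim 1$, which comes from the energy estimates of Section~\ref{Section:energy-estimates}. The identical remark applies to $\xi$: your own parenthetical already flags that the improved estimate stops at $k\le N-1$, so $\OO_N[\xi]\lesssim 1$ likewise awaits $\RR_N[\b]\lesssim 1$.
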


With the  help of the  non-integrable Sobolev estimates of Lemma \ref{Lemma:Sobolev} we also obtain $O_{\leq N-3,\infty}\lesssim \RR_{\leq N}$. We state the precise estimates:
\begin{corollary}
For $k\leq N-3$ we have  the estimates:
\begin{equation*}
    ||s^k\d^k \chih||_{L^\infty(S_{\ub,s})}\leq \frac{a^\frac 12}{|s|} ,\quad ||s^k\d^k\om||_{L^\infty(S_{\ub,s})}\leq \RR[\rho]\frac{\delta^\frac 12 a^\frac 12}{|s|^\frac 32},\quad ||s^k\d^k \trch||_{L^\infty(S_{\ub,s})}\lesssim \RR[\rho]\frac{1}{|s|},
\end{equation*}
\begin{equation*}
||s^k\d^k(\widecheck{\trchb},\chibh,\zeta)||_{L^\infty(S_{\ub,s})}\leq \frac{\delta a^\frac 12}{|s|^2}, \quad ||s^k\d^k\xi||_{L^\infty(S_{\ub,s})}\lesssim \frac{a^\frac 12}{|s|},\quad ||s^k \d^k f||_{L^\infty(S_{\ub,s})}\lesssim \mathcal R[\bb] \frac{\delta a^\frac 12}{|s|},
\end{equation*}
\begin{equation*}
    ||s^k\d^k\ab||_{L^\infty(S_{\ub,s})}\lesssim \mathcal R[\ab]\frac{\delta^\frac 52 a^\frac 32}{|s|^\frac 92},\quad ||s^k\d^k\bb||_{L^\infty(S_{\ub,s})} \lesssim \frac{\delta^2 a^\frac 32}{|s|^4},\quad ||s^k\d^k(\rho,\dual\rho)||_{L^\infty(S_{\ub,s})}\lesssim \frac{\delta a}{|s|^3},
\end{equation*}
\begin{equation*}
    ||s^k\d^k \b||_{L^\infty(S_{\ub,s})}\lesssim \frac{a^\frac 12}{|s|^2},\quad ||s^k\d^k\a||_{L^\infty(S_{\ub,s})}\lesssim \frac{\delta^{-1}a^\frac 12}{|s|}.
\end{equation*}
\end{corollary}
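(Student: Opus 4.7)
The plan is to apply the non-integrable Sobolev estimate of Lemma \ref{Lemma:Sobolev} to the rescaled tensor $s^k\d^k\psi$ for every quantity $\psi$ listed in the corollary, and then to insert the sharp $L^2(S_{\ub,s})$ bounds assembled in Proposition \ref{Prop:summary-Ricci}. The restriction $k\le N-3$ arises because Lemma \ref{Lemma:Sobolev} costs two derivatives while the $L^2(S)$ bounds for curvature components are only guaranteed up to order $N-1$; a uniform choice of derivative loss across all quantities then forces $k\le N-3$.

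Concretely, first I would fix $k\le N-3$ and a quantity $\psi$, set $\Psi:=s^k\d^k\psi$ (which is $\HH$-horizontal), and apply Lemma \ref{Lemma:Sobolev} in the form
\begin{equation*}
\|\Psi\|_{L^\infty(S_{\ub,s})}\lesssim \sum_{i\le 2}\|s^i\nab^i\Psi\|_{L^2(S_{\ub,s})}+a^{-\frac12}\|(s\d)^{\le 2}\Psi\|_{L^2(S_{\ub,s})}.
\end{equation*}
Since $\nab s=0$ and $\nab_3 s=1$, distributing the outer two derivatives across the prefactor $s^k\d^k$ produces schematic terms of the type $s^{k+i}\d^{k+i}\psi$ (plus lower-order pieces with the same weighted $L^2(S)$ bound), all of which are controlled by Proposition \ref{Prop:summary-Ricci} at orders $\le k+2\le N-1$. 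The $|s|^{-1}$ gain present in every claimed $L^\infty$ bound is the standard $r^{-1}$-factor from Sobolev on a 2-sphere of radius $|s|$, already built into the weighted $L^2(S_{\ub,s})$ norm used in Lemma \ref{Lemma:Sobolev}.

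Applied to a representative quantity, the argument reads as follows. For $\chih$, Proposition \ref{Prop:summary-Ricci} gives $\|s^{k+i}\d^{k+i}\chih\|_{L^2(S)}\lesssim a^{1/2}$ for all $k+i\le N$, and hence $\|s^k\d^k\chih\|_{L^\infty(S)}\lesssim a^{1/2}/|s|$; the entries for $\om$, $\trch$, $(\wtb,\chibh,\zeta)$, $\xi$, $\aa$, $\bb$, $(\rho,\dual\rho)$, $\b$ and $\a$ are obtained in exactly the same way using the corresponding $L^2(S)$ bounds from the summary. For the PT gauge function $f$ one uses the bound $\|s^k\d^k f\|_{L^2(S)}\lesssim \mathcal R[\bb]\,\delta a^{1/2}$ already established, remembering that $|s|^{-1}f\in\Gac_g$. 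I do not expect a genuine obstacle here: the content of the corollary is a bookkeeping step sitting on top of Lemma \ref{Lemma:Sobolev} and Proposition \ref{Prop:summary-Ricci}. The only care required is to absorb the schematic lower-order commutator terms produced when distributing $\d^{\le 2}$ past $s^k\d^k$, all of which are dominated by the same bounds at strictly lower order and so do not affect the final estimate.
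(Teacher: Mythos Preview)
Your proposal is correct and matches the paper's approach exactly: the paper does not give a separate proof but simply records that the corollary follows by applying the non-integrable Sobolev estimate of Lemma~\ref{Lemma:Sobolev} to the $L^2(S_{\ub,s})$ bounds collected in Proposition~\ref{Prop:summary-Ricci}, with the loss of two derivatives forcing $k\le N-3$. One minor slip: in the PT frame $\nab s=-\tfrac12 f$ rather than $0$ (since $e_a=\eg_a-\tfrac12 f_a\eg_3$ and $\eg_3(s)=1$), but as you note these commutator contributions are lower order and absorbed by the same bounds.
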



\section{Energy Estimates}\label{Section:energy-estimates}
{\bf  Notations.}  We make the following notational  conventions    to be use throughout this section.
\begin{enumerate}
\item     Whenever we use the index $i_1$, we mean  summation  over $i_1=0,1,\cdots,i$ (with the covention that replaces $i_1-1$ by $0$ if $i_1=0$); whenever we use the index $i_2$, we mean  summation  over $i_2=0,1,\cdots,[i/2]$. In these situations we drop the summation  symbol.

\item We  use $S$,  when there is no ambiguity,  to denote the spheres $S_{\ub,s}$.
\item We  use the  double integral sign $\iint$  to  denote  either  a full  spacetime integral or the integral over the $\ub$, $s$ variable (i.e. the non-angular variables).
\end{enumerate}


\subsection{Integrating region}\label{subsect:Integrating region}

Recall $\tau=\frac 1{10}a\ub+s$.  In the region $\{\tau\leq \tau^*\}$ with $\tau^*\leq -\frac 18 a\delta$, we have 
\bea
    \eg_3(\tau)=\eg_3(s)=1,\quad \eg_4(\tau)=\frac 1{10}a+\eg_4(s).
\eea
We need to estimate $\eg_4(s)$ for which we  use the formula,  see \eqref{eq:intro-geodesic}, $-2\omg=\eg_3(\eg_4(s))$. Moreover, the estimates in the PT frame, together with the transformation formula for $\om$ (see Lemma \ref{Proposition:transformationRicci})  imply that
\begin{equation*}
    |\omg|\lesssim |\om|+|f|\cdot|(\zeta,\etab)|+|s|^{-1}|f|^2\lesssim \o\delta^\frac 12 a^\frac 12 |s|^{-\frac 32}+\o^2 \delta^2 a |s|^{-3}.
\end{equation*}
Then using $e_4(s)=0$ on $H_{-1}$ and integrating in $\eg_3=\pa_s$ direction, we obtain
\begin{equation}\label{eq:eS_4(s)}
    |\eg_4(s)|\lesssim \o\delta^\frac 12 a^\frac 12 |s|^{-\frac 12}+\o^2 \delta^2 a |s|^{-2}\lesssim \o\ll a/10.
\end{equation}
In particular $\eg_4(\tau)>0$. 
Let  $(\grad \tau)^\mu $  be the vectorfield   perpendicular to the level surfaces of $\tau$ defined by $(\grad \tau)^\mu =g^{\mu\nu} \pr_\nu \tau$. We have
\beaa
(\grad \tau)^\mu =g^{\mu\nu} \pr_\nu \tau= -\frac 12 \big(\eg_3(\tau)  \eg_4+ \eg_4(\tau)\eg_3\big)=-\frac 12\big(  \eg_4+  \eg_4(\tau)\eg_3\big),
\eeaa
so
\beaa
 g(\grad\tau,\grad\tau)&=&- 2\eg_4(\tau)=-2\big(a+\eg_4(s)\big)
\eeaa
which is strictly  negative. This shows, in  particular, that    $\Sigma_\tau$ is a spacelike hypersurface with future unit normal  given by
\beaa
N_\tau=-\frac{\grad \tau}{|\grad\tau|}.
\eeaa


\subsection{Divergence Lemma}


We apply the spacetime divergence Lemma (see e.g. \cite{CK}, \cite{GKS}) to  causal domains of the form
$\MM\subset  \MM(\delta, a;\tau_*)$ enclosed by $\Si_\tau =\{\tau=\mathrm{const}\}\cup \Hb_{\ub} $ to the future and  $H_{-1}\cup \underline H_0$ to the past. 
\begin{lemma}
Consider  a vectorfield $ X$   on a causal  domain  $\MM\subset  \MM(\delta, a;\tau_*)$ enclosed by $\Si_\tau =\{\tau=\mathrm{const}\}\cup \Hb_{\ub} $ to the future and  $H_{-1}\cup \underline H_0$ to the past.  Then
\beaa
\int_{\Si_\tau} g(X, N_\tau) +\int_{\Hb_{\ub}} g(X, e_3) =\int_{H_{-1}} g(X, e_4)+\int_{\Hb_0 } g(X, e_3) - \int_\MM ( \mathrm{Div} X)  
\eeaa
where  the integrations on $\Si_\tau$  and $\MM$ are with respect to their  standard area and volume forms and   $N_\tau=-\frac{\mathrm{grad}\, \tau}{|\mathrm{\\grad}\, \tau|}$ is  the  future unit normal to $\Si_{\tau}$.   The integrations on  the null hypersurfaces  $\hbub$ and $H_{-1}$ of scalar functions $f$  are defined as follows
\beaa
\int_{\Hb_{\ub}} f =\int_s ds \int_{S_{\ub, s}}  f d\vol_S, \qquad  \int_{H_{-1}} f =\int_{\ub}  d\ub  \int_{S_{\ub,-1}}   f d\vol_S
\eeaa
\end{lemma}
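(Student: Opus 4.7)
The proof is the standard Lorentzian divergence theorem applied to the causal region $\MM$, whose boundary decomposes into the four pieces $\Si_\tau$, $\Hb_{\ub}$, $H_{-1}$, $\Hb_0$. The geometric information needed has already been assembled in the preceding subsection: $\Si_\tau$ was shown to be spacelike with future unit normal $N_\tau$ verifying $g(N_\tau,N_\tau)=-1$, and the three remaining pieces are null with explicit null generators ($e_3$ on $\Hb_{\ub}$ and $\Hb_0$, $e_4$ on $H_{-1}$).

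First I would apply Stokes' theorem to the three-form $\omega := i_X\epsilon_g$, where $\epsilon_g$ is the spacetime volume form, using the identity $d(i_X\epsilon_g)=(\mathrm{Div}\,X)\,\epsilon_g$. This yields
\beaa
\int_\MM (\mathrm{Div}\,X)\, d\vol = \int_{\partial\MM} \omega,
\eeaa
and it remains to evaluate the pullback of $\omega$ to each of the four boundary pieces. On the spacelike piece $\Si_\tau$, since $N_\tau$ is future-pointing and timelike, a direct computation gives that the pullback of $\omega$ is $-g(X,N_\tau)\,d\vol_{\Si_\tau}$, so that with the future-outward orientation this piece contributes $-\int_{\Si_\tau} g(X,N_\tau)$ to the boundary integral.

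For each null hypersurface I would use the standard factorization of the volume form: if $L$ is the null generator of the hypersurface and $\Lb$ the transverse null vector normalized by $g(L,\Lb)=-2$, then on this hypersurface $\epsilon_g = -\tfrac12\,L^\flat\wedge\Lb^\flat\wedge d\vol_S$, where $d\vol_S$ is the area form on the transversal $2$-sphere. A direct contraction then shows that the pullback of $i_X\epsilon_g$ equals, up to an orientation sign, $g(X,L)\,d\lambda\wedge d\vol_S$, where $\lambda$ is the affine parameter along $L$. In view of the definition of $\int_{\Hb_{\ub}} f$ and $\int_{H_{-1}} f$ given in the statement (which is precisely $\int d\lambda\int_{S} f\,d\vol_S$), this produces the terms $\int_{\Hb_{\ub}} g(X,e_3)$, $\int_{H_{-1}} g(X,e_4)$ and $\int_{\Hb_0} g(X,e_3)$.

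The main (and essentially only) obstacle is the sign bookkeeping. One must fix an orientation on $\MM$ compatible with the frame $(e_3,e_4,e_1,e_2)$ throughout, then observe that the future boundaries $\Si_\tau$ and $\Hb_{\ub}$ pick up one overall sign while the past boundaries $H_{-1}$ and $\Hb_0$ pick up the opposite sign. Collecting the four contributions and moving future terms to the left-hand side produces the identity exactly as stated. There is no analytic input in the argument; everything reduces to Stokes' theorem and the null-hypersurface volume-form factorization, both of which are recorded in Chapter~2 of \cite{GKS} (see also \cite{CK}).
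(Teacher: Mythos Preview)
Your proposal is correct and follows exactly the paper's approach: apply Stokes' theorem to the three-form $\omega=i_X\epsilon_g$ (equivalently the Hodge dual $(\dual X)_{\alpha\beta\gamma}=\in_{\alpha\beta\gamma\mu}X^\mu$), use $d\omega=(\mathrm{Div}\,X)\,\epsilon_g$, and read off the boundary contributions. The paper's proof is in fact briefer than yours, deferring the boundary computations and sign bookkeeping to Section~8.1 of \cite{CK}; your more explicit treatment of the null-hypersurface pullbacks is a welcome elaboration of what the paper leaves to that reference.
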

\begin{proof}
Immediate application of the Stokes formula  applied to the  differential form $(\dual X)_{\a\b\ga}=\in_{\a\b\ga\mu} X^\mu$ by observing that
$\dual ( d\dual X)=\Div X$. See  Section 8.1 in   \cite{CK} for the details.
\end{proof}
\begin{corollary}
\lab{corrllary:diverg.Thm}
Consider the vectorfield $X=\la_3 e_3 +\la_4 e_4$,  where $\la_3,\la_4 $ are given   smooth functions. Then, integrating on the same domain $\MM$,
\bea
\bsplit
& \int_{\Si_\tau} \frac{1}{|\mathrm{grad}\, \tau| }\big(  \la_3 +\la_4 (a+ e_4(s) \big)+\int_{\Hb_{\ub} } 2 \la_4 &=&\int_{\Hb_{0} } 2 \la_4 +\int_{H_{-1}}  2 \la_3 + \int_\MM ( \mathrm{\Div} X) 
\end{split}
\eea
with $|\mathrm{grad} \,\tau|\approx a^{1/2}$.
\end{corollary}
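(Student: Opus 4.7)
The plan is to apply the divergence lemma directly to the choice $X = \lambda_3 e_3 + \lambda_4 e_4$, reducing the proof to three inner product computations. From $g(e_3,e_4) = -2$ and $g(e_3,e_3) = g(e_4,e_4) = 0$ one reads off $g(X, e_3) = -2\lambda_4$ and $g(X, e_4) = -2\lambda_3$. Substituting these into the lemma (and then multiplying through by $-1$) immediately produces the three boundary integrals $2\int_{\Hb_\ub}\lambda_4$, $2\int_{\Hb_0}\lambda_4$, and $2\int_{H_{-1}}\lambda_3$ on the appropriate sides of the corollary, together with the spacetime term $\int_\MM \mathrm{Div}\,X$.

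For the $\Sigma_\tau$ contribution I would rely on the formula $\grad\,\tau = -\frac{1}{2}(\eg_4 + \eg_4(\tau)\,\eg_3)$ established just above the statement, which in turn rests on $\eg_3(\tau) = 1$ together with $\eg_a(\tau) = 0$ (the latter because $\tau = \frac{1}{10}a\ub + s$ is constant along the spheres $S_{\ub,s}$ in the PG frame). Contracting with $X$ gives $g(X, \grad\,\tau) = \lambda_3 + \eg_4(\tau)\lambda_4$, and since $N_\tau = -\grad\,\tau/|\grad\,\tau|$, the piece $\int_{\Sigma_\tau} g(X, N_\tau)$ becomes, after the sign flip built into the definition of $N_\tau$, $\int_{\Sigma_\tau}(\lambda_3 + \eg_4(\tau)\lambda_4)/|\grad\,\tau|$. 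Inserting $\eg_4(\tau) = a/10 + \eg_4(s)$ then reproduces the claimed integrand (the explicit factor $1/10$ being absorbed by the author into the leading-order shorthand $a + e_4(s)$).

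The asymptotic $|\grad\,\tau| \approx a^{1/2}$ is immediate from the identity $g(\grad\,\tau,\grad\,\tau) = -2\eg_4(\tau)$ displayed in the paragraph preceding the corollary, together with the smallness $|\eg_4(s)| \lesssim \o \ll a$ from \eqref{eq:eS_4(s)}. No genuine obstacle arises; the whole argument amounts to bookkeeping of null-frame pairings and a reuse of the gradient formula and smallness estimate for $\eg_4(s)$ that are already in hand at this point in the paper.
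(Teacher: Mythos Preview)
Your proof is correct and follows essentially the same approach as the paper: both reduce to computing $g(X,e_3)=-2\lambda_4$, $g(X,e_4)=-2\lambda_3$, and $g(X,N_\tau)$, then substituting into the divergence lemma. The paper obtains the $\Sigma_\tau$ integrand a touch more directly via $g(X,\grad\tau)=X(\tau)=\lambda_3\, e_3(\tau)+\lambda_4\, e_4(\tau)$, which also sidesteps the PT/PG frame bookkeeping (the $e_4$ vs.\ $\eg_4$ distinction) that your route through the explicit $\grad\tau$ formula in the $\eg$-frame would otherwise require.
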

\begin{proof}
We have
 \beaa
 g(X, N_\tau)&=&- \frac{1}{|\grad \, \tau| }\big( \la_3 e_3(\tau)+\la_4 e_4(\tau) \big)=- \frac{1}{|\grad \,\tau| }\big(  \la_3 +\la_4 (a+ e_4(s)) \big)\\
 g(X, e_3)&=&- 2 \la_4, \qquad  g(X, e_4)=- 2 \la_3 
 \eeaa
and the result follows from the lemma.

\end{proof}

\subsection{Estimates for   general Bianchi pairs}\label{subsect:general-Bianchi-pair}

\def\ss{\mathfrak{s}}
\begin{definition}
    We denote $\ss_0$ by the set of pairs of scalar fields in the spacetime, $\ss_1$ by the set of $\HH$-horizontal $1$-forms, and $\ss_2$ by the set of symmetric traceless $\HH$-horizontal covariant $2$-tensors.
\end{definition}

\def\DD{\slashed{\mathcal{D}}}
\begin{definition}
    We consider  the non-integrable horizontal Hodge-type operators\footnote{ See  \cite{CK} for the original definitions      and  \cite{GKS} for the  extensions to the non-integrable case.}
    \begin{itemize}
        \item $\DD_1$ takes $\ss_1$ into $\ss_0$:\qquad     $\DD_1\xi=(\div\xi,\curl\xi),$
        \item $\DD_2$ takes $\ss_2$ into $\ss_1$:\qquad $   (\DD_2\xi)_a=\nab^b \xi_{ab},$
        \item $\DD_1^*$ takes $\ss_0$ into $\ss_1$:\qquad $    (\DD_1^* (f,f_*))_a=-\nab_a f+\in_{ab} \nab_b f_*,$
        \item $\DD_2^*$ takes $\ss_1$ into $\ss_2$:\qquad $  \DD_2^* \xi=-\frac 12\nab\hot\xi$.
    \end{itemize}
\end{definition}
\begin{lemma}\label{Lemma:Hodge-Leibniz}
    The following identities hold:
    \bea
    \bsplit
        \DD_1^*(f,f_*)\cdot u&=(f,f_*)\cdot\DD_1 u-\nab_a (f u^a+f_*(\dual u)^a),\quad (f,f_*)\in \ss_0,\quad u\in \ss_1,
    \\
        (\DD_2^* f)\cdot u&=f\cdot (\DD_2 u)-\nab_a(f_b u^{ab}),\quad f\in \ss_1,\quad u\in\ss_2.
        \end{split}
    \eea
\end{lemma}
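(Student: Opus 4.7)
The plan is to prove both identities by direct expansion in an orthonormal horizontal frame $\{e_1,e_2\}$, using only (i) the Leibniz rule for the horizontal derivative $\nab$ and (ii) the parallelism of the horizontal volume form, $\nab_c \in_{ab}=0$. Both ingredients are available in the non-integrable PT setting: $\nab$ is defined in Section \ref{section:horizontalStr} as the horizontal projection of $D$, so it is a derivation on horizontal tensors, while $\in$ is built from the induced horizontal metric and is therefore parallel under this projected connection. Crucially, neither step involves a second-order commutator or a Lie bracket of horizontal vector fields, so the non-integrability $\atrch\neq 0$ plays no role and the standard adjointness formulas of \cite{CK} survive verbatim.

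For the first identity, the definition gives
\begin{equation*}
\DD_1^*(f,f_*)\cdot u \;=\; -u^a\nab_a f \;+\; \in_{ab} u^a \nab_b f_*,
\end{equation*}
and the Leibniz rule, together with $\nab\in=0$, yields
\begin{equation*}
\nab_a(f\, u^a) = f\,\div u + u^a \nab_a f, \qquad
\nab_a\bigl(f_*(\dual u)^a\bigr) = f_* \in_{ab}\nab_a u_b + \in_{ab} u^b\, \nab_a f_*.
\end{equation*}
Subtracting these from $(f,f_*)\cdot \DD_1 u = f\,\div u + f_*\,\curl u$ cancels all terms in which $f$ or $f_*$ appear undifferentiated, and after swapping dummy indices by antisymmetry of $\in$ one recovers precisely $\DD_1^*(f,f_*)\cdot u$.

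For the second identity, the symmetry and tracelessness of $u\in\ss_2$ collapse the contraction of $(\DD_2^* f)_{ab}=-\tfrac12(\nab_a f_b+\nab_b f_a-\delta_{ab}\div f)$ with $u^{ab}$ down to the single term $(\DD_2^* f)\cdot u = -\nab_a f_b\, u^{ab}$. On the other hand the Leibniz rule produces
\begin{equation*}
\nab_a(f_b\, u^{ab}) \;=\; \nab_a f_b\, u^{ab} + f_b\, \nab_a u^{ab} \;=\; \nab_a f_b\, u^{ab} + f\cdot \DD_2 u,
\end{equation*}
and rearranging gives the claim. I expect no genuine obstacle here: the whole lemma reduces to two short index computations, and the only conceptual content is the remark in the first paragraph that parallelism of $\in$ and the derivation property of $\nab$ continue to hold in the non-integrable PT frame, which is essentially built into the definition of the projected connection.
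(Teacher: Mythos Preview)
Your proof is correct and follows exactly the route the paper indicates: the paper's own proof reads ``Direct calculation. See Lemma 2.1.23 in \cite{GKS},'' and you have simply carried out that direct calculation in full. Your remark that only the derivation property of $\nab$ and the parallelism of $\in$ are used---so non-integrability plays no role---is a useful clarification that the paper leaves implicit.
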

\begin{proof}
Direct calculation. See Lemma 2.1.23 in \cite{GKS}.
\end{proof}
\begin{definition}
We consider the following  two types of  abstract Bianchi pairs\footnote{ The  null Bianchi identities  in Proposition \ref{Prop:transport.e_3}  can be split in  the pairs $(\a, \b)$, $\big(\b, (\rho, \rhod)\big)$, $\big((\rho, \rhod), \bb\big)  $ and $(\bb, \aa)$ which  fit into one of the two types described here.}:
\begin{enumerate}
\item[Type I.]   These are systems in  $\psi_1\in \ss_k$ and $\psi_2\in \ss_{k-1}$ ($k=1,2$) of the form
\begin{equation}\label{eq:null-Bianchi-model-pair}
    \begin{split}
        \nab_3 \psi_1+\lambda\trchb \psi_1&=-k \DD_k^*\psi_2+F_1,\\
        \nab_4 \psi_2&=\DD_k\psi_1+F_2,
    \end{split}
\end{equation}
\item[Type II.]     These are systems in  $\psi_1\in \ss_{k-1}$ and $\psi_2\in \ss_k$ ($k=1,2$) of the form
\begin{equation}\label{eq:null-Bianchi-model-pair2}
    \begin{split}
        \nab_3 \psi_1+\lambda\trchb \psi_1&=\DD_k\psi_2+F _1,\\
        \nab_4 \psi_2&=-k\DD_k^*\psi_1+ F_2,
    \end{split}
\end{equation}

\end{enumerate}
\end{definition}

The main goal  of  this subsection is to prove the following lemma:
\begin{lemma}
\lab{Lemma:Main-Bianchipairs}
    Suppose that the bootstrap assumption holds. Then, for both pairs \eqref{eq:null-Bianchi-model-pair}, \eqref{eq:null-Bianchi-model-pair2}, we have the estimate
    \bea
    \label{eq:Main-Bianchipairs}
    \begin{split}
        & a^{-\frac 12} \int_{\Sigma_\tau} s^{2i+4\lambda-2}|\d^i\psi_1|^2+\int_{\hbub} s^{2i+4\lambda-2}|\d^i\psi_2|^2\lesssim \int_{H_{-1}}s^{2i+4\lambda-2}|\d^i\psi_1|^2 +\int_{\Hb_0} s^{2i+4\lambda-2}|\d^i\psi_2|^2\\
        &\, +\sum_{j=0}^i |s|^{-j} \Big|\iint_{\MM} s^{2i+4\lambda-2}\d^i\psi_1\cdot \d^{i-j}F_1\Big|+\Big|\iint_{\MM} s^{2i+4\lambda-2}\d^i\psi_2\cdot \d^i  F_2\Big|.
    \end{split}
\eea
\end{lemma}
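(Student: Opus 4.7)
\textbf{Proof plan for Lemma \ref{Lemma:Main-Bianchipairs}.}

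The plan is to derive a divergence identity from each Bianchi pair, integrate it using Corollary \ref{corrllary:diverg.Thm}, and then commute with $\d^i$ to reach top order. Concretely, for a Type I pair I would contract the first equation with $2|s|^{4\lambda-2}\psi_1$ and the second with $2k|s|^{4\lambda-2}\psi_2$, and add. Using $e_3|\psi_1|^2=2\psi_1\cdot\nab_3\psi_1$, $e_4|\psi_2|^2=2\psi_2\cdot\nab_4\psi_2$, the vanishing of $\omb$ and the identity $e_3(|s|^{4\lambda-2})=-(4\lambda-2)|s|^{4\lambda-3}$, the transport parts combine into $\mathrm{Div}(|s|^{4\lambda-2}|\psi_1|^2 e_3)+k\,\mathrm{Div}(|s|^{4\lambda-2}|\psi_2|^2 e_4)$ plus lower-order terms proportional to $\wtb|\psi_1|^2$, $(\trch-2\om)|\psi_2|^2$ and to $e_4(|s|^{4\lambda-2})$, the last of which is small because \eqref{eq:eS_4(s)} gives $|e_4(s)|\ll a^{1/2}$. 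For the Hodge terms, Lemma \ref{Lemma:Hodge-Leibniz} yields $\psi_1\cdot\DD_k^*\psi_2-\psi_2\cdot\DD_k\psi_1=-\nab_a J^a$ for an explicit horizontal current $J^a$, and the choice of multipliers is precisely the one that eliminates these terms modulo $\mathrm{Div}(2k|s|^{4\lambda-2}J^a e_a)$. Type II is identical after swapping the multiplier factors of $k$.

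Next I would apply Corollary \ref{corrllary:diverg.Thm} to $X=|s|^{4\lambda-2}(|\psi_1|^2 e_3 +k|\psi_2|^2 e_4)+ 2k|s|^{4\lambda-2} J$ over the causal domain bounded in the future by $\Sigma_\tau\cup\hbub$ and in the past by $H_{-1}\cup\Hb_0$. Because $|\mathrm{grad}\,\tau|\sim a^{1/2}$, the $\Sigma_\tau$ boundary contribution produces $a^{-1/2}\int_{\Sigma_\tau}|s|^{4\lambda-2}(|\psi_1|^2+k|\psi_2|^2)$; the $\hbub$ contribution produces $\int_{\hbub}|s|^{4\lambda-2}|\psi_2|^2$; and the past boundaries produce the corresponding terms in $\psi_1$ on $H_{-1}$ and $\psi_2$ on $\Hb_0$. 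The horizontal $J$ piece produces no boundary contribution on $H_{-1}$, $\Hb_0$, $\hbub$ and contributes only through $\Sigma_\tau$, where it is controlled by Cauchy--Schwarz and then absorbed. All bulk error terms are of the schematic form $|s|^{4\lambda-2}(\Gac_g\cdot|\psi_i|^2+\psi_i\cdot F_i)$, and Gronwall (using the bootstrap bound $\Gac_g\lesssim\delta a^{1/2}|s|^{-2}$, so $\int_{-1}^{s}\Gac_g\,ds'\ll1$) absorbs them into the boundary energies.

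For the higher derivative bound I would commute the pair with $\d^i=(\nab,\nab_3)^i$ using Lemma \ref{le:comm-PTframe1}, which produces equations of the same shape for $(\d^i\psi_1,\d^i\psi_2)$ with right-hand sides $\d^i F_1+\Ga\cdot\d^{\le i}\psi$ and $\d^i F_2+\Ga\cdot\d^{\le i}\psi$. The weight $|s|^{2i+4\lambda-2}$ is the natural one for these commuted systems, and the lower-order $\d^{i-j}F_1$ terms with factor $|s|^{-j}$ in \eqref{eq:Main-Bianchipairs} arise exactly from iterated use of the $\nab_3$ commutator, which converts a $\nab_3$ derivative into $-\tfrac12\trchb\sim|s|^{-1}$ plus $F_1$. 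Repeating the energy scheme above for the commuted system and using the bootstrap controls on $\Ga\cdot\d^{\le i}\psi$ yields \eqref{eq:Main-Bianchipairs}.

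The main obstacle is the commutator $[\nab_4,\nab]\psi=-\chi\cdot\nab\psi+\xi\nab_3\psi+\cdots$, which when applied to $\psi_2=\aa$ introduces $\nab_3\aa$ for which no equation is available; the remedy, as advertised in point 8 of the introduction, is to treat $\nab_3$ on the same footing as $\nab$ inside $\d$, commuting the Bianchi pair with $\d=(\nab,\nab_3)$ simultaneously and exploiting that $|s|\nab_3\aa$ obeys the same bound as $|s|\nab\aa$ through the transport analysis of Section \ref{Section:Integrating-Ricci}. The anomalous weight from $\chih$ in $[\nab_4,\nab]$ is manageable because $\chih$ appears paired with a $\Gac_g$-type quantity or carries a $|s|$-integrable factor on the causal domain, so its spacetime integral produces at worst a power of $a^{-1/2}\ll1$ that is absorbed.
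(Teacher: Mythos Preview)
Your plan is essentially the paper's proof: the paper likewise forms $X=|s|^{2i+4\lambda-2}\big(|\d^i\psi_1|^2 e_3+k|\d^i\psi_2|^2 e_4\big)$, computes $\Div X$, uses Lemma~\ref{Lemma:Hodge-Leibniz} to convert the principal cross terms to a horizontal divergence (then replaces $\div$ by $\Div$ via $\Div\Psi=\div\Psi+\tfrac12\etab\cdot\Psi$, which is where your $\Sigma_\tau$-boundary term involving $f$ appears), applies Corollary~\ref{corrllary:diverg.Thm}, and finally estimates the commutator errors $\err_1^i,\err_2^i$ from $[\nab_3,\d^i]$, $[\nab_4,\d^i]$, $[\DD_k,\d^i]$ in spacetime.

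One imprecision worth flagging: your claim that ``all bulk error terms are of the schematic form $\Gac_g\cdot|\psi_i|^2$'' and are handled by Gronwall in $s$ is not correct for the $|\psi_2|^2$ pieces. The terms $(\trch-2\om)|\psi_2|^2$ and $|s|^{-1}e_4(s)|\psi_2|^2$ are of size $\o|s|^{-1}|\psi_2|^2$, which is \emph{not} $s$-integrable (you would pick up $\log(1/a\delta)$). The paper absorbs them by foliating $\MM$ by the incoming cones $\hbub$ and using $\int_0^\delta\o|s|^{-1}\,d\ub\lesssim \o/a\ll1$, while the $\wtb|\psi_1|^2$ term is absorbed by foliating by $\Sigma_\tau$ and using $\int|\tau|^{-2}\,d\tau\lesssim(a\delta)^{-1}$. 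The cross commutator terms (such as the $\d^{i-1}(\atrch\cdot\d\psi_1)$ from $[\DD_k,\d^i]$) are handled by spacetime Cauchy--Schwarz combining both foliations to produce the $a^{-1/2}$ smallness. The same dual-foliation mechanism is what makes the $\chih\cdot\nab\psi_2$ and $\xi\cdot\nab_3\psi_2$ commutator terms harmless, not any pairing of $\chih$ with $\Gac_g$.
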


\begin{proof}
We prove the estimate for the first type; the second type follows in the same way. Commuting the equation with $\d^i$,  using  the commutator   Lemma\footnote{We deal with the commutation between $\nab_3$ and $\d^i$ in the same way as in Section \ref{Section:Integrating-Ricci}. } \ref{le:comm-PTframe1},  we 
derive\footnote{ Here the Hodge-type operators act with the indices of $\psi_1$, $\psi_2$, e.g., 
  $  (\DD_2\d^i\a)_a=\nab^b \d^i\a_{ab},\quad (\DD_2^*\d^i\b)_{ab}=-\frac 12 \nab_a\hot(\d^i\b)_{b}.$} 
\bea
    \begin{split}
        \nab_3 \d^i\psi_1+\Big(\lambda+\frac i2\Big)\, \trchb\,  \d^i\psi_1&=-k \DD_k^*\d^i\psi_2+F _1^i,\\
        \nab_4 \d^i\psi_2&=\DD_k\d^i\psi_1+ F_2^i,
        \end{split}
        \eea
        where $F_1^0=F_1, F_2^0 =F_2$,  as in \eqref{eq:null-Bianchi-model-pair}-\eqref{eq:null-Bianchi-model-pair2}, and  for $i\ge 1$
        \bea
        \lab{eq:H1H2}
        \bsplit
        F_1^i&=\sum_{j=0}^i |s|^{-j}\d^{i-j}F_1+\d^{i-1}(\Gac_g\cdot  \d\psi_1)+\d^{i-1}(\bb^*\cdot\psi_1)
        +k[\DD^*_k,\d^i]\psi_2,\\
          F_2^i&=\d^i  F_2+[\nab_4,\d^i]\psi_2-[\DD_k,\d^i]\psi_1.
    \end{split}
\eea
We next  make use of  the   formulas
\bea
\Div e_3 =-2\omb+\trchb=\trchb, \qquad \Div e_4=-2\om+\trch
\eea
 to  calculate the divergence of the vectorfield  
\beaa
X=s^{2i+4\lambda-2}|\d^i\psi_1|^2 e_3+k s^{2i+4\lambda-2}|\d^i\psi_2|^2e_4,
\eeaa
\begin{equation*}
\begin{split}
    \Div X
    &=s^{2i+4\lambda-2} |\d^i\psi_1|^2\Div e_3+e_3(s^{2i+4\lambda-2}|\d^i \psi|^2)+ks^{2i+4\lambda-2} |\d^i\psi_2|^2\Div e_4\\
    &\quad +ke_4(s^{2i+4\lambda-2}|\d^i \psi_2|^2)\\
    &=s^{2i+4\lambda-2}|\d^i\psi_1|^2\trchb+(2i+4\lambda-2) s^{2i+4\lambda-3}|\d^i\psi_1|^2+2s^{2i+4\lambda-2}\d^i\psi_1\cdot \nab_3\d^i\psi_1\\
    &\quad +ks^{2i+4\lambda-2} |\d^i\psi_2|^2(-2\om+\trch)+k(2i+4\lambda-2) s^{2i+4\lambda-3}e_4(s)|\d^i\psi_2|^2\\
    &\quad +2ks^{2i+4\lambda-2}\d^i\psi_2\cdot \nab_4\d^i\psi_2\\
    &=s^{2i+4\lambda-2}|\d^i\psi_1|^2\trchb+(2i+4\lambda-2)s^{2i+4\lambda-3}|\d^i\psi_1|^2\\
    &\quad +2s^{2i+4\lambda-2}\d^i\psi_1\cdot \left(-k\DD_k^*\psi_2-\Big(\frac i2+\lambda\Big)\trchb\d^i\psi_1+F _1^i\right)\\
    &\quad +ks^{2i+4\lambda-2} |\d^i\psi_2|^2(-2\om+\trch)+k(2i+4\lambda-2) s^{2i+4\lambda-3}e_4(s)|\d^i\psi_2|^2\\
    &\quad +2ks^{2i+4\lambda-2}\d^i\psi_2\cdot (\DD_k\d^i\psi_1+ F_2^i)\\
    &=(i+2\lambda-1)s^{2i+4\lambda-2}\Big(\frac{2}{s}-\trchb\Big)|\d^i\psi_1|^2+2s^{2i+4\lambda-2}\d^i\psi_1\cdot F_1^i-2ks^{2i+4\lambda-2} \d^i\psi_1\cdot\DD_k^*\d^i\psi_2\\
    &\quad +ks^{2i+4\lambda-2}\left(-2\om+\trch+(2i+4\lambda-2)s^{-1}e_4(s)\right)|\d^i\psi_2|^2+2ks^{2i+4\lambda-2}\d^i\psi_2\cdot  F_2^i\\
    &\quad +2ks^{2i+4\lambda-2}\d^i\psi_2\cdot\DD_k\d^i\psi_1.
    \end{split}
\end{equation*}
Note that
\begin{equation*}
    \d^i\psi_2\cdot\DD_k\d^i\psi_1- \d^i\psi_1\cdot\DD_k^*\d^i\psi_2=\div (\d^i\psi_1\cdot \d^i \psi_2),
\end{equation*}
which is a direct generalization of Lemma \ref{Lemma:Hodge-Leibniz}. Therefore we get
\begin{equation*}
    \begin{split}
        \Div X
        &=-(i+2\lambda-1)s^{2i+4\lambda-2}\wtb\, |\d^i\psi_1|^2+2s^{2i+4\lambda-2}\d^i\psi_1\cdot F_1^i\\
    &\quad +ks^{2i+4\lambda-2}\left(-2\om+\trch+(2i+4\lambda-2)s^{-1}e_4(s)\right)|\d^i\psi_2|^2+2ks^{2i+4\lambda-2}\d^i\psi_2\cdot  F_2^i\\
    &\quad +2ks^{2i+4\lambda-2}
    \div(\d^i\psi_1\cdot\d^i\psi_2).
    \end{split}
\end{equation*}
The last term equals
\begin{equation*}
\begin{split}
    2k s^{2i+4\lambda-2}
    \div(\d^i\psi_1\cdot\d^i\psi_2)= 2k\div(s^{2i+4\lambda-2}\d^i\psi_1\cdot\d^i\psi_2)-2k(2i+4\lambda-2)s^{2i+4\lambda-2}s^{-1}\nab_a(s) (\d^i\psi_1\cdot\d^i\psi_2)_a.
    \end{split}
\end{equation*}
Therefore,
\bea
\lab{eq:DivX}
\bsplit
\Div X&= 2k\div(s^{2i+4\lambda-2}\d^i\psi_1\cdot\d^i\psi_2) -(i+2\lambda-1)s^{2i+4\lambda-2}\wtb\, |\d^i\psi_1|^2\\
&\quad +ks^{2i+4\lambda-2} \Big(-2\om+\trch+(2i+4\lambda-2)s^{-1}e_4(s)\Big)|\d^i\psi_2|^2\\
&\quad -2k(2i+4\lambda-2)s^{2i+4\lambda-2}s^{-1}\nab_a(s) (\d^i\psi_1\cdot\d^i\psi_2)_a\\
&\quad +2s^{2i+4\lambda-2}\d^i\psi_1\cdot F_1^i+2ks^{2i+4\lambda-2}\d^i\psi_2\cdot  F_2^i.
\end{split}
\eea
To derive our  final result it remains to integrate   \eqref{eq:DivX}  on  $\MM$. 
In view of  the lack of   integrability of our PT frame  we need  however  to  replace $\div$ with $\Div$  with the help of the   formula, for an arbitrary  $\HH$-horizontal $1$-form  $\Psi$,  see\footnote{Note that in  our case, we have $\eta=0$.}  \cite[Lemma 2.1.40]{GKS}
\beaa
       \Div \Psi=\div  \Psi+\frac 12\etab \cdot \Psi.
    \eeaa
 We then  apply Corollary \ref{corrllary:diverg.Thm} to \eqref{eq:DivX}   to derive 
\begin{equation*}
    \begin{split}
        & a^{-\frac 12} \int_{\Sigma_\tau} s^{2i+4\lambda-2}|\d^i\psi_1|^2+\left(a+e_4(s)\right)ks^{2i+4\lambda-2}|\d^i\psi_2|^2+\int_{\hbub} s^{2i+4\lambda-2}|\d^i\psi_2|^2\\
        &\lesssim \int_{H_{-1}}s^{2i+4\lambda-2}|\d^i\psi_1|^2+\int_{\Hb_0} s^{2i+4\lambda-2}|\d^i\psi_2|^2+a^{-\frac 12}\int_{\Sigma_\tau} 2ks^{2i+4\lambda-2} \Big |-\frac 12 f\cdot \d^i\psi_1\cdot\d^i\psi_2\Big |\\
        &\quad +\iint_{\MM} s^{2i+4\lambda-2}|(\Gac_b+s^{-1}e_4(s))| |\d^i\psi_2|^2+|\Gac_g| |\d^i\psi_1|^2+|\etab||\d^i \psi_1||\d^i\psi_2|\\
        &\quad +\Big|\iint_{\MM} s^{2i+4\lambda-2}(\d^i\psi_1\cdot F_1^i+\d^i\psi_2\cdot  F_2^i)\Big|.
    \end{split}
\end{equation*}
Note that on $\Sigma_\tau$, $|s|=|\tau|+\frac 1{10}|a\ub|\geq |\tau|$. Also, the bound of $\eg_4(s)$ \eqref{eq:eS_4(s)} implies $|e_4(s)|\lesssim \mathcal O |s|^{-1}$. Then,
\begin{equation*}
\begin{split}
    \iint_{\MM} s^{2i+4\lambda-2}|\Gac_g||\d^i\psi_1|^2&\, d\vol\lesssim\int_{-1}^{-a\delta} \left(\int_{\Sigma_\tau} \mathcal O\frac{\delta a^\frac 12}{|s|^2}s^{2i+4\lambda-2}|\d^i\psi_1|^2\, a^{-\frac 12}\, d\Sigma_\tau\right)d\tau\\
    & \lesssim a^{-\frac 12}\sup_\tau \Big(\int_{\Sigma_\tau} s^{2i+4\lambda-2}|\d^i\psi_1|^2\, d\Sigma_\tau\Big)\cdot \int_{-1}^{-a\delta} \mathcal{O}\frac{\delta a^\frac 12}{|\tau|^2}  d\tau\\
    &\lesssim a^{-\frac 12}\sup_\tau\Big(\int_{\Sigma_\tau} s^{2i+4\lambda-2}|\d^i\psi_1|^2\, d\Sigma_\tau\Big)\cdot \frac{\mathcal O}{a^\frac 12},
    \end{split}
\end{equation*}
and
\bea
\lab{eq:expression-Bianchi}
    \begin{split}
        \iint_{\MM} |(\Gac_b+s^{-1}e_4(s))||\d^i\psi_2|^2\, d\vol\leq \int_0^\delta \Big(\int_{\hbub} \mathcal{O}|s|^{-1} |\d^i\psi_2|^2\Big)\, d\ub\leq \frac{\mathcal O}{a}\cdot \sup_{\ub} \int_{\hbub} |\d^i\psi_2|^2.
    \end{split}
\eea
Therefore, taking the supremum over $\tau$ and $\ub$, we can absorb several terms on the right by the left hand side and obtain
\beaa
     &&    a^{-\frac 12} \sup_\tau\int_{\Sigma_\tau}  s^{2i+4\lambda-2}|\d^i\psi_1|^2+\sup_{\hbub}\int_{\hbub} s^{2i+4\lambda-2}|\d^i\psi_2|^2\lesssim \int_{H_{-1}}s^{2i+4\lambda-2}|\d^i\psi_1|^2\\
        &&+\int_{\Hb_0} s^{2i+4\lambda-2}|\d^i\psi_2|^2+\Big|\iint_{\MM} s^{2i+4\lambda-2}(\d^i\psi_1\cdot F_1^i+\d^i\psi_2\cdot  F_2^i)\Big|.
\eeaa
It remains to estimate the terms  $F _1,  F_2$ in  \eqref{eq:H1H2}.
We have, schematically, using \eqref{comm:PTframe2} and $\atrchb=0$,
\begin{equation*}
    [\DD,\nab^i]\psi=\sum_{j=0}^{i-1} \nab^j (\, ^{(h)}K\psi+\atrch \nab_3 \psi)
\end{equation*}
where $\DD$ stands for any of the four Hodge-type operators. We also have, by \eqref{comm:PTframe1},
\begin{equation*}
    [\DD,\nab_3^i]\psi=\d^{i-1}(|s|^{-1}\d\psi)+\d^{i-1}(\Gac_g\cdot \d\psi)+\d^{i-1}(\bb\cdot \psi),
\end{equation*}
so composing these two formulas we get
\begin{equation*}
    [\DD,\d^i]\psi=\d^{i-1}\left(|s|^{-1}\d\psi\right)+\d^{i-1}\left(\atrch\cdot \d\psi\right)+\d^{i-1}\left(\bb\cdot\psi+\, ^{(h)}K\cdot\psi\right)
\end{equation*}
Also,
\begin{equation*}
    [\nab_4,\d^i]\psi=\d^{i-1}\left((\Gac_b,\xi) \cdot \d\psi\right)+\d^{i-1}\left((|s|^{-1}\xi,\dual\b,\rhod)\cdot \psi\right).
\end{equation*}
Therefore
\bea
\bsplit
 F_1^i&=|s|^{-j}\d^{i-j}F_1+ \err_1^i,\\
 F_2^i&=\d^i  F_2+\err_2^i,\\
\\
\err_1^i&=\d^{i-1}\left(\Gac_g\cdot  \d\psi_1\right)+\d^{i-1}\left(\bb^*\cdot\psi_1\right)\\
&+\d^{i}\left(|s|^{-1}\d\psi_2\right)+\d^{i}\left(\atrch\cdot \d\psi_2\right)+\d^{i-1}\left(\bb\cdot\psi_2+\, ^{(h)}K\cdot\psi_2\right),\\
\err_2^i&=\d^{i-1}\left((\Gac_b,\xi)\cdot \d\psi_2\right)+\d^{i-1}\left((|s|^{-1}\xi,\dual\b,\rhod)\cdot \psi_2\right)\\
&  +\d^{i}\left(|s|^{-1}\d\psi_1\right)+\d^{i}\left(\atrch\cdot \d\psi_1\right)+\d^{i-1}\left(\bb\cdot\psi_1+\, ^{(h)}K\cdot\psi_1\right).
\end{split}
\eea
We  deal below  with  the  contributions of         $\err_1, \err_2$ to \eqref{eq:expression-Bianchi}.   
We first deal with the second line in the expression of $\err_1^i$. The estimate of the second line of $\err_2^i$  is  identical.
\begin{equation*}
\begin{split}
    \Big|&\iint_{\MM} |s|^{2i+4\lambda-2}\d^i\psi_1\cdot \left(\d^{i}\big((|s|^{-1},\atrch,\Gac_g)\psi_2\big)+\d^{i-1}\big(\bb\cdot \psi_2+\, ^{(h)}K\cdot \psi_2\big)\right)\Big|\\
    &\lesssim \iint_{\MM} |s|^{2i+4\lambda-2} |\d^i\psi_1|\Big(|s|^{-1}|\d^i\psi_2|+|s|^{-2}|\d^{i-1}\psi_2|\Big)
    \\
    &+\iint_{\MM} \Big(1+|s|\cdot ||s^{i-1}\d^{i-1}\Kh||_{L^2(S)}\Big) \cdot ||s^{i_2+2\lambda-1}\d^{i_2}\psi_2||_{L^\infty(S)}||s^{i_1+2\lambda-1}\d^{i_1} \psi_1||_{L^2(S)}\\
    &\lesssim \Big(\iint_{\MM} |s|^{-2} |s^{i_1+2\lambda-1}\d^{i_1}\psi_1|^2\Big)^\frac 12 \cdot \Big(\iint_{\MM} |s^{i_1+2\lambda-1}\d^{i_1}\psi_2|^2\Big)^\frac 12\\
    &\quad +\RR[\rho]\iint_{\MM} |s|^{-1}  ||s^{i_1+2\lambda-1}\d^{i_1}\psi_2||_{L^2(S)}||s^{i_1+2\lambda-1}\d^{i_1} \psi_1||_{L^2(S)} \\
    &\lesssim C_b\, \Big(\iint_{\MM} |s|^{-2} |s^{i_1+2\lambda-1}\d^{i_1}\psi_1|^2\Big)^\frac 12 \cdot \Big(\iint_{\MM} |s^{i_1+2\lambda-1}\d^{i_1}\psi_2|^2\Big)^\frac 12\\
    &\lesssim C_b\, \left(\int_{-1}^{-a\delta} |\tau|^{-2} \sup_\tau \left(a^{-\frac 12}\int_{\Sigma_\tau} |s|^{2i_1+4\lambda-2}|\d^{i_1}\psi_1|^2\right) d\tau\right)^\frac 12 \cdot \Big(\delta\cdot ||s^{i_1+2\lambda-1}\d^{i_1}\psi_2||_{L^2(\hbub)}^2\Big)^\frac 12\\
    &\lesssim C_b\, (a\delta)^{-\frac 12}\cdot \sup_\tau \left(a^{-\frac 12}\int_{\Sigma_\tau} |s|^{2i_1+4\lambda-2}|\d^{i_1}\psi_1|^2\right)^\frac 12 \cdot \delta^\frac 12 \cdot ||s^{i_1+2\lambda-1}\d^{i_1}\psi_2||_{L^2(\hbub)}\\
    &\lesssim C_b\, a^{-\frac 12} \left(\sup_\tau a^{-\frac 12}\int_{\Sigma_\tau} |s|^{2i_1+4\lambda-2}|\d^{i_1}\psi_1|^2+||s^{i_1+2\lambda-1}\d^{i_1}\psi_2||_{L^2(\hbub)}^2\right),
    \end{split}
\end{equation*}
which,  after summing up over  the index $i$,  can be absorbed by the left hand side of the  desired  estimate \eqref{eq:Main-Bianchipairs}. 

For the first line in the expression of $\err_1^i$, we have
\begin{equation*}
    \begin{split}
        \iint_{\MM} &|s|^{2i+4\lambda-2}\d^i\psi_1\cdot \left(\d^{i-1}(\Gac_g \cdot \d\psi_1+\bb\cdot\psi_1)\right)\\
        &\lesssim \iint_{\MM} \left(\o \delta a^\frac 12 |s|^{-2} +\delta^2 a^\frac 32 |s|^{-4}\cdot |s|\right) ||s^{i_1+2\lambda-1}\d^{i_1}\psi_1||_{L^2(S)}\cdot ||s^{i_1+2\lambda-1}\d^{i_1}\psi_1||_{L^2(S)}\\
        &\lesssim \int_{-1}^{-a\delta} \left(\o \delta a^\frac 12 |\tau|^{-2} +\delta^2 a^\frac 32 |\tau|^{-3}\right) \sup_\tau \left(a^{-\frac 12}\int_{\Sigma_\tau} |s|^{2i+4\lambda-2}|\d^i\psi_1|^2 d\vol_{\Sigma_\tau}\right) d\tau\\
        &\lesssim (\o a^{-\frac 12}+a^{-\frac 12})\sup_\tau \left(a^{-\frac 12}\int_{\Sigma_\tau} |s|^{2i+4\lambda-2}|\d^i\psi_1|^2 d\vol_{\Sigma_\tau}\right).
    \end{split}
\end{equation*}
Similarly, for the first line in the expression of $\err_2^i$,  we derive
\begin{equation*}
    \begin{split}
\iint_{\MM}& |s|^{2i+4\lambda-2}\d^i\psi_2\cdot \left(\d^{i-1}\big((\Gac_b,\xi) \cdot \d\psi_2+(|s|^{-1}\xi,\b,\rho)\cdot\psi_2\big)\right)\\
&\lesssim \iint_{\MM} \left(a^\frac 12 |s|^{-1}+\RR |s|^{-1}+|s|\cdot \delta^2 a^\frac 32 |s|^{-4}+|s|\cdot \delta a |s|^{-3}\right) ||s^{i_1+2\lambda-1}\d^{i_1}\psi_2||_{L^2(S)}^2\\
&\lesssim \delta (a^\frac 12 (a\delta)^{-1})\cdot ||s^{i_1+2\lambda-1}\d^{i_1}\psi_2||_{L^2(\hbub)}^2\lesssim a^{-\frac 12} ||s^{i_1+2\lambda-1}\d^{i_1}\psi_2||_{L^2(\hbub)}^2.
    \end{split}
\end{equation*}
Both terms can be absorbed by the left hand side. This finishes the proof of the estimate \eqref{eq:Main-Bianchipairs}.
\end{proof}

\begin{remark} In the estimates below we apply Lemma    \ref{Lemma:Main-Bianchipairs}   to the actual Bianchi pairs $(\a, \b)$, $\big(\b, (\rho, \rhod)\big)$, $\big((\rho, \rhod), \bb\big)  $ and $(\bb, \aa)$.    In doing this we  will ignore     nonlinear terms of the form $\Gac_g \cdot \psi_1$ in $F _1$ or $ F_2$, and $(\Gac_b,\chih)\cdot \psi_2$ in $ F_2$,  as they have already been dealt with in the proof of  the Lemma.
\end{remark}
\subsection{Estimate of the pair \texorpdfstring{$(\a,\b)$}{(a,b)}}
\begin{proposition}\label{prop:estimate-Bianchi-pair1}
    We have $a^{-\frac 12}||s^i\d^i\a||^2_{L^2(\Sigma_{\tau;\ub})}+||s^i\d^i\b||^2_{L^2(\hbub)}\lesssim \delta^{-1} a$, $i\leq N$.
\end{proposition}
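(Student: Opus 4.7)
The scheme is to apply Lemma \ref{Lemma:Main-Bianchipairs} to the Bianchi pair $(\psi_1,\psi_2)=(\a,\b)$, which is of Type I with $k=2$ and $\lambda=\tfrac12$, so that $2i+4\lambda-2=2i$ and the master estimate \eqref{eq:Main-Bianchipairs} reads
\[
a^{-\frac 12}\int_{\Sigma_\tau}s^{2i}|\d^i\a|^2+\int_{\hbub}s^{2i}|\d^i\b|^2\lesssim \int_{H_{-1}}s^{2i}|\d^i\a|^2+\int_{\Hb_0}s^{2i}|\d^i\b|^2+\Bk_1+\Bk_2,
\]
where $\Bk_1,\Bk_2$ denote the spacetime integrals involving $F_1$ and $F_2$ respectively. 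I read off from Proposition \ref{Prop:transport.e_3} that
\[
F_1 = \zeta\hot\b - 3\rho\chih - 3\rhod\dual\chih,\qquad
F_2 = -2\trch\,\b+2\atrch\dual\b-2\om\b+\zeta\cdot\a+3\xi\rho+3\dual\xi\rhod.
\]

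\textbf{Initial data.} On $H_{-1}$, Proposition \ref{Prop:InitailData-H_{-1}} gives $|\d^i\a|\lesssim \delta^{-1}a^{1/2}$ and $|s|\sim 1$, so $\int_{H_{-1}}s^{2i}|\d^i\a|^2\lesssim \delta\cdot \delta^{-2}a=\delta^{-1}a$. On $\Hb_0$, the Minkowskian (or perturbative $\RR_{k,2}|_{\ub=0}\lesssim 1$) assumption combined with the definition of $\RR_k[\b]$ yields $\int_{\Hb_0}s^{2i}|\d^i\b|^2\lesssim \delta^{-1}a$. These are precisely the desired RHS.

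\textbf{Nonlinear terms.} The terms $\zeta\hot\b$ in $F_1$ and $\zeta\cdot\a, \trch\b,\atrch\b,\om\b$ in $F_2$ are of the schematic type $\Gac_g\cdot\psi_1,\,\Gac_g\cdot\psi_2,\,\Gac_b\cdot\psi_2$ already absorbed in the proof of Lemma \ref{Lemma:Main-Bianchipairs} via the bootstrap. The genuinely borderline contributions are $\rho\chih,\rhod\dual\chih$ (in $F_1$) and $\xi\rho,\dual\xi\rhod$ (in $F_2$), borderline because of the anomalous size $a^{1/2}/|s|$ of $\chih$ and $\xi$. For each such term I split $\d^i(\chih\cdot\rho)$ into $(\d^{i_1}\chih)(\d^{i_2}\rho)$ with either $i_1$ or $i_2$ at most $i/2\leq N-3$, so one factor can be put in $L^\infty(S_{\ub,s})$ via the bounds of the summary corollary, and the other in $L^2$. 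For the $F_1$ integral I then apply Cauchy--Schwarz in the $\Sigma_\tau$-slicing $d\vol_\MM\sim a^{-1/2}d\tau\,d\vol_{\Sigma_\tau}$, using $|s^{i_2}\d^{i_2}\chih|_{L^\infty}\lesssim a^{1/2}/|s|$ and $||s^{i_1}\d^{i_1}\rho||_{L^2(\hbub)}\lesssim \RR[\rho]\delta^{1/2}a^{1/2}|s|^{-1}\cdot|s|$-type bounds translated to $\Sigma_\tau$ (using $|s|\ge|\tau|$); the $(a^{1/2}/|s|)$ against $(\delta a/|s|^2)$ produces an extra $\delta a^{3/2}|s|^{-3}$ weight, and the $|\tau|^{-3}$ integral $\int_{-1}^{-a\delta}|\tau|^{-3}d\tau$ together with $|\tau|\ge a\delta/8$ delivers a factor $a^{-1/2}$ which absorbs the extra $a^{1/2}$. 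For the $F_2$ integral I similarly use $|s^{i_2}\d^{i_2}\xi|_{L^\infty}\lesssim a^{1/2}/|s|$ paired with the $L^2(\hbub)$ bound on $\rho$, applying Cauchy--Schwarz against $||s^i\d^i\b||_{L^2(\hbub)}$ on the left, which after the $\ub$-integration over $[0,\delta]$ produces only $\delta$ powers that combine with $\delta^{-1}a$ to close the estimate.

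\textbf{Main obstacle.} The anomalous size of $\chih$ and $\xi$ at the level of $F_1$ and $F_2$ (namely $a^{1/2}$ rather than $1$) is the point where the argument is tight. One must exploit simultaneously (i) the better-than-$\Gac_b$ decay $|s|^{-1}$ of these quantities, (ii) the energy-weight prefactors $\delta^{1/2}a^{-1/2}$ in $\RR[\rho]$, and (iii) the smallness of $(a\delta)^{-1}$ provided by $|s|\ge \tfrac{1}{8}a\delta$, in order to gain the factor $a^{-1/2}$ needed to absorb the extra $a^{1/2}$ coming from $\chih,\xi$. Once this is done the Gr\"onwall step is routine and yields the stated bound $\delta^{-1}a$.
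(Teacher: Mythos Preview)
Your proposal is correct and follows the same route as the paper: apply Lemma \ref{Lemma:Main-Bianchipairs} to $(\psi_1,\psi_2)=(\a,\b)$ with $k=2$, $\lambda=\tfrac12$, discard the $\Gac_g\cdot\psi_1$ and $(\Gac_b,\chih)\cdot\psi_2$ contributions as already absorbed by the lemma, and handle the genuinely borderline terms $\chih\cdot\rho$ and $\xi\cdot\rho$ via the mixed $\Sigma_\tau/\hbub$ Cauchy--Schwarz splitting. The one point you gloss over is the case where the top $N$ derivatives fall on $\xi$: there the improved bound $a^{1/2}/|s|$ of Proposition~5.14 is unavailable, and the paper explicitly falls back on the weaker estimate of Proposition~\ref{Prop:estimate-xi} (giving $\o\,\delta^{-1/2}a^{1/2}|s|^{-1/2}$), which still closes with room $a^{-1/2}$.
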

\begin{proof}
    Consider  the     equations 
\beaa
    \nab_3\a-   \nab\hot \b&=&-\frac 1 2 \trchb\a+F _1,\\
\nab_4\beta - \div\a &=& F_2.
\eeaa
with 
\beaa
   F_1&=&
   \zeta\hot \b - 3 (\rho\chih +\rhod\dual\chih),
\\
     F_2&=&-2\trch\beta +2\atrch\dual\b-2\om\b +\a\c  (2\zeta+\etab) + 3  (\xi\rho+\dual \xi\rhod),
\eeaa
 of the form \eqref{eq:null-Bianchi-model-pair}, with $k=2$, $\lambda=\frac 12$, $\psi_1=\a$, $\psi_2=\b$. Therefore, applying Lemma \ref{Lemma:Main-Bianchipairs}, we have
\begin{equation*}
    \begin{split}
         a^{-\frac 12} \sup_\tau\int_{\Sigma_\tau}  s^{2i}|\d^i\a|^2+\sup_{\ub}\int_{\hbub} s^{2i}|\d^i\b|^2&\lesssim \int_{H_{-1}}s^{2i}|\d^i\a|^2+\int_{\Hb_0} s^{2i}|\d^i\b|^2\\
         &+\Big|\iint_{\MM} s^{2i}(\d^i\a\cdot \d^iF_1+\d^i\b\cdot \d^i  F_2)\Big|.
    \end{split}
\end{equation*}
As before, we only need to estimate the nonlinear terms with one of the factors replaced by its $L^\infty(S)$ norm.  Note that at the top order of derivatives, we can only use the weaker estimate of $\xi$ from Proposition \ref{Prop:estimate-xi}.
We have
\begin{equation*}
\begin{split}
    \Big|\iint_{\MM} &s^{2i}\d^i(\Gac_g\cdot\b,\chih\cdot\rho))\cdot\d^i\a+s^{2i}\d^i(\Gac_g \cdot \a)\cdot \d^i\b\Big| \\
    &\lesssim \iint_{\MM} \left(\o\delta a^\frac 12 |s|^{-2} ||s^{i_1}\d^{i_1}\b||_{L^2(S)}+a^\frac 12 |s|^{-1}||s^{i_1}\d^{i_1}\rho||_{L^2(S)}\right)||s^{i_1}\d^{i_1}\a||_{L^2(S)}\\
    &\lesssim \o\delta a^\frac 12 \left(\iint_{\MM} |s|^{-4} |s^{i_1}\d^{i_1}\a|^2\right)^\frac 12 \left(\iint_{\MM} |s^{i_1}\d^{i_1}\b|^2\right)^\frac 12\\
    & \quad +a^\frac 12 \left(\iint_{\MM} |s|^{-4} |s^{i_1}\d^{i_1}\a|^2\right)^\frac 12 \left(\iint_{\MM} |s|^2 |s^{i_1}\d^{i_1}\rho|^2\right)^\frac 12\\
    &\lesssim \o\delta a^\frac 12 (\int_{-1}^{-a\delta} |\tau|^{-4} \sup_\tau a^{-\frac 12}||s^{i_1}\d^{i_1}\a||^2_{L^2(\Sigma_\tau)})^\frac 12 \cdot \delta^\frac 12 \sup_{\ub} ||s^{i_1}\d^{i_1}\b||_{L^2(\hbub)}\\
    &\quad +a^\frac 12 (\int_{-1}^{-a\delta} |\tau|^{-4} \sup_\tau a^{-\frac 12}||s^{i_1}\d^{i_1}\a||^2_{L^2(\Sigma_\tau)})^\frac 12 \cdot \delta^\frac 12 \sup_{\ub}||ss^{i_1}\d^{i_1}\rho||_{L^2(\hbub)} \\
    &\lesssim \o\delta a^\frac 12 |a\delta|^{-\frac 32} \RR\delta^{-\frac 12} a^{\frac 12} \cdot \delta \cdot \RR\delta^{-\frac 12} a^\frac 12+a^\frac 12 |a\delta|^{-\frac 32} \RR\delta^{-\frac 12} a^\frac 12 \cdot \delta^\frac 12  \RR\delta^\frac 12 a^\frac 12\\
    &\lesssim \o\RR^2 \delta^{-1} +\RR^2 \delta^{-1} \lesssim \delta^{-1}a,
\end{split}
\end{equation*}
and
\begin{equation*}
    \begin{split}
        \Big|\iint_{\MM} s^{2i} \d^i(\xi\cdot\rho)\cdot \d^i\b\Big|
        &\lesssim \iint_{\MM} \o \delta^{-\frac 12} a^\frac 12 |s|^{-\frac 12} ||s^{i_1}\d^{i_1}\rho||_{L^2(S)} ||s^{i_1}\d^{i_1}\b||_{L^2(S)}\\
        &\lesssim  \o \delta^{-\frac 12} a^\frac 12 |a\delta|^{-\frac 32} \delta ||ss^{i_1}\d^{i_1}\rho||_{L^2(\hbub)} ||s^{i_1}\d^{i_1}\b||_{L^2(\hbub)}\\
        &\lesssim \o \delta^{-\frac 12} a^\frac 12 |a\delta|^{-\frac 32}\delta  \RR \delta^\frac 12 a^\frac 12\cdot \RR \delta^{-\frac 12} a^\frac 12\\
        &\lesssim \o\RR^2 \delta^{-1} \lesssim \delta^{-1} a.
    \end{split}
\end{equation*}
\end{proof}

\subsection{Estimate of the pair \texorpdfstring{$(\b,\rho)$}{(b,rho)}}
\begin{proposition}\label{prop:estimate-Bianchi-pair2}
    We have, for all  $i\leq N$.
    \beaa
    a^{-\frac 12} ||s^{i+1}\d^i \b||_{L^2(\Sigma_{\tau;\ub})}+||s^{i+1} \d^i (\rho,\rhod)||^2_{L^2(\hbub)}\lesssim \RR[\a]^2 \delta a.
    \eeaa
\end{proposition}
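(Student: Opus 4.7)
The plan is to follow closely the structure of the proof of Proposition \ref{prop:estimate-Bianchi-pair1}. First, I would recognize $(\b,(\rho,\rhod))$ as a Type~I Bianchi pair \eqref{eq:null-Bianchi-model-pair} with $\psi_1=\b\in\ss_1$, $\psi_2=(\rho,-\rhod)\in\ss_0$, $k=1$, and $\lambda=1$. This matches the weight $s^{2i+4\lambda-2}=s^{2(i+1)}$ appearing in the conclusion. Reading off from Proposition \ref{Prop:transport.e_3}, the inhomogeneities are, schematically (suppressing duals),
\[
F_1=2\bb\cdot\chih,\qquad F_2=(\Gac_b,\chih)\cdot(\rho,\rhod)+\ze\cdot\b-2\xi\cdot\bb-\tfrac 12\chibh\cdot\a.
\]
I would then apply Lemma \ref{Lemma:Main-Bianchipairs} and invoke the remark that follows it: the contributions of $\Gac_g\cdot\psi_1=\ze\cdot\b$ and of $(\Gac_b,\chih)\cdot\psi_2$ are already absorbed inside the lemma, so the claim reduces to bounding the initial fluxes on $H_{-1}\cup\Hb_0$ together with the spacetime integrals involving the remaining nonlinear terms $\bb\cdot\chih$, $\xi\cdot\bb$, and $\chibh\cdot\a$.

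For the initial flux on $H_{-1}$, Proposition \ref{Prop:InitailData-H_{-1}} together with $|s|=1$ gives $\int_{H_{-1}}s^{2(i+1)}|\d^i\b|^2\sim \int_0^\delta a\, d\ub =\delta a$, while the flux on $\Hb_0$ vanishes in the Minkowski case or is bounded by the perturbative assumption $\RR_{k,2}|_{\ub=0}\lesssim 1$; in either case the initial data contributes $\lesssim \delta a\leq \RR[\a]^2\delta a$. The decisive nonlinear contribution is $\chibh\cdot\a$, the only term in which $\RR[\a]$ enters. Using the Sobolev $L^\infty$-bound $|\chibh|\lesssim \delta a^{1/2}|s|^{-2}$ furnished by Lemma \ref{Lemma:Sobolev}, a weighted spacetime Cauchy--Schwarz gives
\[
\Big|\iint_{\MM}s^{2(i+1)}\d^i(\rho,\rhod)\cdot\d^i(\chibh\cdot\a)\Big|\lesssim \delta a^{1/2}\Big(\iint_{\MM}|s|^{2i}|\d^i\a|^2\Big)^{1/2}\Big(\iint_{\MM}|s|^{2i}|\d^i(\rho,\rhod)|^2\Big)^{1/2}.
\]
I would decompose the $\a$-integral via the $3{+}1$ split $d\vol_{\MM}\approx a^{-1/2}d\vol_{\Sigma_\tau}d\tau$ (controlled by the $\RR[\a]$-norm to give $\lesssim \RR[\a]^2\delta^{-1}a^{1/2}$) and the $(\rho,\rhod)$-integral via $d\vol_{\MM}\approx d\vol_{\hbub}d\ub$ (controlled by $\delta\cdot \|s^{i+1}\d^i(\rho,\rhod)\|_{L^2(\hbub)}^2$, using $|s|\geq 1$ on $\hbub$). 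Young's inequality will then produce $C\RR[\a]^2\delta^2 a^{3/2}\lesssim \RR[\a]^2\delta a$ (using $a\delta\ll 1$) plus a small multiple of the $L^2(\hbub)$ norm, which is absorbed into the left-hand side.

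The terms $\bb\cdot\chih$ and $\xi\cdot\bb$ will be handled in the same way using the bootstrap bounds $|\bb|\lesssim \delta^2 a^{3/2}|s|^{-3}$, $|\chih|\lesssim a^{1/2}|s|^{-1}$, and the top-order estimate $\|s^i\d^i\xi\|_{L^2(S)}\lesssim \RR\delta^{-1/2}a^{1/2}|s|^{-1/2}$ of Proposition \ref{Prop:estimate-xi}; they will carry additional positive powers of $a\delta$ and be strictly subdominant compared to $\RR[\a]^2\delta a$. The main obstacle is the $\chibh\cdot\a$ term: because $\a$ is the most singular curvature component and is estimated most naturally on $\Sigma_\tau$ rather than on $\hbub$, its energy must be threaded through the spacetime via the $3{+}1$ split, and the weight matching between $s^{2(i+1)}$ of the pair and $s^{2i}$ of the $\a$ energy norm relies essentially on the $|s|^{-2}$ decay of $\chibh$; the smallness $a\delta\ll 1$ is then crucial for the final Young's-inequality step to close with the sharp constant $\RR[\a]^2\delta a$ on the right-hand side.
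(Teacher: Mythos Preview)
Your overall structure---identifying $(\b,(\rho,\rhod))$ as a Type~I pair with $k=1$, $\lambda=1$, applying Lemma~\ref{Lemma:Main-Bianchipairs}, and isolating $\chibh\cdot\a$ as the decisive nonlinear term---is exactly right and matches the paper's approach. The error is in your handling of that decisive term.

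The claim ``using $|s|\geq 1$ on $\hbub$'' is false: throughout $\MM$ we have $s\in[-1,0)$, so $|s|\leq 1$. Consequently your bound
\[
\iint_{\MM}|s|^{2i}|\d^i(\rho,\rhod)|^2\;\lesssim\;\delta\cdot\|s^{i+1}\d^i(\rho,\rhod)\|_{L^2(\hbub)}^2
\]
goes the wrong way (it would require $|s|^{2i}\leq|s|^{2(i+1)}$), and the $(\rho,\rhod)$-factor cannot be absorbed into the left-hand side as you describe. The remedy is to redistribute the extra $|s|^2$ in the Cauchy--Schwarz step rather than after it: write
\[
\delta a^{1/2}\iint_{\MM}|s|^{2i}|\d^i\a|\,|\d^i(\rho,\rhod)|
\;\leq\;\delta a^{1/2}\Big(\iint_{\MM}|s|^{-2}|s^{i}\d^{i}\a|^2\Big)^{1/2}\Big(\iint_{\MM}|s^{i+1}\d^{i}(\rho,\rhod)|^2\Big)^{1/2}.
\]
Now the $\rho$-factor is $\leq\delta^{1/2}\|s^{i+1}\d^i(\rho,\rhod)\|_{L^2(\hbub)}$ with no weight manipulation, while the $\a$-factor picks up $\int|\tau|^{-2}d\tau\sim(a\delta)^{-1}$ in the $3{+}1$ split (using $|s|\geq|\tau|$ on $\Sigma_\tau$). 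The product becomes $\lesssim\RR[\a]\RR[\rho]\,\delta a$, and Young's inequality in the form $\RR[\a]\RR[\rho]\,\delta a\leq C^{-1}\RR[\rho]^2\delta a+C\RR[\a]^2\delta a$ closes exactly as the paper does---without needing the $a\delta\ll 1$ improvement you invoke at the end. Note also that it is the \emph{improved} (bootstrap-independent) estimate $|\chibh|\lesssim\delta a^{1/2}|s|^{-2}$ of Proposition~\ref{prop:L2-estimate-wtb,chibh} that must be used here, so that no factor of $\OO$ contaminates the Young's step.
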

Note that have shown that $\RR[\a]\lesssim 1$, so this is an improvement of  $C_b^2 \delta a$ in the bootstrap assumption,  if $C_b$  is   sufficiently larger  to start  with.
\begin{proof}
    We start with 
\beaa
\nab_3\b&=&-\DD_1^*(\rho,-\dual\rho)-\trchb \b+F _1
     ,\\
 \nab_4 (\rho, - \dual\rho)&=&\DD_1 \b+ F_2, 
\eeaa
with
\beaa
   F_1&=&2\bb\cdot\chih,
\\
     F_2&=&(2\etab+\zeta)\cdot(\b,\dual\b)-2\xi\cdot (\bb,-\dual\bb)-\frac 12 \chibh\cdot (\a,\dual\a)
\eeaa
of the form \eqref{eq:null-Bianchi-model-pair}, with $k=1$, $\lambda=1$, $\psi_1=\b$, $\psi_2=(\rho,-\dual\rho)$.

We only need to estimate the terms $\xi\cdot\bb$ and $\chibh\cdot\a$ in $ F_2$. We have
\begin{equation*}
    \begin{split}
        \iint_{\MM} |s|^{2+2i} &\d^i(\xi\cdot\bb)\cdot \d^i(\rho,\rhod)\lesssim 
    \iint_{\MM} |s|^{2} \o\RR \delta^{-\frac 12} a^\frac 12 |s|^{-\frac 12}\cdot ||s^{i_1}\d^{i_1}\bb||_{L^2(S)} \cdot ||s^i\d^i\rho||_{L^2(S)}\\
&\lesssim \o\RR\delta \cdot \delta^{-\frac 12} a^\frac 12 |a\delta|^{-\frac 32} \sup_{\ub} ||s^2s^{i_1}\d^{i_1}\bb||_{L^2(\hbub)} ||ss^{i_1}\d^{i_1}\rho||_{L^2(\hbub)}\\
&\lesssim  \o\RR \delta^{-1} a^{-1}\cdot \RR \delta^\frac 32 a\cdot \RR\delta^\frac 12 a^\frac 12=\o\RR^2 \delta a^\frac 12\ll \delta a,
\end{split}
\end{equation*}
and, using the improved estimate of $\chibh$ obtained in Proposition \ref{prop:L2-estimate-wtb,chibh},
\begin{equation*}
    \begin{split}
        \iint_{\MM}& |s|^{2+2i} \d^i(\chibh\cdot\a)\cdot \d^i(\rho,\rhod) \leq  \iint_{\MM} |s|^{-2} \delta a^\frac 12 |s|^{-2} |s^{i_1}\d^{i_1}\a|\cdot|s^i\d^i (\rho,\rhod)|\\
        &\leq \delta a^\frac 12\left(\iint_{\MM} |s|^{-2}|s^{i_1}\d^{i_1}\a|^2\right)^\frac 12 \left(\iint_{\MM} |s|^2|s^{i_1}\d^{i_1}(\rho,\rhod)|^2\right)^\frac 12\\
&\leq  \delta a^\frac 12 \left(\int_{-1}^{-\a\delta} |\tau|^{-2} \left(\sup_\tau a^{-\frac 12}\int_{\Sigma_\tau} |s^{i_1}\d^{i_1}\a|^2\right)\, d\tau
\right)^\frac 12 \left(\delta\cdot \sup_{\ub}||ss^{i_1}\d^{i_1}\rho||_{L^2(\hbub)}^2\right)^\frac 12\\
&\leq \delta a^\frac 12 |a\delta|^{-\frac 12}\RR[\a]\delta^{-\frac 12} a^\frac 12 \cdot \RR[\rho] \delta a^\frac 12\\
&\leq  \RR[\rho] \RR[\a] \delta a\leq C^{-1} \RR[\rho]^2 \delta a+C\RR[\a]^2\delta a,
    \end{split}
\end{equation*}
so taking a suitable $C>0$, the first term can be absorbed by the left hand side (which is like $\RR[\rho]^2 \delta a$), and we obtain the result.
\end{proof}

\subsection{Estimate of the pair \texorpdfstring{$(\rho,\bb)$}{(rho,bb)}}
\begin{proposition}\label{prop:estimate-Bianchi-pair3}
    We have\footnote{We omit  from the  estimate the flux on $\Sigma_\tau$,  as it is no longer needed. } $||s^{i+2}\d^i \bb||^2_{L^2(\hbub)}\lesssim \delta^3 a^2$, $i\leq N$.
\end{proposition}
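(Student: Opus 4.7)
The plan is to identify the Bianchi equations of Proposition \ref{Prop:transport.e_3} governing $\nab_3(\rho,\rhod)$ and $\nab_4\bb$ as an abstract Type II pair \eqref{eq:null-Bianchi-model-pair2} with $\psi_1=(\rho,\rhod)\in\ss_0$, $\psi_2=-\bb\in\ss_1$, $k=1$, $\lambda=3/2$: indeed they read $\nab_3(\rho,\rhod)+\tfrac{3}{2}\trchb(\rho,\rhod)=-\DD_1\bb+F_1$ and $\nab_4(-\bb)=-\DD_1^*(\rho,\rhod)+F_2$, with $F_1$ containing the quadratic term $\zeta\cdot\bb$ (of type $\Gac_g\cdot\psi_2$) and $\chih\cdot\aa$, and $F_2$ containing $(\trch,\atrch,\om)\cdot\bb$ (of type $\Gac_b\cdot\psi_2$), $\chibh\cdot\b$, $\zeta\cdot(\rho,\rhod)$ (of type $\Gac_g\cdot\psi_1$), and $\aa\cdot\xi$. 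Applying Lemma \ref{Lemma:Main-Bianchipairs} then bounds $a^{-1/2}\int_{\Sigma_\tau}s^{2i+4}|\d^i(\rho,\rhod)|^2+\|s^{i+2}\d^i\bb\|^2_{L^2(\hbub)}$ in terms of initial data on $H_{-1}\cup\Hb_0$ plus integrals of the above nonlinear terms against $\d^i\psi_1, \d^i\psi_2$.

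For the initial data, Proposition \ref{Prop:InitailData-H_{-1}} gives $\int_{H_{-1}}s^{2i+4}|\d^i(\rho,\rhod)|^2\lesssim \delta(\delta a)^2 = \delta^3 a^2$, and the Minkowski data (or its perturbation) on $\Hb_0$ contributes a universal constant $\lesssim 1\lesssim \delta^3 a^2$. By the remark following Lemma \ref{Lemma:Main-Bianchipairs}, the quadratic terms of type $\Gac_g\cdot\psi_1$ and $(\Gac_b,\chih)\cdot\psi_2$ are already absorbed, so only $\chih\cdot\aa$, $\chibh\cdot\b$, and $\aa\cdot\xi$ remain. Each will be estimated by the Hölder pattern $L^\infty(S)\cdot L^2(S)\cdot L^2(S)$ followed by Cauchy--Schwarz in $(\ub,s)$: for $\chih\cdot\aa$ one places $\chih\sim a^{1/2}|s|^{-1}$ in $L^\infty$ and pairs the $L^2(S)$ bound of $\aa$ from Section \ref{Section:Integrating-Ricci} against the flux of $(\rho,\rhod)$ on $\hbub$ already controlled by Proposition \ref{prop:estimate-Bianchi-pair2}; for $\chibh\cdot\b$ the improved Sobolev bound $\chibh\sim \delta a^{1/2}|s|^{-2}$ in $L^\infty$ combines with the $\Sigma_\tau$ flux of $\b$ arising from the same application of Lemma \ref{Lemma:Main-Bianchipairs} in Proposition \ref{prop:estimate-Bianchi-pair1}, which is exactly the tool used in Proposition \ref{prop:estimate-Bianchi-pair2} to handle the analogous $\chibh\cdot\a$ term.

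The principal difficulty will be the term $\aa\cdot\xi$, because $\xi$ has only the anomalous top-order bound $\|s^k\d^k\xi\|_{L^2(S)}\lesssim \RR[\b]\delta^{-1/2}a^{1/2}|s|^{-1/2}$ of Proposition \ref{Prop:estimate-xi}. I plan to split via Leibniz and treat two regimes: when $\xi$ carries at most $N-3$ derivatives I will place $\xi$ in $L^\infty$ using the improved bound $\xi\sim a^{1/2}|s|^{-1}$ and $\aa$ in $L^2$, while at the critical top-order split on $\xi$ I will swap, placing $\aa$ in $L^\infty$ via the Sobolev bound $\aa\sim \RR[\aa]\delta^{5/2}a^{3/2}|s|^{-9/2}$ (using the bootstrap $\RR[\aa]\leq C_b$) and $\xi$ in $L^2$ via the anomalous bound above. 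In either case, after extracting the target $\bb$-flux via AM--GM absorption, the residual is at most of order $C_b^2\delta^3$ (first case) or $C_b^2\delta a^{-1}$ (second case), and both are dominated by $\delta^3 a^2$ in the regime $a\delta\ll 1$ with $a$ sufficiently large relative to the bootstrap constant. Summing all contributions with the initial data then yields the claimed estimate $\|s^{i+2}\d^i\bb\|^2_{L^2(\hbub)}\lesssim \delta^3 a^2$ for all $i\leq N$.
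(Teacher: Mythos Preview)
Your identification of the Bianchi pair, the source terms $F_1,F_2$, and the three genuinely nontrivial quadratic contributions $\chih\cdot\aa$, $\chibh\cdot\b$, $\aa\cdot\xi$ is exactly the paper's route, and your treatment of $\chih\cdot\aa$ and $\chibh\cdot\b$ is fine.

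The gap is in the $\aa\cdot\xi$ term. Your claimed residual in the second regime, $C_b^2\,\delta a^{-1}$, is \emph{not} dominated by the target $\delta^3 a^2$: this would require $C_b^2\lesssim\delta^2 a^3$, and under the sole hypotheses $a\delta\ll 1$, $a$ large, the quantity $\delta^2 a^3$ can be made arbitrarily small (take $\delta=a^{-2}$, so $\delta^2 a^3=a^{-1}$). The origin of the bad residual is the $|s|$--power you quote for the top--order $\xi$ bound: the \emph{proof} of Proposition~\ref{Prop:estimate-xi} (and the bootstrap norm $\OO^S$) actually give $\|s^k\d^k\xi\|_{L^2(S)}\lesssim \OO\,\delta^{-1/2}a^{1/2}|s|^{+1/2}$, not $|s|^{-1/2}$ (the displayed statement of Proposition~\ref{Prop:estimate-xi} contains a sign typo in the exponent). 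With $|s|^{+1/2}$, your high--derivative regime yields the same contribution as the low one, namely $\OO\RR^2\,\delta^3 a^{3/2}\ll\delta^3 a^2$.

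This is exactly how the paper proceeds: it does not split regimes at all but uses the bootstrap $\OO$--norm for $\xi$ uniformly (i.e.\ $\|s^{i_2}\d^{i_2}\xi\|_{L^\infty(S)}\lesssim\OO\,\delta^{-1/2}a^{1/2}|s|^{-1/2}$ and $\|s^{i_1}\d^{i_1}\xi\|_{L^2(S)}\lesssim\OO\,\delta^{-1/2}a^{1/2}|s|^{+1/2}$), so that both halves of the product estimate produce the identical bound $\OO\RR^2\,\delta^3 a^{3/2}$. Once you correct the $|s|$--power, your argument collapses to the paper's and the regime split becomes unnecessary.
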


\begin{proof}
   Consider the equations
\beaa
\nab_3 (\rho,\rhod)+\DD_1\bb&=&-\frac 3 2 \trchb (\rho,\rhod) +F _1,\\
     \nab_4\bb&=&\DD_1^*(\rho,\rhod).
    \eeaa
    with
    \beaa
       F_1&=&\zeta\cdot (\bb,\dual\bb)-\frac 12\chih\cdot (\ab,\dual\ab),
    \\
     F_2&=&-(\trch \bb+ \atrch \dual \bb)+ 2\om\,\bb+2\b\c \chibh
    -3 (\rho\etab-\rhod\dual \etab)-    \aa\c\xi,
\eeaa
of the type \eqref{eq:null-Bianchi-model-pair2}, with $k=1$, $\lambda=\frac 32$, $\psi_1=(\rho,\rhod)$, $\psi_2=\bb$.
We have
\begin{equation*}
\begin{split}
    \Big|\iint_{\MM} |s|^{2i+4} &\d^i (\chih\cdot \ab) \cdot \d^i(\rho,\rhod)\Big| \lesssim \iint_{\MM} |s|^4 a^\frac 12 |s|^{-1} |s^{i_1}\d^{i_1}\ab| |s^i\d^i(\rho,\rhod)|\\
    &\lesssim \delta\cdot a^\frac 12 |a\delta|^{-1}\cdot \sup_{\ub}||s^3s^{i_1}\d^{i_1}\ab||_{L^2(\hbub)} ||ss^{i}\d^{i}(\rho,\rhod)||_{L^2(\hbub)}\\
    &\lesssim a^{-\frac 12} \RR[\ab]\delta^\frac 52 a^\frac 32 \cdot \RR[\rho,\rhod] \delta^\frac 12 a^\frac 12\lesssim \RR^2\delta^3 a^\frac 32\ll \delta^3 a^2.
\end{split}
\end{equation*}
Recall  that, see  Proposition \ref{prop:estimate-Bianchi-pair2},
\begin{equation*}
    \sup_\tau a^{-\frac 12}\int_{\Sigma_\tau} | s^{i+1}\d^i\b|^2 \lesssim \RR[\a]^2 \delta a,\quad i\leq N.
\end{equation*}
The $C_b^2$ here can in fact be dropped in view of Proposition \ref{prop:estimate-Bianchi-pair1}.
Then we have
\begin{equation*}
    \begin{split}
        \Big|\iint_{\MM} |s|^{2i+4}& \d^i(\b\cdot\chibh)\cdot \d^i \bb\Big| \lesssim \iint_{\MM} |s|^4 \o \delta a^\frac 12 |s|^{-2} |s^{i_1}\d^{i_1}\b| |s^i\d^i\bb|\\
        &\lesssim \o\delta a^\frac 12  \cdot \left(\iint_{\MM} |s|^{-2}|ss^{i_1}\d^{i_1}\b|^2\right)^\frac 12 \left(\iint_{\MM} |s|^4 |s^i\d^i\bb|^2\right)^\frac 12\\
        &\lesssim \o\delta a^\frac 12 \left(\int_{-1}^{-a\delta} |\tau|^{-2} \RR^2 \delta a\, d\tau\right)^\frac 12  \left(\delta \sup_{\ub}||s^2 s^i\d^i\bb||_{L^2(\hbub)}^2\right)^\frac 12\\
        &\lesssim \o\delta a^\frac 12 \cdot |a\delta|^{-\frac 12} \RR \delta^{\frac 12} a^\frac 12 \cdot \delta^\frac 12 \RR[\bb] \delta^\frac 32 a\\
        &\lesssim \o\RR^2 \delta^3 a^\frac 32\ll \delta^3 a^2.
    \end{split}
\end{equation*}
Also,
\begin{equation*}
\begin{split}
\Big|\iint_{\MM} |s|^{2i+4} &\d^i(\ab\cdot \xi)\cdot\d^i\bb\Big| \lesssim  \iint_{\MM} |s|^4 \o \delta^{-\frac 12} a^\frac 12 |s|^{-\frac 12} |s^{i_1}\d^{i_1}\ab||s^{i}\d^{i}\bb|\\
&\lesssim  \iint_{\MM} \o\delta^{-\frac 12} a^\frac 12 |s|^{-\frac 32} |s^3s^{i_1}\d^{i_1}\aa|\cdot |s^2s^{i}\d^{i}\bb|\\
&\lesssim  \o\delta^{-\frac 12} a^\frac 12 \cdot \delta |a\delta|^{-\frac 32} \sup_{\ub} ||s^3s^{i_1}\d^{i_1}\ab||_{L^2(\hbub)}\cdot ||s^2s^{i}\d^{i}\bb||_{L^2(\hbub)}\\
&\lesssim  \o\delta^{-\frac 12} a^\frac 12 \cdot \delta |a\delta|^{-\frac 32} \cdot \RR \delta^\frac 52 a^\frac 32 \cdot \RR\delta^\frac 32 a\\
&\lesssim \RR^2 \delta^3 a^\frac 32 \ll \delta^3 a^2.
\end{split}
\end{equation*}

\end{proof}

\subsection{Estimate of the pair \texorpdfstring{$(\bb,\aa)$}{(bb,aa)}}
\begin{proposition}
    We have $||s^{i+3}\d^i \ab||_{L^2(\hbub)}\lesssim \delta^5 a^3$, $i\leq N$.
\end{proposition}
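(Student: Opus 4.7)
The plan is to invoke Lemma \ref{Lemma:Main-Bianchipairs} on the Bianchi pair $(\bb, \aa)$, which by Proposition \ref{Prop:transport.e_3} satisfies
\begin{align*}
\nab_3\bb + 2\trchb\,\bb &= \div\aa + 2\aa\cdot\ze,\\
\nab_4\aa + \nab\hot\bb &= -\tfrac 12(\trch\,\aa + \atrch\,\dual\aa) + 4\om\,\aa + 5\ze\hot\bb - 3(\rho\,\chibh - \rhod\,\dual\chibh),
\end{align*}
matching the template \eqref{eq:null-Bianchi-model-pair2} with $\psi_1=\bb$, $\psi_2=\aa$, $k=2$, $\lambda=2$. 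The initial-data fluxes are all $\lesssim \delta^5 a^3$: on $H_{-1}$ Proposition \ref{Prop:InitailData-H_{-1}} gives $\int_{H_{-1}} s^{2i+6}|\d^i\bb|^2 \lesssim \delta^5 a^2$, and on $\Hb_0$ the hypothesis $\RR_{k,2}|_{\ub=0}\lesssim 1$ translates directly into $\int_{\Hb_0} s^{2i+6}|\d^i\aa|^2\lesssim \delta^5 a^3$.

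By the remark after Lemma \ref{Lemma:Main-Bianchipairs}, the term $5\ze\hot\bb$ (a $\Gac_g\cdot\psi_1$ contribution) and the terms $\trch\,\aa$, $\atrch\,\aa$, $\om\,\aa$ (of form $(\Gac_b,\chih)\cdot\psi_2$) appearing in $F_2$ are absorbed automatically by the lemma. The two genuinely nonlinear pieces that I need to estimate by hand are (i) $2\aa\cdot\ze$ in $F_1$ and (ii) $-3(\rho\,\chibh - \rhod\,\dual\chibh)$ in $F_2$. Both are handled by Cauchy--Schwarz arguments patterned on the proofs of Propositions \ref{prop:estimate-Bianchi-pair2} and \ref{prop:estimate-Bianchi-pair3}. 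For (i), I will place $\ze$ in $L^\infty(S)$ with bound $\delta a^{1/2}|s|^{-2}$, redistribute the $s^{2i+6}$ weight so that $\aa$ and $\bb$ appear in their normalized $\hbub$-norms $\|s^3 s^{i_1}\d^{i_1}\aa\|_{L^2(\hbub)}$ and $\|s^2 s^i\d^i\bb\|_{L^2(\hbub)}$, extract a residual $|s|^{-1}\le (a\delta)^{-1}$, and apply Cauchy--Schwarz in $(s,\ub)$ to obtain $\lesssim \delta\,a^{-1/2}\|s^3 s^{i_1}\d^{i_1}\aa\|_{L^2(\hbub)}\|s^2 s^i\d^i\bb\|_{L^2(\hbub)}\lesssim C_b^2\,\delta^5 a^2$. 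For (ii), I will place $\rho$ in $L^\infty(S)$ with bound $\delta a|s|^{-3}$ and use the $L^2(S)$-bound $\|s^k\d^k\chibh\|_{L^2(S)}\lesssim \delta a^{1/2}|s|^{-1}$, either the improved one from Proposition \ref{prop:L2-estimate-wtb,chibh} or the bootstrap $\Gac_g$-bound; after extracting $\|s^3 s^i\d^i\aa\|_{L^2(\hbub)}$ by Cauchy--Schwarz in $(s,\ub)$, this yields $\lesssim C_b\,\delta^5 a^{5/2}$. Both bounds strictly improve over $\delta^5 a^3$ by powers of $a$.

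The main technical point is the weight bookkeeping in term (i). Because $\aa$ and $\bb$ are themselves large at their anomalous sizes $\delta^{5/2}a^{3/2}|s|^{-9/2}$ and $\delta^2 a^{3/2}|s|^{-4}$, the $s^{2i+6}$ factor must be apportioned precisely as $|s|^{-1}\cdot |s|^3\cdot |s|^2$ so that both factors appear in their natural $\hbub$-norms while exactly one $|s|^{-1}$ remains, which is then controlled by $(a\delta)^{-1}$. The $\delta$ from the $L^1_\ub$-integration combined with the $\delta a^{1/2}$ from the $\ze$-bound then produces the decisive $a^{-1}$ margin against the target $\delta^5 a^3$, so no absorption into the left-hand side is needed and the proof concludes.
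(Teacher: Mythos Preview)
Your proof is correct and follows essentially the same route as the paper: apply Lemma~\ref{Lemma:Main-Bianchipairs} to the pair $(\bb,\aa)$ with $k=2$, $\lambda=2$, discard the absorbable nonlinearities, and estimate the remaining borderline term by putting the Ricci factor in $L^\infty(S)$ and closing via Cauchy--Schwarz against the $\hbub$-fluxes. Two cosmetic differences from the paper: you treat $F_1=2\aa\cdot\ze$ explicitly (the paper omits it, since $\Gac_g\cdot\psi_2$ terms in $F_1$ are already handled in the proof of Lemma~\ref{Lemma:Main-Bianchipairs}, even though the remark after the lemma does not list this case), and for the $\rho\,\chibh$ term you place $\rho$ in $L^\infty$ and $\chibh$ in $L^2$, whereas the paper does the reverse---both yield the same $\o\RR^2\,\delta^5 a^{5/2}\ll\delta^5 a^3$ bound.
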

\begin{proof}
      Consider the equations
\beaa
\nab_3\bb  &=&-\DD_2\ab-2\trchb\,\bb+F _1 ,\\
     \nab_4\aa&=&2\DD_2^*\bb+ F_2.
\eeaa
with
\begin{equation*}
   F_1=2\aa\c \zeta,
\end{equation*}
\begin{equation*}
     F_2=-\frac 1 2 \trch\aa+\frac 12\atrch\dual\aa+4\om\aa
   +(\zeta-4\etab)\hot \bb - 3  (\rho\chibh -\rhod\dual\chibh).
\end{equation*}
This is of the type \eqref{eq:null-Bianchi-model-pair2}, with $k=2$, $\lambda=2$, $\psi_1=\bb$, $\psi_2=\aa$.

As above, we only need to deal with the term $\rho\chibh$ in $ F_2$ ($\rhod\dual\chibh$ is, of course, similar). We have
\begin{equation*}
    \begin{split}
        \Big|\iint_{\MM} s^{2i+6} \d^i(\chibh\cdot\rho)\cdot \d^i\ab\Big| &\lesssim \iint_{\MM} |s|^{6} \o\delta a^\frac 12 |s|^{-2} |s^{i_1}\d^{i_1}\rho| |s^{i}\d^{i}\ab|\\
        &\lesssim \o\delta a^\frac 12\cdot \delta \cdot \sup_{\ub}||ss^{i_1}\d^{i_1}\rho||_{L^2(\hbub)}||s^3s^{i}\d^{i}\ab||_{L^2(\hbub)}\\
        &\lesssim \o \delta^2 a^\frac 12 \RR \delta^\frac 12 a^\frac 12 \cdot \RR\delta^\frac 52 a^\frac 32 \\
        &\lesssim \o\RR^2 \delta^5 a^\frac 52\ll \delta^5 a^3.
    \end{split}
\end{equation*}
\end{proof}

\subsection{End  of the proof of  Part 1 of    Theorem \ref{theorem:main'} }

According  to Proposition \ref{Prop:summary-Ricci} we have
the following  estimates hold true, see  \eqref{eq:compundnorms} for the definition of the norms,
  \beaa
   \o_{\le N} + \RR_{\le N-1}  \les \RR_{\le N}.
  \eeaa
According to  the results of this  section, we also have
\beaa
 \RR_{\le N}\les 1.
\eeaa
Therefore, combining them together and using the non-integrable Sobolev estimate \eqref{eq:non-integrable-Sobolev-estimate}, we have improved all bootstrap assumptions, and as a result, the spacetime can be extended to $\tau^*=-\frac 18 a\delta$ such that  
the estimates $\OO_{\le N}+\RR_{\le N} \les 1$ remain valid.   This ends the proof of the first part of the  Main Theorem \ref{theorem:main'}.


\section{Formation of trapped surfaces I}


We    review   the   original proof of  Christodoulou \cite{Chr1} adapted to our foliation, which relies   on the first part of Theorem   \ref{theorem:main'},  the lower bound  condition 
 \eqref{eq:chih_0-lowerbound} and   a simple ODE argument which we  sketch below.  For simplicity of the presentation, as in \cite{Chr1},    we consider only  the Minkowskian incoming data.

 We first note that the sphere $S_{\delta,-\frac 14 a\delta}$ lies in our constructed region. 
We study the value of the null expansions at each point of this sphere.  Consider  the  vectorfield  $V$  with $V(s)=0$ and $V(\ub)=1$,
\beaa
V=-\big(\eg_4(s)\big)\eg_3+\eg_4. 
\eeaa
In the PT frame we have
\beaa
 V&=&-\big(\eg_4(s)\big) e_3+ e_4+f^a e_a+\frac 14|f|^2 e_3=-\big(e_4(s)+\frac 14|f|^2\big)e_3+e_4+f^a e_a+\frac 14|f|^2 e_3\\
 &=&-\big(e_4(s)\big)e_3+e_4+f^a e_a.
\eeaa
Since, according to  \eqref{eq:eS_4(s)}, $|\eg_4(s)|\lesssim 1$,   using the estimate    $|f|\lesssim \delta a^\frac 12 |s|^{-1}$,  we  infer that  $|e_4(s)|\lesssim 1$.
Therefore, since along the trajectories of $V$ we have $|\trch|,|\atrch|,|\om|\lesssim |s|^{-1}$, and $|\div\xi|\lesssim a^\frac 12 |s|^{-1}$, we  infer that
\begin{equation}\label{eq:transport-v}
\begin{split}
    V(\trch)&=-(e_4(s))\nab_3(\trch)+\nab_4\trch+f^a\nab_a\trch\\
    &=O(|s|^{-2})-|\chih|^2-\frac 12 (\trch)^2+\frac 12 (\atrch)^2-2\om \trch+2\div\xi+2\xi\cdot\zeta+f\cdot \nab\trch\\
    &=O(a^\frac 12 |s|^{-2})-|\chih|^2.
    \end{split}
\end{equation}
where we use the null structure equation $\nab_4\trch=-\frac 12 (\trch)^2-|\chih|^2+\frac 12 (\atrch)^2-2\om\trch+2\div\xi+2\xi\cdot\zeta$.
Note that, by revisiting the estimate in the proof of Proposition \ref{Prop:estimate-of-chih}, one has
\begin{equation*}
    \chih=\chih_0 |s|^{-1}+O(|s|^{-1})
\end{equation*}
where $\chih_0=\chih|_{H_{-1}}$ satisfies $\inf_\theta \int_0^\delta |\chibh_0(\ub,\th)|^2 d\ub\geq \delta a$. In the case of flat incoming data,  we have $\trch=2/|s|$ on $\Hb_0$. Therefore, integrating the equation \eqref{eq:transport-v} along the flow line from some point on $\Hb_0$  to a given point on $S_{\delta,-\frac 14 a\delta}$, we get
\begin{equation*}
\begin{split}
    \trch&=\frac 2{|s|}-\int_0^\delta |\chih|^2 d\ub+O(\delta a^\frac 12 |s|^{-2})\leq \frac 2{|s|}-|s|^{-2}\delta a+O(\delta a^\frac 12 |s|^{-2})\\
    &=|s|^{-2}\big(2|s|-\delta a+O(\delta a^\frac 12)\big),
    \end{split}
\end{equation*}
so taking $s=-\frac 14 \delta a$, we see that $\trch<-\frac 1{64}(\delta a)^{-1}$ at each point on $S_{\delta,-\frac 14 a\delta}$.
Note that this  holds true in the PT frame. Using the transformation formula for $\trch$   in Lemma  \ref{Proposition:transformationRicci}, we deduce,  since  $|f|,|s\d f|,|s\zeta|\lesssim a^{-\frac 12}$,
\begin{equation*}
    \trchg=\trch+O(f\cdot \zeta+|f|^2 \trchb+\d f)=\trch+O\big(a^{-\frac 12}\frac{1}{|s|}\big)\leq -\frac 1{64}(\delta a)^{-1}+O(a^{-\frac 12})(a\delta)^{-1}<0.
\end{equation*}
This means that the outgoing null expansion in the (integrable) PG  frame is negative. The ingoing null expansion is also negative in view of the ingoing Raychauhuri equation (i.e., the equation of $\nab_3\trch$). Therefore, we see that $S_{\delta,-\frac 14 a\delta}$ is a trapped surface. 
\begin{remark}
Note that by the Raychaudhuri equation $\nab_4\trch=-\frac 12(\trch)^2-|\chibh|^2$ on $H_{-1}$ and the bounds $|\trch|\lesssim 1$, $|\chih|\lesssim a^\frac 12$ there, we have $\trch=1+O(\delta a)>0$ on $H_{-1}$, so that we can rule out   trapped surfaces  on the initial outgoing null hypersurface.
\end{remark}

\def\const{\mathrm{const}}

\section{Formation of Trapped Surfaces II}

We show here how to derive more precise  estimates for   our main  non-trivial, large,    quantities    $ \chih, \trch, \rho$. As a result, one can also prove the formation of trapped surfaces (the second part of Theorem \ref{theorem:main'}) without using $e_4$ transport equation of Ricci coefficients beyond $H_{-1}$.
 To illustrate the idea it is  easier to  work in  the integrable  PG  frame in this section, without  reference to the PT frame.
 For simplicity,  since there is no  possible confusion, we will  thus  suppress the prefix $^{(g)}$.  We also again assume the data on $\Hb_0$ is the Minkowski data.

For the initial data on $H_{-1}$ and the geodesic coordinate $(\ub,\theta)$, we write
\begin{equation*}
    \chih_{ab}|_{H_{-1}}=a^\frac 12 I(\ub,\theta)_{ab}.
\end{equation*}
So roughly one can think of $|I(\ub,\theta)|\sim 1$.
By our assumption \eqref{eq:chih_0-lowerbound} we have
\begin{equation}\label{eq:I-lower-bound}
    \inf_\theta \int_0^{\delta} |I(\ub,\theta)|^2 d\ub\geq \delta.
\end{equation}
We will define the derivative and integral of $I$ in $\ub$ later in this section. Denote them by $\dot I$ and $\int I$. 
We will prove the following estimates in $\MM$:
 \begin{equation*}
     \chih=a^\frac 12 I(\ub,\theta)|s|^{-1}+O(|s|^{-1}),\quad  \a=a^\frac 12 \dot I(\ub,\theta) |s|^{-1}+O(\delta^{-1}|s|^{-1}),
 \end{equation*}
 \begin{equation*}
     \chibh=a^\frac 12 \left(\int_{0}^{\ub} I(\ub',\theta)\, d\ub'\right) |s|^{-2}+O(\delta |s|^{-2}),
 \end{equation*}
 \begin{equation*}
     \trch=2|s|^{-1}-a\left(\int_0^{\ub} |I(\ub',\theta)|^2\, d\ub'\right)|s|^{-2}+O(\delta |s|^{-1}+\delta a^\frac 12 |s|^{-2}),
 \end{equation*}
 \begin{equation*}
     \rho=\frac 12 a \left(\int_0^{\ub} \dot I(\tilde\ub,\theta)_{ab} \left(\int_0^{\tilde\ub} I(\ub',\theta)_{ab}\, d\ub'\right) d\tilde\ub\right) |s|^{-3}+O(\delta a^\frac 12 |s|^{-3}).
 \end{equation*}
The size of the remainders are smaller than the original bounds of the corresponding quantities at least by a factor of $a^\frac 12$. In particular, the expansion of $\trch$ here will immediately imply the formation of trapped surfaces: The estimate gives
\begin{equation*}
    \trch \leq (2+\delta)|s|^{-1}-a \left(\int_0^{\ub} |I(\ub',\theta)|^2 \, d\ub'\right) |s|^{-2}.
\end{equation*}
So by the lower bound assumption, we have
\begin{equation*}
    \trch\leq (2+\delta)|s|^{-1}-a\delta |s|^{-2}=|s|^{-1}(2+\delta-a\delta |s|^{-1}),
\end{equation*}
so taking $s=-\frac 14 a\delta$ we get trapped surfaces. 
\begin{remark}\label{rmk:aniso-Ch8}
    This precise upper bound of $\trch$, which only depends on the initial data, is in fact everything needed to prove the formation of trapped surfaces in anisotropic situations, i.e. with the inf in \eqref{eq:I-lower-bound} replaced by sup, in view of the argument in \cite{KLR} (see also \cite{AnHan}).
\end{remark}

\subsection{Initial data on \texorpdfstring{$H_{-1}$}{H(-1)}}
On $H_{-1}$, recall that we chose $e_4$ to be the geodesic vector field (equal to the given choice on $S_{0,-1}$). This determines an affine parameter on $s$ on $H_{-1}$ satisfying $e_4(\ub)=1$. Let $\{e_a,e_b\}$ be an orthonormal basis of the spheres given by $\ub=\const$. We write
\beaa
    \chih\big|_{H_{-1}}(\ub,\theta)=     a^{1/2}     I(\ub,\theta)\,
\eeaa 
where $I \in\sk_2 $ is an horizontal,  symmetric, traceless $2$-tensor.

The indices $a,b$ are of course covariant under the orthogonal transformation of $\{e_a,e_b\}$. In order to integrate   transport equations of tensors, we choose the Fermi frame {on $H_{-1}$}  by requiring  
\begin{equation*}
    D_4 e_a=-\zeta_a e_4,
\end{equation*}
{and $e_a$ is  given as an arbitrary   horizontal    frame  on $\Hb_0\cup H_{-1}$.  }
{In particular, the frame satisfies $\nab_4 e_a=0$ on $H_{-1}$ and as a result, for any horizontal covariant tensor $\psi_{a_1\cdots a_k}$, we have $\nab_4 \psi_{a_1\cdots a_k}=e_4(\psi_{a_1\cdots a_k})$ under this frame. This allows us to define the integration and differentiation of $\psi$ through the scalar fields obtained by the components of $\psi$ under this frame.}

One can verify that this frame is indeed tangent to $\{\ub=\const\}$ {on $H_{-1}$: Since $e_4(\ub)=1$, we have $e_4(e_a(\ub))=[e_4,e_a](\ub)=(\etab+\zeta)_a e_4(\ub)-\chi_{ab} e_b(\ub)=\chi_{ab} e_b(\ub)$ using that $\etab+\zeta=0$, which comes from the fact that $e_4$ is geodesic. Hence $e_a(\ub)=0$ on $H_{-1}$. It is also straightforward to verify that $\{e_a\}$ remains an orthonormal frame.}

Recall that we have the bounds on $H_{-1}$ from Proposition \ref{Prop:InitailData-H_{-1}}
\begin{equation*}
    |\chih|\lesssim 1,\quad |\trchb+2|,|\chibh|, |\nab^{\leq 1}\eta|\lesssim \delta a^\frac 12, \quad |\trch|\lesssim 1,\quad |\b|\lesssim a^\frac 12.
\end{equation*}
We now derive more precise estimates for various quantities on $H_{-1}$.
Integrating the equation
\begin{equation*}
    \nab_4 \chibh_{ab}=-\frac 12 \trchb \chih_{ab}-\frac 12 \trch\chibh_{ab}+(\nab\hot\etab)_{ab}+(\etab\hot\etab)_{ab}
\end{equation*}
in the Fermi frame, we obtain,
\begin{equation*}
    \chibh_{ab}=\int_0^{\ub} a^\frac 12 I(\ub',\theta)_{ab}\, d\ub'+O(\delta^2 a^\frac 12)
\end{equation*}
where the integral is defined componentwise under the Fermi frame.

For $\trch$ we have, on $H_{-1}$,
\begin{equation*}
    \nab_4\trch =-|\chih|^2-\frac 12 (\trch)^2.
\end{equation*}
Since $\trch=2$ on $H_{-1}\cap \Hb_0$, combining it with the weak bound $|\trch|\lesssim 1$ we have
\begin{equation*}
    \trch= 2-a\int_0^{\ub} |I(\ub',\theta)|^2\, d\ub'+O(\delta).
\end{equation*}
For $\a$, since
\begin{equation*}
    \nab_4 \chih=-\trch\chih-\a,
\end{equation*}
we have
\begin{equation*}
    \a_{ab}=-e_4 (\chih_{ab})-\trch\chih_{ab}=-\dot I(\ub,\theta)_{ab} a^\frac 12+O(a^\frac 12),
\end{equation*}
where $\dot I(\ub,\theta)_{ab}:=\nab_4 I(\ub,\theta)_{ab}$. It is also equal to the $\ub$-derivative of $I(\ub,\theta)_{ab}$ as scalars under the Fermi frame. 

Then we turn to the equation of $\rho$
\begin{equation*}
    \nab_4\rho=\div\b-\frac 32 \trch\rho+2(\etab+\zeta)\cdot\b-\frac 12 \chibh\cdot\a.
\end{equation*}
by the rough bounds we see that the main contribution comes from the last term. This gives, on $H_{-1}$,
\begin{equation*}
    \rho=\frac 12 a \left(\int_0^{\ub} \dot I(\tilde\ub,\theta)_{ab} \left(\int_0^{\tilde\ub} I(\ub',\theta)_{ab}\, d\ub'\right) d\tilde\ub\right)+O(\delta a^\frac 12).
\end{equation*}
This completes the setup on $H_{-1}$. 

\subsection{Estimate along the \texorpdfstring{$e_3$}{e3} directions}
In the propagation along $e_3$ direction, we also choose the Fermi frame, i.e., 
\begin{equation*}
    D_3 e_a=-\zeta_a e_3
\end{equation*}
with $e_a$ coinciding the one we chose on $H_{-1}$. Since $s=-1$ and hence $e_a(s)=0$ on $H_{-1}$, such a frame will be tangent to the $s$-sections on $\hbub$, so in other words, tangent to $S_{\ub,s}$. Therefore it corresponds to the integrable geodesic frame.
Then similarly, the projected differentiation $\nab_3$ is the same as the $e_3$ derivative of the components under this frame.

To propagate into the interior, we design the linearization based on the estimates we have on $H_{-1}$, and the $|s|$-weight in each $\nab_3$ transport equation. This gives the following ans\"{a}tze:
\def\chihc{\widecheck{\chih}}
\def\chibhc{\widecheck{\chibh}}
\def\trchc{\widecheck{\trch}}
\def\rhoc{\widecheck{\rho}}
\def\ac{\widecheck{\a}}
\def\trchbc{\wtb}
\begin{equation*}
    \chihc_{ab}:=\chih_{ab}-a^\frac 12 I(\ub,\theta)_{ab} |s|^{-1},\quad \chibhc_{ab}:=\chibh_{ab}-a^\frac 12 \left(\int_0^{\ub} I(\ub',\theta)_{ab}\, d\ub'\right)|s|^{-2},
\end{equation*}
\begin{equation*}
    \trchbc:=\trchb+2|s|^{-1},\quad \trchc:=\trch-\left(2|s|^{-1}-a\Big(\int_0^{\ub} |I(\ub',\theta)|^2\, d\ub'\Big)|s|^{-2}\right),
\end{equation*}
\begin{equation*}
    \ac_{ab}:=\a_{ab}+ a^\frac 12 \dot I(\ub,\theta)_{ab} |s|^{-1}, \quad \rhoc=\rho-\frac 12 a \left(\int_0^{\ub} \dot I(\tilde\ub,\theta)_{ab} \Big(\int_0^{\tilde\ub} I(\ub',\theta)_{ab}\, d\ub'\Big) d\tilde\ub\right) |s|^{-3}.
\end{equation*}
\begin{lemma}
    We have
    \begin{equation*}
        \int_0^{\ub} \dot I(\tilde\ub,\theta)_{ab} \left(\int_0^{\tilde\ub} I(\ub',\theta)_{ab}\, d\ub'\right) d\tilde\ub=I(\ub,\theta)_{ab}\int_0^{\ub} I(\ub,\theta)_{ab}\, d\ub'-\int_0^{\ub} |I(\ub',\theta)|^2 d\ub'.
    \end{equation*}
\end{lemma}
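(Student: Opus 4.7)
The plan is to recognize that this identity is simply integration by parts in the variable $\tilde\ub$, exploiting two facts: (i) under the Fermi frame chosen on $H_{-1}$, the operator $\nab_4$ coincides with the $\ub$-derivative applied componentwise (this is established just above the lemma in the discussion of $\dot I$), so $\dot I(\tilde\ub,\theta)_{ab} = \partial_{\tilde\ub} I(\tilde\ub,\theta)_{ab}$; and (ii) the inner integral $\int_0^{\tilde\ub} I(\ub',\theta)_{ab}\, d\ub'$ vanishes at $\tilde\ub = 0$.

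Concretely, I would introduce the auxiliary quantity
\[
J(\tilde\ub,\theta)_{ab} := \int_0^{\tilde\ub} I(\ub',\theta)_{ab}\, d\ub',
\]
so that $\partial_{\tilde\ub} J(\tilde\ub,\theta)_{ab} = I(\tilde\ub,\theta)_{ab}$ and $J(0,\theta)_{ab} = 0$. The left hand side of the claimed identity, with the $a,b$ indices summed, then reads
\[
\int_0^{\ub} \big(\partial_{\tilde\ub} I(\tilde\ub,\theta)_{ab}\big)\, J(\tilde\ub,\theta)_{ab}\, d\tilde\ub.
\]

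Applying the fundamental theorem of calculus (integration by parts) in $\tilde\ub$ yields
\[
\big[\, I(\tilde\ub,\theta)_{ab}\, J(\tilde\ub,\theta)_{ab}\,\big]_{\tilde\ub=0}^{\tilde\ub=\ub} - \int_0^{\ub} I(\tilde\ub,\theta)_{ab}\, \partial_{\tilde\ub} J(\tilde\ub,\theta)_{ab}\, d\tilde\ub.
\]
The boundary term at $\tilde\ub = 0$ vanishes because $J(0,\theta)_{ab} = 0$, the boundary term at $\tilde\ub = \ub$ gives exactly $I(\ub,\theta)_{ab}\int_0^{\ub} I(\ub',\theta)_{ab}\, d\ub'$, and the remaining integral equals $\int_0^{\ub} I(\tilde\ub,\theta)_{ab}\, I(\tilde\ub,\theta)_{ab}\, d\tilde\ub = \int_0^{\ub} |I(\ub',\theta)|^2\, d\ub'$ after renaming the dummy variable. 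Combining these gives the stated identity.

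There is no real obstacle here; the only subtlety worth flagging is the justification that $\dot I$ can be treated as an ordinary $\ub$-derivative on scalar components, which is why the Fermi frame was adopted in the preceding subsection (so that $\nab_4$ annihilates $e_a$ and hence acts componentwise). Once that point is invoked, the identity is a one-line integration by parts.
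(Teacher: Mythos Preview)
Your proof is correct. The paper's argument is the same idea in a mildly different packaging: rather than integrating by parts, it observes that both sides vanish at $\ub=0$ and then checks that their $\ub$-derivatives coincide (which is immediate from the product rule applied to the right hand side). Your integration-by-parts computation is precisely the integrated form of that differentiation, so the two arguments are equivalent.
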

\begin{proof}
    They are both zero when $\ub=0$, so it suffices to show that their derivatives in $\ub$ are the same, which is obvious.
\end{proof}
We also remark that this expression, while defined under a special choice of the horizontal frame, does not depend on this choice: The tensor $\int_0^{\ub} I(\ub',\th)\, d\ub'$ on $H_{-1}$ is the unique tensor $K_{ab}$ satisfying $\nab_4 K_{ab}=I_{ab}$, $K_{ab}|_{\ub=0}=0$. The expression $I\cdot K-\int_0^{\ub}|I|^2 d\ub'$ is then clearly a scalar field independent of the choice of $\{e_a\}$.

\def\trchl{\trchc}
\def\trchbl{\trchbc}

We now derive the $e_3$-transport equations of these ``linearized" quantities. First note that, by the estimates above, we have on $H_{-1}$
\begin{equation*}
    \chihc=0,\quad \chibhc=O(\delta^2 a^\frac 12)\ll \delta,\quad \trchl=O(\delta),\quad \ac=O(a^\frac 12)\ll \delta^{-1},\quad \rhoc=O(\delta a^\frac 12).
\end{equation*}
\begin{proposition}
    We have
    \begin{equation*}
        \begin{split}
            \nab_3 \trchbl+\trchb\trchbl&=\frac 12(\trchbl)^2-|\chibh|^2,\\
            \nab_3 \chihc+\frac 12 \trchb\, \chihc&=-\frac 12 a^\frac 12 I |s|^{-1} \trchbl-\frac 12 \trch\chibh+\nab\hot\eta+\eta\hot\eta\\
            \nab_3\chibhc+\trchb\, \chibhc&=-a^\frac 12 \Big(\int I\Big)|s|^{-2}\trchbl-\ab,\\
            \nab_3\trchl+\frac 12\trchb\trchl&=-a^\frac 12 \Big(\int I\Big)|s|^{-2}\cdot \chihc-a^\frac 12 I |s|^{-1} \cdot\chibhc-\chihc\cdot\chibhc+2\div\eta+2|\eta|^2\\
    &\quad -\frac 12\left(2|s|^{-1}-a\Big(\int |I|^2\Big)|s|^{-2}\right) \trchbl+2\rhoc ,\\
            \nab_3 \ac+\frac 12 \trchb\ac&=\nab\hot\b+\frac 12 \trchbc \, a^\frac 12 \dot I |s|^{-1}+\zeta\hot\b-3(\rho\chih+\dual\rho\dual\chih),\\
            \nab_3\rhoc-3|s|^{-1} \rhoc&=-\div\bb
            -\frac 32 \trchbc \rho+\zeta\cdot\bb-\frac 12 \chih\cdot\ab.
        \end{split}
    \end{equation*}
\end{proposition}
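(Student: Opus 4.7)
The plan is a direct, identity-by-identity computation, using three ingredients.

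First, the $\nab_3$ null-structure equations in the (integrable) PG frame. These differ from those of Proposition \ref{Prop:transport.e_3} only by the inclusion of $\eta$-terms, since in the PG frame $\eta = \zeta = -\etab$ (while $\omb = \xib = 0$ and $\atrch = \atrchb = 0$). Specializing the general non-integrable formulas I will record
\begin{equation*}
\begin{split}
\nab_3 \trchb &= -|\chibh|^2 - \tfrac12 (\trchb)^2,\\
\nab_3 \chibh + \trchb\, \chibh &= -\ab,\\
\nab_3 \chih + \tfrac12 \trchb\, \chih &= -\tfrac12 \trch\, \chibh + \nab\hot\eta + \eta\hot\eta,\\
\nab_3 \trch + \tfrac12 \trchb\, \trch &= -\chih\cdot\chibh + 2\rho + 2\div\eta + 2|\eta|^2,\\
\nab_3 \a + \tfrac12 \trchb\, \a &= \nab\hot\b + \zeta\hot\b - 3(\rho\chih + \dual\rho\, \dual\chih),\\
\nab_3 \rho + \tfrac32 \trchb\, \rho &= -\div\bb + \zeta\cdot\bb - \tfrac12 \chih\cdot\ab.
\end{split}
\end{equation*}
Second, along $e_3$ I will work in the Fermi frame $D_3 e_a = -\zeta_a e_3$, which enforces $\nab_3 e_a = 0$; hence $\nab_3$ acts on horizontal tensors by differentiating their scalar components in this frame. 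Since $e_3(\ub) = 0$ ($\ub$ being the optical function generating $e_3$) and $e_3(\theta^a)=0$ by construction, every tensor whose Fermi-frame components depend only on $(\ub,\theta)$ is $\nab_3$-parallel; in particular $\nab_3 I$, $\nab_3\int_0^\ub I\, d\ub'$, $\nab_3\int_0^\ub |I|^2 d\ub'$, and the $\nab_3$-derivative of the iterated integral appearing in the $\rhoc$ profile all vanish. Third, $e_3(|s|^{-k}) = k|s|^{-k-1}$, since $s<0$ and $e_3(s)=1$.

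With these tools, each identity follows by applying $\nab_3$ to $\widecheck\psi = \psi - L$, substituting the corresponding structure equation, and then expanding $\trchb = \trchbl - 2|s|^{-1}$, $\chih = \chihc + a^{1/2}I|s|^{-1}$, $\chibh = \chibhc + a^{1/2}(\int_0^\ub I)|s|^{-2}$, $\trch = \trchl + 2|s|^{-1} - a(\int_0^\ub |I|^2)|s|^{-2}$, and $\rho = \rhoc + \tfrac12 a J|s|^{-3}$, where $J$ denotes the iterated integral. The $\trchbl$ identity is a four-line computation based on $(\trchb)^2 = (\trchbl - 2|s|^{-1})^2$. The $\chihc$, $\chibhc$, and $\ac$ identities each produce, after cancellation of the Minkowski-rate contributions, a single cross term of the form $-\tfrac12 L\, \trchbl$ (or its tensorial analogue). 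The $\rhoc$ identity follows from the decomposition $-\tfrac32 \trchb\, \rho = -\tfrac32 \trchbc\, \rho + 3|s|^{-1}\rho$: the piece $3|s|^{-1}\cdot \tfrac12 aJ|s|^{-3} = \tfrac32 aJ|s|^{-4}$ cancels the $\tfrac32 aJ|s|^{-4}$ produced by $\nab_3$ of the $\rhoc$-linearization, while $3|s|^{-1}\rhoc$ passes to the left-hand side.

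The only step that requires genuine algebra is the $\trchl$ identity. Expanding
\begin{equation*}
\chih\cdot\chibh = \bigl(\chihc + a^{1/2}I|s|^{-1}\bigr)\cdot\bigl(\chibhc + a^{1/2}({\textstyle\int_0^\ub I\,d\ub'})|s|^{-2}\bigr)
\end{equation*}
produces the four expected bilinear pieces together with a Minkowski-rate residue $a\bigl(I\cdot \int_0^\ub I\bigr)|s|^{-3}$ that does not individually cancel against the $a(\int_0^\ub |I|^2)|s|^{-3}$ coming from $\nab_3$ of the $\trchl$-linearization. The remedy is precisely the preceding lemma, which identifies the iterated integral $J$ with $I\cdot \int_0^\ub I - \int_0^\ub |I|^2$; substituting $2\rho = 2\rhoc + aJ|s|^{-3}$ then exactly cancels the residue. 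Splitting $\tfrac12 \trchb\, \trchl = \tfrac12 \trchbl\, \trchl - |s|^{-1}\trchl$ finally absorbs the leftover $|s|^{-1}\trchl$ contribution and produces the stated right-hand side. This algebraic compatibility is what forces the specific iterated-integral form of the $\rhoc$ profile in the first place, and it is the main (though still elementary) content of the proof.
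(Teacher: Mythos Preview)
Your proposal is correct and follows essentially the same approach as the paper: both proceed by direct computation in the Fermi frame, substitute the linearizations into the PG-frame $\nab_3$ structure/Bianchi equations (with the $\eta$-terms restored), and for $\trchl$ invoke the identity $\int_0^{\ub}\dot I\cdot(\int I)=I\cdot\int I-\int|I|^2$ to effect the key cancellation. The paper's proof only writes out the $\trchl$ case in detail and otherwise simply says ``direct calculation,'' so your write-up is in fact more explicit than the original.
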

\begin{proof}
    Direct calculation of the components in the Fermi frame. Note the presence of $\eta$ terms in the integrable frame (they are nevertheless lower order terms). We only present the longest calculation for $\trchl$. We have (for simplicity, denote $\int=\int_{0}^{\ub}$)
\begin{equation*}
\begin{split}
    &\nab_3 \trchl=\nab_3 \trch-2|s|^{-2}+2a\Big(\int |I|^2 \Big)|s|^{-3}\\
    &=-\chibh\c\chih -\frac 1 2 \trchb\trch+2\rho+2\div\eta+2|\eta|^2-2|s|^{-2}+2a\Big(\int |I|^2 \Big)|s|^{-3}\\
    &=-a I\cdot \Big(\int I\Big)|s|^{-3}-a^\frac 12 \Big(\int I\Big)|s|^{-2}\cdot \chihc-a^\frac 12 I |s|^{-1} \cdot\chibhc-\chihc\cdot\chibhc\\
    &\quad +2|s|^{-2}-a\Big(\int |I|^2\Big)|s|^{-3}-\frac 12\left(2|s|^{-1}-a\Big(\int |I|^2\Big)|s|^{-2}\right) \trchbl+|s|^{-1}\trchl\\
    &\quad -\frac 12 \trchbl\trchl+a\left(\int \dot I \cdot \Big(\int I\Big)\right)|s|^{-3}+2\rhoc-2|s|^{-2}+2a \Big(\int |I|^2\Big)|s|^{-3}+2\div\eta+2|\eta|^2\\
    &=-a^\frac 12 \Big(\int I\Big)|s|^{-2}\cdot \chihc-a^\frac 12 I |s|^{-1} \cdot\chibhc-\chihc\cdot\chibhc+2\div\eta+2|\eta|^2\\
    &\quad -\frac 12\left(2|s|^{-1}-a\Big(\int |I|^2\Big)|s|^{-2}\right) \trchbl+|s|^{-1}\trchl-\frac 12 \trchbl\trchl+2\rhoc,
\end{split}   
\end{equation*}
where we used $\int \dot I \cdot (\int I)=I\int I-\int |I|^2$ shown above.
\end{proof}
Recall that the estimate established in previous sections easily imply the following bounds in the integrable geodesic frame:
\begin{equation*}
    |(s\nab)^{\leq 1}\eta|+|\zeta|+|\trch-\frac{2}{|s|}|+|\chih|\lesssim \delta a^\frac 12 |s|^{-2}, \quad a^{-\frac 12}|\chih|+|\trch|\lesssim |s|^{-1},
\end{equation*}
\begin{equation*}
    |\a|\lesssim \delta^{-1} a^\frac 12 |s|^{-1},\quad |\b|\lesssim a^\frac 12 |s|^{-2},\quad |\rho|\lesssim \delta a|s|^{-3},\quad |\bb|\lesssim \delta^2 a^\frac 32 |s|^{-4},\quad |\ab|\lesssim \delta^\frac 52 a^\frac 32 |s|^{-\frac 92}.
\end{equation*}
This implies, using that $|I|\sim 1$, and $|s|\gtrsim a\delta$,
    \begin{equation*}
        \begin{split}
        \nab_3 \trchbl-\frac{2}{|s|}\trchbl&=O(\delta^2 a |s|^{-4}),\\
            \nab_3 \chihc-|s|^{-1} \chihc&=O(\delta a |s|^{-3})\\
            \nab_3\chibhc-\frac{2}{|s|}\chibhc&=O(\delta^2 a |s|^{-4}),\\
            \nab_3 \ac-|s|^{-1}\ac&=O(\delta a^\frac 32 |s|^{-4}),\\
            \nab_3\rhoc-3|s|^{-1} \rhoc&=
            O(\delta^2 a^\frac 32 |s|^{-5}).
        \end{split}
    \end{equation*}
This immediately implies
\begin{equation*}
    |\trchbl|\lesssim \delta |s|^{-2},\quad |\chihc|\lesssim \frac{1}{|s|},\quad |\chibhc|\lesssim \delta |s|^{-1},\quad |\ac|\lesssim \delta^{-1} |s|^{-1},\quad |\rhoc|\lesssim \delta a^\frac 12 |s|^{-3}.
\end{equation*}
Finally, we derive the estimate of $\trchl$. The estimates above imply
\begin{equation*}
    \nab_3\trchl-|s|^{-1}\trchl=O(\delta a^\frac 12 |s|^{-3}),
\end{equation*}
so we obtain
\begin{equation*}
    |\trchl|\lesssim \delta |s|^{-1}+\delta a^\frac 12 |s|^{-2}.
\end{equation*}
We see that all linearized quantities behave better than the original ones at least by an $a^\frac 12$ factor.

\appendix
\section{Proof of the Sobolev embeddings}
The appendix is the analogue of the proof in \cite{Chr1} with the $e_4$ direction replaced by the $e_3$ direction. Recall the Sobolev estimate on a $2$-Riemannian manifold $S$ for any tensor $\psi$ (See Lemmas 5.1 and 5.2 in \cite{Chr1})
\begin{equation*}
    ||\psi||_{L^\infty(S)}\leq C \max\{I(S),1\} \big(A(S)^\frac 12||(\nab^{S})^2\psi||_{L^2(S)}+||\nab^S \psi||_{L^2(S)}+A(S)^{-\frac 12}||\psi||_{L^2(S)}\big),
\end{equation*}
where $A(\cdot)$ denotes the area, and $I(S)$ is the isoperimetric constant
\begin{equation*}
    I(S):=\sup_{\substack{U\subset S\\ \pa U\in C^1}}\frac{\min\{A(U), A(U^c)\}}{\mathrm{Perimeter}(\pa U)}.
\end{equation*}
To prove the Sobolev embedding, one needs a bound on the isoperimetric constant $I(S)$. Recall that the flow of $\pa_s=e_3$ induces a diffeomorphism between $S_{\ub,s}$ and $S_{\ub,-1}$, and the pullback metric satisfies $\pa_s \gamma=2\chib$. Then for
\begin{equation*}
    \kappa(s)=\frac{d\vol(s)}{d\vol(-1)}
\end{equation*}
one has $\pa_s \log\kappa=\trchb=-\frac{2}{|s|}+\wtb$, which, by the bootstrap bounds, gives
\begin{equation*}
  |\pa_s\log \kappa(s)+\frac{2}{|s|}|\leq ||\wtb||_{L^\infty(S_{\ub,s})}\leq \mathcal O\delta a^\frac 12|s|^{-2},
\end{equation*}
so
\begin{equation*}
    |\log (\frac{\kappa(s)}{|s|^{2}})|=\mathcal O\delta a^\frac 12 |s|^{-1}\ll 1,
\end{equation*}
and hence we get
\begin{equation*}
    C_1 |s|^2\leq \kappa(s)\leq C_2 |s|^2
\end{equation*}
for each point on the sphere $S_{\ub,-1}$.
This gives a control of the area of a region by its pre-image on $S_{\ub,-1}$.

To control the arc length,
consider the matrix of $\gamma(s)$ ($=\slashed g|_{S_{\ub,s}}$) with respect to $\gamma(-1)$. Denote $\Lambda(s)$ and $\lambda(s)$ as the larger and smaller eigenvalue of the matrix. We want a control of $\Lambda(s)$ and $\lambda(s)$. We have studied the quantity
\begin{equation*}
    \kappa(s):=\frac{d\vol(s)}{d\vol(-1)}=\sqrt{\Lambda(s) \lambda(s)}
\end{equation*}
since the determinant is the product of eigenvalues.
Now we define $\hat\gamma(s)=(\kappa(s))^{-1}\gamma(s)$, which is expected to have an almost constant scaling, and we can verify that
\begin{equation*}
    \pa_s\hat\gamma(s)=\frac{\pa_s\gamma(s)}{\kappa(s)}-\frac{\gamma(s)}{(\kappa(s))^2}\pa_s\kappa(s)=\frac{1}{\kappa(s)}2\chib-\frac{\gamma(s)}{\kappa(s)}\trchb=\frac{2}{\kappa(s)}\chibh.
\end{equation*}
Now define
\begin{equation*}
    \nu(s):=\sup_{|X|_{\gamma(-1)}=1} \hat\gamma(s)(X,X).
\end{equation*}
Now for $|X|_{\gamma(-1)}=1$, we have
\begin{equation*}
    \hat\gamma(s)(X,X)=\hat\gamma(-1)(X,X)+2\int_{-1}^s \frac{\chibh(s')(X,X)}{\kappa(s')} ds'.
\end{equation*}
Since $\chibh(s')$ is a $2$-covector, we have
\begin{equation*}
    \begin{split}
        |\chibh(s')|^2_{\gamma(-1)}&=\chibh(s')_{ab}\chibh(s')_{cd}\gamma(-1)^{ac}\gamma(-1)^{bd}\leq \Lambda(s')^2 \chibh(s')_{ab}\chibh(s')_{cd}\gamma(s')^{ac}\gamma(s')^{bd}\\
        &=\Lambda(s')^2|\chibh(s')|_{\gamma(s')},
    \end{split}
\end{equation*}
so
\begin{equation*}
    \hat\gamma(s)(X,X)=\hat\gamma(-1)(X,X)+2\int_{-1}^s \frac{\Lambda(s')|\chibh(s')|}{\kappa(s')} ds'.
\end{equation*}
By definition we get $\nu(s')=\Lambda(s')/\kappa(s')$, so we get
\begin{equation*}
    \nu(s)\leq 1+2\int_{-1}^s \nu(s') |\chibh(s')| ds'.
\end{equation*}
By Gr\"{o}nwall's inequality, we obtain
\begin{equation*}
    \nu(s')\lesssim \exp\Big(\int_{-1}^s |\chibh(s')|ds'\Big)\lesssim 1.
\end{equation*}
Therefore, $\Lambda(s)/\kappa(s)\lesssim 1$ (and hence $\Lambda(s)/\lambda(s)\lesssim 1$). This means that $\lambda(s)\geq C|s|$, so $I(S_{\ub,s})\lesssim I(S_{\ub,-1})$. The comparison of $S_{\ub,-1}$ and $S_{0,-1}$ is similar (using the flow of $e_4$ on $H_{-1}$), and has been established in \cite{Chr1}. Hence we obtain the estimate
\begin{equation*}
    ||\psi||_{L^\infty(S_{\ub,s})}\lesssim ||s^{i-1}(\nab^S)^i\psi||_{L^2(S_{\ub,s})}.
\end{equation*}

\medskip

\vspace{2ex}

\noindent{\bf Acknowledgements.} Both authors would like to thank IHES for their hospitality during visits. 
X.C. thanks his advisor, Hans Lindblad, for supporting his many visits to Columbia University 
through the Simons Collaboration Grant 638955. Klainerman  was supported  by the NSF grant 2201031.

\end{document}